\documentclass[letterpaper,twocolumn,10pt]{article}
\usepackage{usenix}

\usepackage{amsmath,amssymb,amsthm,mathtools}
\usepackage{graphicx,booktabs}
\usepackage{tikz}
\usetikzlibrary{arrows.meta,positioning,calc,decorations.pathreplacing}
\usepackage{enumitem}
\usepackage{url}

\newif\ifarxiv
\arxivtrue
\setlist[itemize]{noitemsep}

\allowdisplaybreaks
\newtheorem{theorem}{Theorem}[section]
\newtheorem{lemma}[theorem]{Lemma}
\newtheorem{proposition}[theorem]{Proposition}
\newtheorem{corollary}[theorem]{Corollary}
\newtheorem{conjecture}[theorem]{Conjecture}
\theoremstyle{definition}
\newtheorem{definition}[theorem]{Definition}
\theoremstyle{remark}
\newtheorem{remark}[theorem]{Remark}

\newcommand{\mem}{\mathrm{mem}}
\newcommand{\Ext}{\mathrm{Ext}}
\newcommand{\Hloc}{\mathcal{H}_{\mathrm{loc}}}
\newcommand{\D}{\mathcal{D}}
\newcommand{\E}{\mathbb{E}}

\newcommand{\eps}{\varepsilon}
\newcommand{\kap}{\kappa}
\newcommand{\eqdef}{\vcentcolon=}

\definecolor{hsBlue}{HTML}{4053A3}
\definecolor{hsRust}{HTML}{B5522D}
\definecolor{hsTeal}{HTML}{0F7A6D}
\definecolor{hsViolet}{HTML}{7A4FA3}
\definecolor{hsGray}{HTML}{7C8388}
\definecolor{hsRed}{HTML}{A01F2D}

\begin{document}

\date{}

\title{\Large \bf Memorization Is Not Extraction:\\
Tight Differential-Privacy Bounds and Audit Blind Spots}

\ifarxiv
\author{
{\rm Xujun Che}\\
University of North Carolina at Charlotte\\
xche@charlotte.edu
\and
{\rm Depeng Xu}\\
University of North Carolina at Charlotte\\
dxu7@charlotte.edu
\and
{\rm Shuhan Yuan}\\
Utah State University\\
shuhan.yuan@usu.edu
} 
\else
\author{
{\rm Anonymous Author(s)}\\
Anonymous submission
} 
\fi

\maketitle

\begin{abstract}
Memorization in large language models is measured through a zoo of definitions whose formal relations are unknown, and differential privacy (DP) is treated as a proxy against all of them at once. We pin down the exact DP constant for the two that carry the practical weight, counterfactual memorization and adaptive extraction, and show that they do not control each other. Under $f$-DP, every adaptive extraction protocol with list budget $m$ succeeds with probability at most $1-f(\kap)$ for the oblivious baseline $\kap$, and the bound is tight on a dense set of baselines: DP \emph{uniformly} controls extraction exactly up to a threshold in how well the secret can be guessed a priori. Min-entropy certifies that baseline distribution-free, since $H_\infty\ge\eps\log_2 e+\log_2(m/\tau)$ holds extraction below a risk level $\tau\le1/2$ under pure $\eps$-DP for \emph{every} prior, and is exact on uniform priors. On the memorization side, $f$-DP caps the counterfactual memorization of any bounded score at an advantage functional $\eta(f)$, equal to $\tanh(\eps/2)$ under pure DP; for $k\ge2$ duplicated copies the naive $\eps\mapsto k\eps$ bound $\tanh(k\eps/2)$ is unattainable, the exact constant being a closed-form staircase attained by geometric noisy counting. That cap is attained inside the local score class used in practice, and it is there that the two measures separate: one mechanism is memorized yet unextractable, another fully extractable yet exactly invisible to every loss-based score. The two-sided blind spot this opens for loss-based auditing and unlearning verification survives on billion-parameter models: a reserved-trigger release is recovered verbatim from one prompt while the audits practitioners deploy certify it clean.
\end{abstract}

\section{Introduction}\label{sec:intro}

Whether a model has \emph{memorized} a piece of its training data is now a
question with legal and regulatory force: erasure requests under the GDPR
presuppose that removal is verifiable, copyright litigation turns on
whether verbatim reproduction evidences copying
\cite{cooper-grimmelmann,elkin-koren}, and deployment policies increasingly
require quantitative memorization assessments. The empirical record
gives such assessments little to stand on. Training data can be extracted
verbatim from public models \cite{carlini21} and, via divergence-style
prompting, from aligned production systems \cite{nasr23}; measured
memorization grows systematically with model scale, example duplication,
and context length \cite{carlini-quant,tirumala,biderman}, is only
partially mitigated by deduplication \cite{lee-dedup,kandpal}, and is
masked rather than removed by verbatim-output filters \cite{ippolito}.
Models certified as ``unlearned'' recover the supposedly forgotten text
after light fine-tuning on loosely related public data
\cite{hu-relearn,lucki}, and audits based on membership-inference losses
can disagree with extraction-based audits on the same model
(Section~\ref{sec:verif-llm}).

Underlying this confusion is a definitional one. The literature measures
memorization through several coexisting formalizations: canary exposure
\cite{carlini-secret}, eidetic memorization \cite{carlini21},
counterfactual memorization \cite{feldman-zhang,zhang-cf}, discoverable
extraction and its probabilistic refinement
\cite{carlini-quant,hayes-prob}, and distributional memorization, which
asks not about any single record but about corpus-level statistical
structure the model reproduces; recent
surveys present these side by side, with no quantitative reductions or
separations between them \cite{landscape,satvaty}. At the same time, differential
privacy (DP) \cite{dmns,dwork-roth} and low membership-inference (MI) measurements
\cite{shokri,yeom,lira} are widely treated as proxies protecting against
\emph{all} of these notions at once. The proxy logic has solid foundations
for two of the games: the exact DP-to-MI region is classical
\cite{yeom,kov,drs}, and reconstruction is controlled by
reconstruction-robustness bounds \cite{bch22,guo22,hayes-rero,kaissis}.
It has no such foundation for the measures actually used to study
language models, and a recent separation shows it can fail outright: indistinguishability
guarantees neither imply nor are implied by inextractability \cite{lex26}.

This paper provides the missing map (Fig.~\ref{fig:map}). We cast each
notion as an adversary game over a common training pipeline
(Section~\ref{sec:prelim}), prove quantitative bridges from $f$-DP to the
two measures that carry the practical weight, and prove that those two
are themselves incomparable on the score class used in practice. Both
halves are exact rather than asymptotic: the bridges carry constants that
are attained, and the incomparability is witnessed in both directions by
explicit mechanisms. That exactness is what makes the map usable. It
turns the proxy question into a threshold an auditor can evaluate, and it
locates the counterexamples of \cite{lex26} rather than merely
accommodating them: they occupy the low-entropy region where our boundary
says a DP guarantee alone can no longer uniformly control extraction.

\paragraph{Contributions.}
\begin{itemize}[leftmargin=*, itemsep=2pt]
  \item \textbf{A unified game framework} (Section~\ref{sec:prelim}): each
  memorization notion is an adversary game over one pipeline; we isolate
  the \emph{local score class} $\Hloc$ capturing every loss-based
  estimator used in practice \cite{feldman-zhang,zhang-cf,carlini-quant},
  and define quantitative implication and separation between measures
  (Table~\ref{tab:map}).
  \item \textbf{Calibration} (Section~\ref{sec:t1}, Theorem~\ref{thm:t1}):
  $f$-DP bounds counterfactual memorization by the total-variation
  parameter $\eta(f)=\sup_\alpha(1-\alpha-f(\alpha))$ of
  \cite{kaissis-mech,kulynych}, with closed forms
  $\tanh(\eps/2)$, $\delta+(1-\delta)\tanh(\eps/2)$ and $2\Phi(\mu/2)-1$
  under pure, approximate and $\mu$-GDP respectively, and
  tightness attained \emph{within} $\Hloc$. For $k$-duplicated canaries
  the naive $\eps\mapsto k\eps$ bound $\tanh(k\eps/2)$ is
  \emph{unattainable} for $k\ge2$; Lemma~\ref{lem:staircase} solves the
  extremal problem induced by the exact $f$-DP group trade-off
  \cite{drs} in closed form for all $k$, attained by noisy counting.
  \item \textbf{The extraction boundary} (Section~\ref{sec:t2},
  Theorem~\ref{thm:t2}): the reconstruction-robustness bound
  $\Ext\le 1-f(\kap)$ in an adaptive list-budget semantics aligned with
  discoverable extraction \cite{carlini-quant,hayes-prob}, tight on a
  dense set of baselines, with a min-entropy certificate, exact on uniform priors, for when
  DP prevents extraction, and a conditional-prior treatment that charges
  paraphrase-correlated corpora to the baseline.
  Corollary~\ref{cor:audit} turns the converse into a soundness condition
  for canary-based auditing \cite{jagielski,steinke-audit}.
  \item \textbf{Separations} (Section~\ref{sec:t3}, Theorem~\ref{thm:t3}): on
  $\Hloc$, counterfactual memorization and extractability are incomparable
  in both directions, via the uniform camouflage tree (posterior
  uniformity over $k^L$ surviving paths; perfect marginal camouflage) and
  a reserved-trigger side channel whose in-the-wild instances are
  divergence attacks \cite{nasr23} and data-poisoning backdoors
  \cite{badnets,wan}. Corollaries~\ref{cor:blind}--\ref{cor:unlearn} give
  a two-sided blind spot for loss-based auditing and a measure-relativity
  principle for unlearning verification \cite{hu-relearn}.
\end{itemize}

Section~\ref{sec:rw-rero} places these results against the closest prior
line, Section~\ref{sec:axioms} asks which axioms a memorization measure
can satisfy at once, and Section~\ref{sec:verif} tests them at two
scales, ending with
fine-tuned language models on which no audit in the suite we evaluate
catches every leak we plant.

\begin{figure}[t]
\centering
\begin{tikzpicture}[
  font=\scriptsize,
  box/.style={draw, rounded corners=1.5pt, align=center, inner xsep=3pt,
              inner ysep=2.5pt, line width=0.4pt},
  ours/.style={box, fill=hsTeal!10, draw=hsTeal!55},
  guar/.style={box, fill=hsViolet!12, draw=hsViolet!55},
  prior/.style={box, fill=hsGray!8, draw=hsGray!45, text=black!70},
  openbox/.style={box, dashed, fill=white, draw=hsGray!55, text=black!60},
  bridge/.style={-{Stealth[length=3.4pt]}, hsViolet, line width=0.75pt},
  known/.style={-{Stealth[length=2.6pt]}, hsGray!75, line width=0.4pt},
  sep/.style={hsRed, line width=1.0pt},
  tag/.style={font=\tiny, inner sep=1.2pt, fill=white, text=black!75}
]
\node[guar] (fdp) at (0,0) {$f$-DP guarantee};
\node[prior] (mi)   at (-1.25,-0.95) {MI advantage};
\node[prior] (rero) at ( 1.25,-0.95) {Reconstruction};
\node[ours] (mem) at (-2.35,-2.0)
  {Counterfactual\\[-0.5pt] \tiny score-based on $\Hloc$\\[2pt]
   $|\mem_h|\le\eta(f)$};
\node[ours] (ext) at ( 2.35,-2.0)
  {Extraction\\[-0.5pt] \tiny adaptive, budget $m$\\[2pt]
   $\Ext\le 1-f(\kap)$};
\node[openbox] (dist) at (0,-3.1)
  {Distributional\\[-0.5pt] \tiny corpus-level structure; relations open};
\draw[known] (fdp) -- (mi);
\draw[known] (fdp) -- (rero);
\draw[bridge] (fdp.west) to[out=180,in=95, looseness=1.0]
  node[tag, text=hsViolet, pos=0.60, left=-1pt] {Thm.~\ref{thm:t1}} (mem.north);
\draw[bridge] (fdp.east) to[out=0,in=85, looseness=1.0]
  node[tag, text=hsViolet, pos=0.60, right=-1pt] {Thm.~\ref{thm:t2}} (ext.north);
\draw[sep] (mem) -- node[tag, text=hsRed] {Thm.~\ref{thm:t3}\;$\times$} (ext);
\draw[known, dashed, hsGray!55] (mem.south) -- (dist.west);
\draw[known, dashed, hsGray!55] (ext.south) -- (dist.east);
\end{tikzpicture}
\caption{Relations among memorization measures. Solid edges are
proved, dashed ones are open. The violet arrows are the bridges proved
here, labeled with the constant each delivers; the red edge is also a
result, a two-way separation on the local class $\Hloc$, so neither of
those two measures controls the other; thin gray arrows are prior
bridges, and the dashed node is the relation we leave open.}
\label{fig:map}
\end{figure}
\section{Related Work}\label{sec:rw}

\paragraph{Differential privacy and its hypothesis-testing view.}
DP originates with \cite{dmns}; see \cite{dwork-roth}
for the standard treatment and \cite{abadi} for the DP-SGD training
pipeline our results apply to. The hypothesis-testing interpretation
(that DP constrains the type-I/type-II error region of any test
distinguishing neighboring datasets) goes back to \cite{wz10}, was
sharpened into the exact $(\eps,\delta)$ region by \cite{kov}, and is
systematized by the trade-off-function calculus of $f$-DP \cite{drs},
whose formalism (attainment, post-processing, group privacy) our proofs
use throughout; R\'enyi accounting \cite{mironov} is the standard
practical surrogate.

\paragraph{From reconstruction robustness to extraction.}
\label{sec:rw-rero}
The closest antecedent is the reconstruction-robustness (ReRo) framework of Balle,
Cherubin, and Hayes \cite{bch22}, which bounds an informed adversary's
probability of reconstructing a training record under DP as a function
of a prior baseline; \cite{guo22} gives information-theoretic variants,
and \cite{hayes-rero,kaissis} tighten the bounds for DP-SGD and general
$f$-DP. Bounds of this shape are not confined to reconstruction:
\cite{kaissis-mech} identify $\eta(f)=\sup_\alpha(1-\alpha-f(\alpha))$
as the total-variation parameter implied by $f$-DP, and \cite{kulynych}
unify re-identification, attribute inference and reconstruction under
$\text{success}\le1-f(\text{baseline})$, with counterfactual
memorization as a corollary. Theorem~\ref{thm:t1} is that parameter's
exact transfer to counterfactual memorization, including the $k$-copy
refinement where group privacy is provably not tight, and
Theorem~\ref{thm:t2}(i) is the baseline bound in an adaptive
list-budget semantics. Every result in this line is an upper bound, and
none characterizes when the bound is met; Table~\ref{tab:rero} sets out
what follows from closing that gap. The converse also resolves the
apparent tension with \cite{lex26}, which formalizes extraction and
indistinguishability as games and proves them incomparable in the worst
case: its counterexamples live in the low-entropy regime our threshold
identifies as the failure region (Section~\ref{sec:reconcile}).
Theorem~\ref{thm:t3} then adds a second separation, of the
\emph{average-case, score-based} counterfactual measure from
extractability, by posterior-uniformity and side-channel constructions
rather than worst-case gadgets.

\paragraph{Measuring memorization, and auditing it.}
Canary exposure \cite{carlini-secret} and the GPT-2 extraction of
\cite{carlini21} initiated this literature, quantified across scale,
duplication and context length by \cite{carlini-quant} and refined in
\cite{biderman,tirumala}. Probabilistic discoverable extraction
\cite{hayes-prob} replaces the single greedy query by repeated sampling.
Counterfactual memorization was introduced for classification by
\cite{feldman-zhang} and ported to language models by \cite{zhang-cf}.
On the mitigation side, deduplication reduces measured memorization
\cite{lee-dedup,kandpal}, while output filters provide only a false
sense of privacy \cite{ippolito}. Surveys \cite{landscape,satvaty}
present the definitions side by side; the reductions and separations of
Fig.~\ref{fig:map} are new, as are the exact constants behind the first
effect (Corollary~\ref{cor:dup}, Lemma~\ref{lem:staircase}) and the
second (Appendix~\ref{app:d2}). Membership-inference attacks
\cite{shokri,yeom,lira} and their use for auditing DP implementations
\cite{jagielski,nasr21,steinke-audit} are the empirical face of the
DP-to-MI bridge; the sample complexity of auditing is an active
theoretical frontier \cite{haghifam}. Corollary~\ref{cor:audit} gives
the entropy precondition under which an extraction-based audit is sound
evidence against a DP claim; Corollary~\ref{cor:blind} shows loss-based
audits have a two-sided blind spot as measurements of memorization,
independent of sample size.

\paragraph{Machine unlearning.}
Exact unlearning by retraining or sharding \cite{cao-yang,bourtoule} and
statistical deletion guarantees \cite{ginart,guo-removal,neel,gupta} led
to the DP-based deletion-capacity theory of \cite{sekhari}, with tight
bounds in \cite{huang-canonne}. For LLMs, approximate unlearning lacks
such guarantees, and benign relearning attacks recover ``forgotten''
verbatim text from loosely related public data \cite{hu-relearn,lucki}.
Corollary~\ref{cor:unlearn} makes ``what was forgotten'' well-posed only
relative to a measure, and our separations exhibit models where
suppressing either standard one is a no-op on the other.

\paragraph{Copyright, axioms, and the long tail.}
Near access-freeness \cite{naf} gives copyright-motivated output bounds
that instantiate the baseline-calibration axiom of
Table~\ref{tab:axioms}, alongside work on whether copyright reduces to
privacy \cite{elkin-koren} and on verbatim memorization as a legal
object \cite{cooper-grimmelmann}; the axiomatic program follows
\cite{kifer-lin}. The long-tail theory of \cite{feldman,brown} explains
\emph{why} memorization occurs, and our plug-in functional damps a
frequent target's counterfactual memorization by exactly $(1-\omega)^n$ in
its corpus mass $\omega$ (Remark~\ref{rem:insupport}).

\begin{table}[t]
\centering\footnotesize
\caption{Theorem~\ref{thm:t2} against the prior $f$-DP bounds on
record-level attack success
\cite{bch22,guo22,hayes-rero,kaissis,kaissis-mech,kulynych}. Row one is
the prior bound; the rows below it are established here.}\label{tab:rero}
\begin{tabular}{@{}lll@{}}
\toprule
 & prior $f$-DP bounds & this paper \\
\midrule
adversary & one reconstruction & adaptive prompting, \\
 & guess & list budget $m$ \\
decoding & not modeled & any decoding rule \\
baseline & unconditional & prior \emph{conditional} on \\
 & & the corpus remainder \\
direction & upper bound & upper bound \emph{and} converse \\
achievability & not characterized & exact: $\kap_0\le\alpha^*(\tau)$ \\
entropy rule & --- & $H_\infty\ge H^*$ \\
\bottomrule
\end{tabular}
\end{table}

\section{Preliminaries and the Game Framework}\label{sec:prelim}

\subsection{Notation and definitions}\label{sec:defs}

\paragraph{Notation.} Vocabulary $V$ ($|V|=v$), documents
$\mathcal{X}=V^{\le L}$, data distribution $\D$, dataset
$S\sim\D^n$. A randomized training algorithm $A:\mathcal{X}^n\to\Theta$
releases a model $\theta$. Trade-off functions and $f$-DP follow
\cite{drs}: $f$ is convex, continuous, nonincreasing, $f(\alpha)\le
1-\alpha$ (hence $f(1)=0$); $A$ is $f$-DP for a neighboring relation if
$T(A(S),A(S'))\ge f$ for all neighbors, both orders. We use replace-one
neighbors for counterfactual statements and add/remove neighbors for
extraction statements, recording conversions where they matter. WLOG $f$
is symmetric ($f=f^{-1}$ in the generalized-inverse sense), as holds for
the three canonical curves: $f_\eps$ (pure $\eps$-DP), $f_{\eps,\delta}$
(approximate DP) and the Gaussian-DP (GDP) curve
$G_\mu(\alpha)=\Phi\big(\Phi^{-1}(1-\alpha)-\mu\big)$, with $\Phi$ the
standard normal CDF. Define the \emph{advantage functional}
\[
\eta(f)\;\eqdef\;\sup_{\alpha\in[0,1]}\big(1-\alpha-f(\alpha)\big),
\]
which equals the total variation $\mathrm{TV}(P,Q)$ when $f=T(P,Q)$ is
attained.

\begin{lemma}[testing form of $f$-DP]\label{lem:c0}
If $A$ is $f$-DP and $P=A(S)$, $Q=A(S')$ for neighbors $S,S'$, then for
every randomized test $\phi:\Theta\to[0,1]$,
$\;\E_P[\phi]\le 1-f(\E_Q[\phi])$.
\end{lemma}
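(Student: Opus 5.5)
The plan is to unfold the definition of the trade-off function and use the symmetry of $f$. Following \cite{drs}, $T(P,Q)(\alpha)=\inf\{1-\E_Q[\psi]:\E_P[\psi]\le\alpha\}$, where the infimum runs over randomized rejection rules $\psi$ for testing $H_0:P$ against $H_1:Q$. So $T(P,Q)$ is the smallest achievable type-II error at type-I level $\alpha$. Since $f$-DP applies to both orders of the neighbors, we have $T(Q,P)\ge f$ as well as $T(P,Q)\ge f$. The natural move is to take $\psi=\phi$ in the testing problem with null $Q$ and alternative $P$. Set $\alpha\eqdef\E_Q[\phi]$. Then $\phi$ is feasible at level $\alpha$, and its type-II error is $1-\E_P[\phi]$. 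By the definition of the infimum,
\[
1-\E_P[\phi]\;\ge\;T(Q,P)(\alpha)\;\ge\;f(\alpha)=f(\E_Q[\phi]).
\]
Rearranging gives $\E_P[\phi]\le 1-f(\E_Q[\phi])$, which is the claim.

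Three steps need care.

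\begin{enumerate}
\item \emph{Order of the arguments.} The statement pairs $P$ on the left with $\E_Q[\phi]$ inside $f$, so we need the bound on $T(Q,P)$, that is, $Q$ as the null. If one only had $T(P,Q)\ge f$, one would pass through the generalized inverse. The identity $T(Q,P)=T(P,Q)^{-1}$ from \cite{drs}, together with $f\ge g\Rightarrow f^{-1}\ge g^{-1}$, gives $T(Q,P)\ge f^{-1}$. The WLOG symmetry $f=f^{-1}$ then closes the gap. Because the definition quantifies over both orders, this detour is only a consistency check. I would still mention it, since it is where symmetry would matter.

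\item \emph{Randomized tests.} The infimum in $T$ is over randomized tests $\psi:\Theta\to[0,1]$. So $\phi$ is admissible directly, and no Neyman--Pearson attainment or measurability argument is needed.

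\item \emph{Boundary cases.} If $\E_Q[\phi]=1$, the bound reads $\E_P[\phi]\le 1-f(1)=1$, which is trivial. Continuity of $f$ is not needed, because $f$ is evaluated exactly at $\alpha=\E_Q[\phi]$.
\end{enumerate}

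I expect the main obstacle to be purely bookkeeping: keeping track of which distribution plays the null. That is the one place an error could slip in, and the two-orders clause or the symmetry of $f$ resolves it. Everything else is a one-line application of the definition of $T$.
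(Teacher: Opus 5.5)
Your proposal is correct and is essentially the paper's proof: take $\phi$ itself as a feasible test at level $\alpha=\E_Q[\phi]$ in the problem with null $Q$, so that $1-\E_P[\phi]\ge T(Q,P)(\E_Q[\phi])\ge f(\E_Q[\phi])$ by the $f$-DP guarantee for the ordered pair $(Q,P)$, and then rearrange. The paper likewise uses the both-orders clause of the $f$-DP definition directly, so your generalized-inverse detour through the symmetry of $f$ is a valid consistency check but not a step the argument needs.
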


\begin{definition}[counterfactual memorization, plug-in replace-one]
\label{def:mem}
For a bounded score $h:\Theta\times\mathcal{X}\to[0,1]$ and a target $x$
(not necessarily in $\mathrm{supp}(\D)$),
\begin{multline*}
\mem_h(A,\D,n;x)\;\eqdef\;
\E_{S_{-i},\,x'}\Big[\E\,h\big(A(S_{-i}\cup\{x\}),x\big)\\
-\E\,h\big(A(S_{-i}\cup\{x'\}),x\big)\Big],
\end{multline*}
with $S_{-i}\sim\D^{n-1}$, $x'\sim\D$. Appendix~\ref{app:a} relates this
functional to the leave-one-out and population-conditional variants.
\end{definition}

\begin{definition}[extraction game with list budget]\label{def:ext}
Fix $x^*=(c,z)$ with public prefix $c$ and secret suffix
$z\in\mathcal{Z}$. The \emph{conditional prior} $\pi$ is the law of $z$
given the adversary's entire side information, including $c$ and the
realization of the remainder of the training set; paraphrase-correlated
corpus material is thereby charged to $\pi$, not to leakage. An
\emph{extraction protocol} $\mathcal{E}$ with list budget $m$ interacts
with the released model black-box (arbitrarily many adaptive prompts,
each returning the model's next-token conditional distribution at the
queried prefix, with arbitrary decoding) and outputs a list $\Lambda$, $|\Lambda|\le m$; it succeeds iff
$z\in\Lambda$. The protocol is \emph{secret-oblivious}: its prompts,
decoding and list depend on $c$ and the released model but never on $z$,
which is what leaves $\kap_\pi(m)$ as the right null (Appendix~\ref{app:c},
Step 3). The budget is on the list, not on the prompts: a protocol may
decode autoregressively, so one prompt to a served model corresponds to
$|z|$ evaluations of the oracle. Writing $h_z(\theta)\eqdef\Pr[z\in\Lambda\mid\theta]\in[0,1]$,
\[
\Ext(A,\mathcal{E})\;\eqdef\;\E_{z\sim\pi}\,
\E\big[h_z\big(A(S_-\cup\{x^*\})\big)\big],\quad S_-\sim\D^{n-1},
\]
\[
\kap_\pi(m)\;\eqdef\;\sup_{|\Lambda|\le m}\pi(\Lambda)\;\le\;m\cdot 2^{-H_\infty(\pi)}.
\]
$(n,p)$-discoverable extraction by repeated sampling
\cite{carlini-quant,hayes-prob} is the special case of listing $m$
completions. We use the \emph{planted} semantics throughout ($S=S_-\cup\{x^*\}$);
the occupancy-conditioned variant $\E[\,\cdot\mid x^*\in S]$ differs by
corrections of exactly the kind quantified in Appendix~\ref{app:a}.
\end{definition}

\begin{definition}[local score class]\label{def:hloc}
$h\in\Hloc$ iff $h(\theta,x)$ depends on $\theta$ only through the
conditionals at $x$'s own prefixes, $\{p_\theta(\cdot\mid
x_{<t})\}_{t\le|x|}$. Teacher-forcing accuracy, true-token
probability, per-token log-loss, and perplexity all lie in $\Hloc$;
these are the scores used to estimate counterfactual memorization in
practice.
\end{definition}

\subsection{Threat model}\label{sec:threat}

Two positions matter, and one capability is exercised from both of them.
Before the release there is the party that trains and serves; after it,
a party that can only query. The querying party splits not by what it
can do but by what it knows.

\paragraph{Extraction as a capability.}
Definition~\ref{def:ext} fixes it: adaptive black-box prompting under a
list budget, with any decoding rule. The queried party never sees the training pipeline, the optimization
trajectory, or the rest of the corpus, so an observer of training rather
than of the released model is outside the guarantee unless $f$ is the
trajectory-level one. Prior knowledge is charged, not free: everything
known about $z$ beyond the release sits in $\pi$, which is why the
baseline rather than the entropy of $z$ alone appears in the bound.

\paragraph{The adversary, who does not know the secret
(Theorem~\ref{thm:t2}).}
It exercises that capability to \emph{obtain} $z$, and its success is
the harm the bound limits, in expectation over $z\sim\pi$.

\paragraph{The auditor, who does (Corollaries~\ref{cor:audit}
and~\ref{cor:blind}).}
It planted $x^*$, so it knows the target and is measuring rather than
attacking. That extra knowledge buys the loss-based scores: it can
evaluate any bounded $h$ on the released model, and the ones used in
deployment read only the conditionals along $x^*$'s own prefixes, the
class $\Hloc$. It may also run the extraction protocol itself, with knowledge of the
planted secret used only to score the outcome, since $\mathcal{E}$ stays
secret-oblivious. This
is where the two roles meet: an extraction-based audit is the adversary's
capability turned into an instrument, so the same bound that limits the
harm also limits what the audit can conclude
(Corollary~\ref{cor:audit}). Shadow-model attacks such as LiRA need more
than query access, since they train reference models on population data;
we report them separately for that reason.

\paragraph{The trainer or data supplier (Theorem~\ref{thm:t3}).}
Adversarial, and this is the threat model under which memorization
measures are used for compliance: a party who wants a release to pass a
loss-based audit while the content stays recoverable. It may choose the
training data (poisoning), and in the gated release of
Section~\ref{sec:verif-llm} it also controls how the model is served.
Choosing the data and controlling what is fit are not the same power, and
direction~(ii) of Theorem~\ref{thm:t3} needs the second: the separation
collapses once $x^*$ is itself fit (Section~\ref{sec:verif-llm}). The
party it describes controls the run---a malicious trainer, a fine-tuning
service, a compromised supply chain---and one who can only append
documents to someone else's training run is outside it. This is why
worst-case constructions are the right object here, and why we separate
what ordinary fine-tuning produces from what a serving-layer adversary
can build.

Quantitative implication between measures, and separation, are defined in
the natural way (a bound on one transfers to the other with an explicit
loss; or an explicit construction defeats every such transfer).
Table~\ref{tab:map} summarizes the games and the bridges proved below.

\begin{table*}[t]
\small
\centering
\caption{The measures as adversary games and their quantitative bridges
from $f$-DP. The red edge of Fig.~\ref{fig:map}
(Theorem~\ref{thm:t3})
separates rows three and five in both directions on $\Hloc$.}\label{tab:map}
\begin{tabular}{@{}llll@{}}
\toprule
Measure & Adversary game & Bridge from $f$-DP & Tightness \\
\midrule
MI advantage & membership test & exact region \cite{yeom,drs} & known \\
Reconstruction & ReRo game & prior-baseline bound \cite{bch22} & known \\
Counterfactual mem. & score gap on $\Hloc$ & $\eta(f)$
  (Thm.~\ref{thm:t1}) & tight, within $\Hloc$ \\
$k$-copy mem. & $k$-hop advantage & $\eta^*_k$
  (Lem.~\ref{lem:staircase}) & exact; $<$ group bound \\
Extraction & adaptive, list budget $m$ & $1-f(\kap_\pi(m))$
  (Thm.~\ref{thm:t2}) & tight on dense $\kap$ \\
Distributional mem. & corpus correlation & --- & open
  (Conj.~\ref{conj:axioms}) \\
\bottomrule
\end{tabular}
\end{table*}

\section{Calibration: $f$-DP Bounds Counterfactual Memorization}
\label{sec:t1}

The first bridge is a cap. Whatever bounded score an auditor computes,
$f$-DP limits how much of it the presence of one record can buy, and the
limit is a single functional of $f$ with closed forms for the guarantees
in use.

\begin{theorem}\label{thm:t1}
If $A$ is $f$-DP under replace-one, then for all $\D,n,x$ and all bounded
scores $h$: $\;|\mem_h|\le\eta(f)$, with
$\eta(f_\eps)=\tanh(\eps/2)$,
$\eta(f_{\eps,\delta})=\delta+(1-\delta)\tanh(\eps/2)$,
$\eta(G_\mu)=2\Phi(\mu/2)-1$. The bound is tight, attainable by a score in
$\Hloc$, and fails for unbounded scores.
\end{theorem}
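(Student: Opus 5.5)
The plan is to reduce the memorization gap to a single testing inequality and then optimize over $\alpha$. Fix $S_{-i}$ and $x'$, and set $P=A(S_{-i}\cup\{x\})$ and $Q=A(S_{-i}\cup\{x'\})$. These are replace-one neighbors. The map $\phi(\theta)=h(\theta,x)$ is a randomized test with values in $[0,1]$, since $h$ is bounded in $[0,1]$. Lemma~\ref{lem:c0} then gives $\E_P[\phi]\le 1-f(\E_Q[\phi])$. Writing $\alpha=\E_Q[\phi]$, we obtain
\[
\E_P[\phi]-\E_Q[\phi]\le 1-\alpha-f(\alpha)\le\eta(f).
\]
Applying the same lemma with the roles of $P$ and $Q$ swapped (f-DP holds in both orders) bounds the negative side. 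Averaging over $S_{-i}$ and $x'$ preserves the bound, so $|\mem_h|\le\eta(f)$.

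The closed forms come from direct optimization.
\begin{itemize}
\item For pure DP, $f_\eps(\alpha)=\max\{0,1-e^\eps\alpha,e^{-\eps}(1-\alpha)\}$. The quantity $1-\alpha-f(\alpha)$ is maximized at the kink $\alpha=1/(1+e^\eps)$, where it equals $(e^\eps-1)/(e^\eps+1)=\tanh(\eps/2)$.
\item For approximate DP, the kink is at $\alpha=(1-\delta)/(1+e^\eps)$, and the value there is $\delta+(1-\delta)\tanh(\eps/2)$.
\item For $G_\mu$, set the derivative to zero. This gives $\varphi(\Phi^{-1}(1-\alpha)-\mu)=\varphi(\Phi^{-1}(1-\alpha))$, so $\Phi^{-1}(1-\alpha)=\mu/2$, and the maximum is $\Phi(\mu/2)-\Phi(-\mu/2)=2\Phi(\mu/2)-1$.
\end{itemize}

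Tightness within $\Hloc$ is where the real work lies. I would build a mechanism that is exactly $f$-DP and attains the supremum. The natural choice is randomized response on a one-bit signal, namely whether $x$ is in the training set, encoded in the model's conditional at $x$'s own prefix. For concreteness:
\begin{itemize}
\item Take $\D$ supported away from $x$, so $x'\ne x$ almost surely.
\item Let the model output, at the prefix $x_{<|x|}$, the final token of $x$ with probability $1$ when a bit $b=1$ and some other token when $b=0$.
\item Draw $b$ from randomized response with $\Pr[b=1]=e^\eps/(1+e^\eps)$ if $x\in S$ and $1/(1+e^\eps)$ otherwise.
\end{itemize}
Because the rest of the model ignores the data, $T(P,Q)=f_\eps$ for every neighbor pair, not just the pair at $x$. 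The score is true-token probability at the last position, which lies in $\Hloc$ and here equals $b$. Then $\mem_h=\mathrm{TV}=\tanh(\eps/2)$.

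For general symmetric $f$, replace randomized response with the canonical pair $(P_f,Q_f)$ realizing $T=f$. Threshold the likelihood ratio at the optimal $\alpha$, and write the resulting indicator into the conditional at $x$'s prefix. The resulting mechanism is exactly $f$-DP, since it is a post-processing of a data-dependent draw from $P_f$ or $Q_f$, and the score attains $\eta(f)=\mathrm{TV}(P_f,Q_f)$.

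The main obstacle is that the mechanism must be $f$-DP for all neighbor pairs, including those that do not involve $x$. Making the output depend only on the indicator $\mathbf{1}[x\in S]$, which is a function that changes by at most one under replace-one, settles this.

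Finally, failure for unbounded scores follows from scaling. Take the same mechanism with score $c\cdot h$, or with log-loss, which is unbounded as the true-token probability approaches $0$. The gap $c\,\eta(f)$, or $+\infty$ for log-loss when $b=0$ is given probability $0$ on $x$'s token, exceeds any constant.
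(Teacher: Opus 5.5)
Your proposal is correct and follows the paper's proof essentially step for step. The upper bound is the same pointwise, two-sided application of Lemma~\ref{lem:c0} to the replace-one pair, and the closed forms come from maximizing the concave function $1-\alpha-f(\alpha)$. Tightness uses the same construction: a membership-bit canary that pushes an attaining pair through a (possibly randomized) Neyman--Pearson test at the maximizer, read by the true-token probability at $x$'s own prefix, with $\D$ supported away from $x$.

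One slip is in the log-loss variant of the unboundedness claim. If the $b=0$ model puts probability $0$ on $x$'s token, both branches have $b=0$ with positive probability. So both expected log-losses are $+\infty$, and the gap is $\infty-\infty$, not $+\infty$. The fix is the paper's argument: give the $b=0$ model mass $s>0$ and the $b=1$ model mass $t>s$, so that $|\mem_h|=\eta(f)\log(t/s)\to\infty$ as $s\to 0$. Your scaling argument with $c\cdot h$ already establishes the claim on its own.
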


The proof (Appendix~\ref{app:b}) is a pointwise application of
Lemma~\ref{lem:c0} in both orders; Fig.~\ref{fig:eta} plots the three
advantage curves $\alpha\mapsto1-\alpha-f(\alpha)$ with their maximizers.
Tightness uses a membership-binary canary released as the
pushforward of a pair attaining $f$, realized as a language model whose
probability of emitting \texttt{1} at $c$ is the optimal test evaluated
on the released draw, so the attaining score is the
teacher-forcing probability, an element of $\Hloc$
(Remark~\ref{rem:hloc-tight}). Boundedness is necessary: with log-loss the
memorization of an $\eps$-DP randomized-response canary is
$\tanh(\eps/2)\log(t/s)\to\infty$; truncation at level $M$ prices in as
$M\tanh(\eps/2)$, exactly.

\begin{figure}[t]
\centering
\includegraphics[width=0.7\columnwidth]{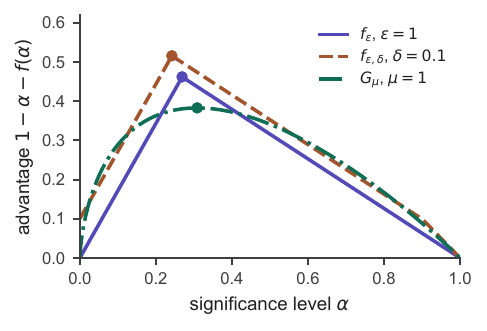}
\caption{Advantage curves $1-\alpha-f(\alpha)$ for the three canonical
guarantees. The marked maximizers carry the closed forms of
Theorem~\ref{thm:t1}: $\tanh(\eps/2)$ at $1/(e^{\eps}{+}1)$,
$\delta+(1-\delta)\tanh(\eps/2)$ at $(1-\delta)/(e^{\eps}{+}1)$, and
$2\Phi(\mu/2)-1$ at $\Phi(-\mu/2)$.}
\label{fig:eta}
\end{figure}

\begin{remark}[tightness within $\Hloc$]\label{rem:hloc-tight}
Since the attaining score lies in $\Hloc$, Theorem~\ref{thm:t1} is tight
even when scores are restricted to the local class, so the constant is
attained on the same class the separations of
Section~\ref{sec:t3} use.
\end{remark}

\begin{corollary}[duplication upper bound]\label{cor:dup}
If $x$ occurs as $k$ copies, group privacy gives
$|\mem^{(k)}_h|\le\eta(f^{(k)})$ for the exact $k$-fold trade-off
$f^{(k)}=1-(1-f)^{\circ k}$ of \cite{drs}; the naive $\eps\mapsto k\eps$
conversion gives the weaker $\tanh(k\eps/2)$.
\end{corollary}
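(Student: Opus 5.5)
The plan is to reduce the $k$-copy statement to the single-copy argument of Theorem~\ref{thm:t1}, with $f$ replaced by the group trade-off. First I make precise what ``$k$ copies'' means in the replace-one semantics: $\mem^{(k)}_h$ compares $A(S_{-k}\cup\{x,\dots,x\})$ with $A(S_{-k}\cup\{x'_1,\dots,x'_k\})$, where $S_{-k}\sim\D^{n-k}$ and the $x'_j\sim\D$ are i.i.d. For each fixed realization of $S_{-k}$ and the $x'_j$, the two datasets are joined by a chain of $k$ replace-one neighbors, obtained by swapping one copy of $x$ for one $x'_j$ at a time. The group-privacy theorem of \cite{drs} says that a chain of length $k$ gives $T(A(S),A(S''))\ge 1-(1-f)^{\circ k}=f^{(k)}$, in both orders, because $f$ is symmetric. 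I take this as the cited input. The only thing to check is that $f^{(k)}$ is again a trade-off function, so that Lemma~\ref{lem:c0} applies with $f^{(k)}$ in place of $f$. This holds because compositions of the concave, nondecreasing maps $1-f$ remain concave and nondecreasing and satisfy $(1-f)(\alpha)\ge\alpha$.

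Next I apply the proof of Theorem~\ref{thm:t1} pointwise. With $\phi=h(\cdot,x)$ as a test, Lemma~\ref{lem:c0} gives $\E_P\phi-\E_Q\phi\le 1-\E_Q\phi-f^{(k)}(\E_Q\phi)\le\eta(f^{(k)})$, and the symmetric order bounds the negative gap in the same way. Averaging over $S_{-k}$ and the $x'_j$ keeps the bound, because $\eta(f^{(k)})$ does not depend on the realization. This yields $|\mem^{(k)}_h|\le\eta(f^{(k)})$.

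For the comparison with the naive bound, I note that $f_\eps^{(k)}\ge f_{k\eps}$ pointwise. The reason is that $1-f_\eps(\alpha)=\min(e^\eps\alpha,\,1-e^{-\eps}(1-\alpha))$, and composing this map $k$ times is dominated by $\min(e^{k\eps}\alpha,\,1-e^{-k\eps}(1-\alpha))$: each factor of $e^{\eps}$ either multiplies the $\alpha$ branch or the $1-\alpha$ branch, and the composition never exceeds the single map with exponent $k\eps$. Since $\eta$ is antitone in $f$, it follows that $\eta(f_\eps^{(k)})\le\eta(f_{k\eps})=\tanh(k\eps/2)$, using the closed form from Theorem~\ref{thm:t1}. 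So the group bound is at least as strong as the naive one.

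I expect the main obstacle to be whether ``weaker'' is meant strictly. A strict gap for $k\ge2$ needs the maximizer $\alpha=1/(e^{k\eps}+1)$ of $1-\alpha-f_{k\eps}$ to lie where the composed curve is strictly above $f_{k\eps}$. I would check this directly for $k=2$ by computing the kink of $(1-f_\eps)^{\circ2}$ and verifying that the maximizer falls in the region where the two branches mix. The case of general $k$ I would defer to the explicit staircase of Lemma~\ref{lem:staircase}.
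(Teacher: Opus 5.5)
Your proposal is correct and is essentially the paper's argument. You reduce to the replace-one chain $S_0\sim\cdots\sim S_k$ and bound the end-to-end trade-off by $f^{(k)}$; the proof of Lemma~\ref{lem:staircase} unrolls this same step as $k$ applications of Lemma~\ref{lem:c0}. You then run the pointwise argument of Theorem~\ref{thm:t1} and compare with $\tanh(k\eps/2)$ through $f_\eps^{(k)}\ge f_{k\eps}$ and antitonicity of $\eta$, with strictness for $k\ge2$ supplied by Lemma~\ref{lem:staircase}, as in the paper.

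Two small corrections. First, the reverse order comes from reversing the chain, since $f$-DP already constrains both ordered pairs; it does not come from symmetry of $f$. Second, your strictness check at $\alpha=1/(e^{k\eps}+1)$ already works for every $k\ge2$, so you need not defer to the closed forms. The $\bar f$-orbit of that point must exceed the breakpoint $1/(e^{\eps}+1)$ at some index $\le k-1$, which forces $\bar f^{\circ k}(\alpha)<e^{k\eps}\alpha$ there.
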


Writing $\bar f\eqdef1-f_\eps$, the advantage functional of the exact
trade-off is $\eta(f^{(k)})=\sup_x\big(\bar f^{\circ k}(x)-x\big)$: a
one-dimensional staircase optimum. Lemma~\ref{lem:staircase} solves it in
closed form for every $k$, exhibits a mechanism attaining it, and places
it strictly below $\tanh(k\eps/2)$ for $k\ge2$.

\begin{lemma}[exact $k$-copy constant, pure DP]\label{lem:staircase}
The maximum bounded-score advantage of an $\eps$-DP mechanism across
$k$-hop neighboring datasets equals
$\eta^*_k(\eps)=\max_{x\in[0,1]}\big(\bar f^{\circ k}(x)-x\big)$ with
$\bar f(b)=\min(e^{\eps}b,\,1-e^{-\eps}(1-b))$, attained by a
Bernoulli-staircase mechanism on the copy count that is $\eps$-DP on all
datasets, with the attaining score in $\Hloc$. In closed form,
\[
\eta^*_k(\eps)=
\begin{cases}
1-e^{-k\eps/2}, & k \text{ even},\\[3pt]
1-\dfrac{2\,e^{-(k-1)\eps/2}}{e^{\eps}+1}, & k \text{ odd},
\end{cases}
\]
recovering $\tanh(\eps/2)$ at $k=1$. \emph{Every} case is attained
exactly by the geometric mechanism on the count: its shift-$k$ total
variation equals the closed form for all $k$ (the continuous Laplace
mechanism, with $\mathrm{TV}=1-e^{-k\eps/2}$, matches only even $k$).
Moreover $\eta^*_k<\tanh(k\eps/2)$ strictly for all $k\ge2$ (e.g.\
$0.632<0.762$ at $k=2$, $\eps=1$). Proof in
Appendix~\ref{app:b-stair}.
\end{lemma}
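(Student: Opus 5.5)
The plan is to prove matching upper and lower bounds and then evaluate the resulting one-dimensional optimum in closed form.

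\emph{Upper bound.} Take datasets $S_0,S_k$ at $k$ hops and join them by a chain $S_0,S_1,\dots,S_k$ of replace-one neighbors, with $P_j=A(S_j)$. For $f=f_\eps$, Lemma~\ref{lem:c0} reads $\E_{P_j}[\phi]\le\bar f(\E_{P_{j-1}}[\phi])$ for every test $\phi:\Theta\to[0,1]$, where $\bar f=1-f_\eps$ is exactly $b\mapsto\min(e^\eps b,\,1-e^{-\eps}(1-b))$. Since $\bar f$ is nondecreasing, induction along the chain gives $\E_{P_k}[\phi]\le\bar f^{\circ k}(\E_{P_0}[\phi])$. Any bounded score $h(\cdot,x)$ is such a test, so the advantage $\E_{P_k}h-\E_{P_0}h$ is at most $\sup_x(\bar f^{\circ k}(x)-x)=\eta^*_k$. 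Swapping the roles of $S_0$ and $S_k$ handles the absolute value.

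\emph{Closed form.} The map $\bar f$ is piecewise linear with kink at $b_0=1/(e^\eps+1)$, and it commutes with the reflection $b\mapsto 1-b$ in the sense that $\bar f(1-\bar f(b))=1-b$. The orbit of a starting point $x$ therefore multiplies by $e^\eps$ while it lies below $b_0$, and the gap $1-b$ shrinks by $e^{-\eps}$ above it. I would parametrize $x=e^{-j\eps}y$ with $y\in(b_0/e^\eps, b_0]$, so that $j$ multiplicative steps are followed by $k-j$ gap-shrinking steps. This gives
\[
\bar f^{\circ k}(x)-x=1-e^{-(k-j-1)\eps}\big(1-e^{\eps}y\big)-e^{-j\eps}y,
\]
which is linear in $y$ on each cell. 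The optimum therefore sits at a cell endpoint, with $j$ balanced: $j=k/2$ for even $k$, where the orbit passes through the symmetric point, and $j=(k-1)/2$ with $y=b_0$ for odd $k$. Substituting yields $1-e^{-k\eps/2}$ and $1-2e^{-(k-1)\eps/2}/(e^\eps+1)$ respectively, and $k=1$ recovers $\tanh(\eps/2)$. The main obstacle is this concave-envelope bookkeeping: showing that no unbalanced $j$ does better. I would first try the reflection symmetry plus an exchange argument, namely that moving one multiplicative step across the kink never increases the gap.

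\emph{Attainment.} Release $Y=\mathrm{count}(x)+W$ with $W$ two-sided geometric, $\Pr[W=w]\propto e^{-\eps|w|}$. The count has sensitivity $1$ under replace-one, so this is $\eps$-DP on all datasets. Its shift-$k$ total variation is
\[
\Pr_0\Big[Y<\tfrac k2\Big]-\Pr_k\Big[Y<\tfrac k2\Big],
\]
where for even $k$ the midpoint carries equal mass and is dropped. Summing the geometric tails, using the normalizer $c=(1-e^{-\eps})/(1+e^{-\eps})$, reproduces both cases. For the Laplace comparison, the continuous tail gives $1-e^{-k\eps/2}$ for every $k$; since $1-e^{-k\eps/2}\neq 1-2e^{-(k-1)\eps/2}/(e^\eps+1)$ for odd $k$, the continuous mechanism matches only even $k$. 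Membership in $\Hloc$ follows as in Remark~\ref{rem:hloc-tight}: realize the model so that its probability of emitting \texttt{1} at the prefix $c$ equals the optimal test $\mathbf 1[Y\ge k/2]$ evaluated on the draw, and take the teacher-forcing probability as the score.

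\emph{Strict gap.} Write $t=e^{-\eps/2}<1$, so that $\tanh(k\eps/2)=(1-t^{2k})/(1+t^{2k})$.
\begin{itemize}
\item For even $k$, $1-t^k<(1-t^{2k})/(1+t^{2k})$ reduces to $t^k<1$.
\item For odd $k$, the analogous inequality reduces to a polynomial inequality in $t$ that holds on $(0,1)$ for $k\ge3$. I would check it by direct expansion.
\end{itemize}
The stated numerical example at $k=2$, $\eps=1$ serves as a sanity check.
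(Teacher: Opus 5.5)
\textbf{Overall.} Your upper bound is the paper's Step~2: Lemma~\ref{lem:c0} along a replace-one chain, iterated using monotonicity of $\bar f$. Your cell decomposition is its Step~4 in a different parametrization. Your geometric tail sums, Laplace comparison and even-$k$ gap all check out.

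\textbf{The gap: optimality of the balanced cell.} You flag this step as the main obstacle and never prove it, offering only an unspecified reflection/exchange argument. In your write-up the step carries more weight than in the paper's. The paper attains the optimum with the staircase $B\sim\mathrm{Bernoulli}(b_{\min(c(S),k)})$ built on an optimal orbit $b_i=\bar f^{\circ i}(x)$. It checks that this staircase is $\eps$-DP via $b_{i+1}\le\bar f(b_i)$ and $b_i\le b_{i+1}\le\bar f(b_{i+1})$, so it attains $\max_x(\bar f^{\circ k}(x)-x)$ whatever that value is. You instead attain with the thresholded geometric mechanism. As a result, without this step you establish neither ``maximum advantage $=\max_x$'' nor ``$\max_x=$ closed form.''

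\textbf{How to close it.} No exchange argument is needed. $\bar f$ is concave and nondecreasing, so $\bar f^{\circ k}$ is concave, and $\bar f^{\circ k}(x)-x$ is a concave piecewise-affine function maximized where its slope changes sign. Your own display locates that point. Its $y$-coefficient $e^{-(k-j-2)\eps}-e^{-j\eps}$ is positive, zero or negative according as $j+1>k/2$, $j+1=k/2$ or $j+1<k/2$, and larger $j$ means smaller $x$. The paper phrases this as slope $e^{(2J-k)\eps}-1$ in $x$, with the number $J$ of lower-branch steps nonincreasing in $x$. The regions where all $k$ steps use one branch have slopes $e^{k\eps}-1>0$ and $e^{-k\eps}-1<0$ and fit the same pattern.

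\textbf{Indexing.} Your display encodes $j+1$ lower-branch steps, not $j$ as your sentence says. In its own indexing the flat even-$k$ cell is therefore $j=k/2-1$. Substituting $j=k/2$ gives $1-e^{-k\eps/2}$ only at $y=1/(e^\eps+1)$, the shared endpoint.

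\textbf{The odd-$k$ strict gap.} You leave this to ``direct expansion''; it closes exactly. With $t=e^{-\eps/2}$ one has $\eta^*_k=1-2t^{k+1}/(1+t^2)$ and $\tanh(k\eps/2)=1-2t^{2k}/(1+t^{2k})$. So $\eta^*_k<\tanh(k\eps/2)$ is equivalent to $1+t^{2k}>t^{k-1}+t^{k+1}$, i.e.\ $(1-t^{k-1})(1-t^{k+1})>0$. This holds for $t\in(0,1)$ and $k\ge2$, and degenerates to equality at $k=1$. It is shorter than the paper's convexity argument for $\psi(E)=E^a+E^{1-a}-E-1$ (with $E=e^\eps$, $a=\frac{k+1}{2}$), which is the same identity in disguise: $\psi(E)=E^{1-a}(E^a-1)(E^{a-1}-1)$.

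\textbf{Your attainment route.} Once the closed form is secured, the geometric-threshold attainment is a legitimate alternative to the paper's. $\eps$-DP comes free by post-processing. The released bit $\mathbf{1}[Y\ge k/2]$ is itself a Bernoulli staircase on the copy count, whose levels trace an optimal orbit.
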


Deduplication guidance thus reads: the admissible memorization advantage
of a $k$-duplicated canary grows as $\eta^*_k$, not as the naive
$\eps\mapsto k\eps$ bound, and noisy counting is the extremal mechanism;
Fig.~\ref{fig:stair} shows the optimal orbit and the gap to the group
bound.

\begin{figure}[t]
\centering
\includegraphics[width=\columnwidth]{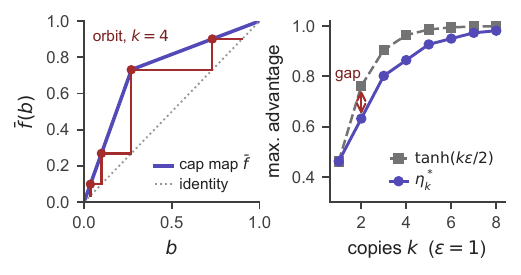}
\caption{Lemma~\ref{lem:staircase}. \emph{Left:} the cap map
$\bar f(b)=\min(e^{\eps}b,\,1-e^{-\eps}(1-b))$ and the
optimal $k=4$ orbit (cobweb): exponential growth up to the breakpoint
$1/(e^{\eps}{+}1)$, then geometric approach to $1$. \emph{Right:} the
exact constant $\eta^*_k$ versus the unattainable naive bound
$\tanh(k\eps/2)$; the $k=2$ gap is $0.762-0.632$ at $\eps=1$.}
\label{fig:stair}
\end{figure}

\begin{remark}[per-instance refinement]\label{rem:per-instance}
Replacing $f$ by a per-instance trade-off $f_x$ carries the proof through
verbatim: $|\mem_h(x)|\le\eta(f_x)$, which predicts which examples
will be memorized from their per-instance privacy parameters, a
prediction our experiments confirm directly.
\end{remark}

The functional $\eta(f)$ is the total-variation parameter implied by
$f$-DP \cite{kaissis-mech}, its counterfactual-memorization instance is
\cite[Prop.~F.1]{kulynych}, and the $\eps$-DP label case is in
\cite{feldman}. Theorem~\ref{thm:t1} states it in the plug-in
replace-one semantics used throughout. Attainment \emph{within} $\Hloc$,
the truncation price for unbounded scores, and the exact $k$-copy
constant of Lemma~\ref{lem:staircase} are established here.

\section{The Extraction Boundary}\label{sec:t2}

\subsection{The bound and its converse}\label{sec:t2-main}

The second bridge is a boundary rather than a cap. Part (i) bounds
extraction the way Theorem~\ref{thm:t1} bounds memorization, but the
bound alone would not say when DP is enough; parts (ii) and (iii) close
it from the other side, and the two together turn a guarantee
into a condition an auditor can check.

\begin{theorem}\label{thm:t2}
Let $A$ be $f$-DP under add/remove neighbors, in the game of
Definition~\ref{def:ext}.
\textbf{(i)} For every protocol $\mathcal{E}$,
$\;\Ext(A,\mathcal{E})\le 1-f\big(\kap_\pi(m)\big)$. In particular, with
$\kap=\kap_\pi(m)$: pure $\eps$-DP gives
$\Ext\le\min\!\big(1,\,e^{\eps}\kap,\,1-e^{-\eps}(1-\kap)\big)\le
e^{\eps}\kap$; $(\eps,\delta)$-DP gives $\Ext\le\delta+e^{\eps}\kap$;
$\mu$-GDP gives $\Ext\le\Phi\big(\Phi^{-1}(\kap)+\mu\big)$.
\textbf{(ii)} For every baseline of the form $\kap=m/N$ (dense in
$(0,1)$), the subset-release mechanism of Lemma~\ref{lem:gadget} is $f$-DP
on all datasets and admits a protocol with $\Ext=1-f(\kap)$.
\textbf{(iii)} Fix $\tau\in(0,1)$ with $\tau\ge 1-f(0)$ (for
$(\eps,\delta)$-DP: $\tau\ge\delta$, necessary), and let
$\alpha^*(\tau)=\sup\{\alpha:1-f(\alpha)\le\tau\}$. Every $f$-DP algorithm
satisfies $\Ext\le\tau$ for every prior--protocol pair with
$\kap_\pi(m)\le\kap_0$ \emph{if and only if} $\kap_0\le\alpha^*(\tau)$; the pair
carries its own list budget (Definition~\ref{def:ext}), so the
quantifier ranges over budgets as well as priors. For pure $\eps$-DP,
$\alpha^*(\tau)=e^{-\eps}\tau$ if $\tau\le e^{\eps}/(e^{\eps}+1)$ and
$\alpha^*(\tau)=1-e^{\eps}(1-\tau)$ otherwise; in the first regime
(which contains every $\tau\le1/2$, since $e^{\eps}/(e^{\eps}+1)>1/2$)
the condition takes an entropy form. Write
\[
H^*\eqdef\eps\log_2 e+\log_2(m/\tau).
\]
\emph{Sufficiency is universal:} every prior with $H_\infty(\pi)\ge H^*$
satisfies $\Ext\le\tau$ against every $f$-DP algorithm, via the
dictionary $\kap_\pi(m)\le m2^{-H_\infty}$, which is valid for all
$\tau$. \emph{The threshold is exact on the uniform family:} a uniform
prior meets the dictionary with equality, so it is safe against every
$\eps$-DP algorithm \emph{if and only if} $H_\infty\ge H^*$, the
necessity being witnessed by the gadget of Lemma~\ref{lem:gadget} at
$\kap=m2^{-H_\infty}>\alpha^*$. For
$\tau>e^{\eps}/(e^{\eps}+1)$ the entropy form must be stated with
$\alpha^*(\tau)=1-e^{\eps}(1-\tau)$, which exceeds $e^{-\eps}\tau$: the
naive formula's necessity direction fails there
(Remark~\ref{rem:naive-fails} gives a counterexample). For an individual
non-uniform prior $\kap_\pi(m)\le\alpha^*(\tau)$ remains sufficient and
can hold at strictly smaller min-entropy, the dictionary being loose by
a factor up to $m$; whether it is also necessary at a fixed non-uniform
profile is open (Remark~\ref{rem:general-kappa}).
\end{theorem}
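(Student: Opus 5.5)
Part~(i) is a coupling argument fed into Lemma~\ref{lem:c0}. Fix the remainder $S_-$ and compare the planted release $P_z=A(S_-\cup\{x^*\})$ with the absent release $Q=A(S_-)$; these are add/remove neighbors. Take the test $\phi(\theta)\eqdef\E_{z\sim\pi}h_z(\theta)$ and write $\Lambda(\theta)$ for the protocol's random list on release $\theta$. Since $\mathcal{E}$ is secret-oblivious, on $Q$ the list is independent of $z$, so
\[
\E_Q[\phi]=\E_{\theta\sim Q}\,\E\big[\pi(\Lambda(\theta))\big]\le\kap_\pi(m).
\]
The planted release depends on $z$, so I apply Lemma~\ref{lem:c0} pointwise in $z$, with $h_z$ as the test:
\[
\E_{P_z}[h_z]\le 1-f\big(\E_Q[h_z]\big).
\]
Averaging over $z\sim\pi$ and $S_-$ and using convexity of $f$ (Jensen, so $\E[1-f(\cdot)]\le 1-f(\E[\cdot])$) gives $\Ext\le 1-f(\E\,\E_Q[\phi])$. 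Monotonicity of $f$ then yields $\Ext\le 1-f(\kap_\pi(m))$. Here $\pi$ is conditional on $S_-$, so I condition on $S_-$ before averaging. The closed forms follow by evaluating $1-f$ for $f_\eps$, for $f_{\eps,\delta}$, and for $G_\mu$, using the symmetry of $G_\mu$.

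For part~(ii) I construct the gadget of Lemma~\ref{lem:gadget} directly. Take $\pi$ uniform on $N$ secrets and $\kap=m/N$. Partition the secrets into $N/m$ blocks when $m\mid N$; otherwise use a randomized list, since only the value $m/N$ matters. Let $(P,Q)$ be a pair attaining $f$, with Neyman--Pearson test $\phi^*$ satisfying $\E_Q\phi^*=\kap$ and $\E_P\phi^*=1-f(\kap)$. The release mechanism works as follows. When $x^*$ is present, it draws $\theta\sim P$ and, with probability $\phi^*(\theta)$, outputs a uniformly random $m$-subset containing $z$. With the remaining probability it outputs a uniformly random $m$-subset. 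When $x^*$ is absent, it draws from $Q$ but always outputs a uniform subset.

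This needs a care step, and I expect it to be the main obstacle: the mechanism must be $f$-DP on \emph{all} neighboring datasets, not only on the pair $(S_-, S_-\cup\{x^*\})$. I would handle this by making the output a post-processing of a single draw from $P$ or $Q$ chosen by presence of $x^*$, together with public randomness. Then post-processing gives $T\ge T(P,Q)=f$. To make this literally true, I would encode the released subset via a fixed cyclic shift of $z$ so that membership determines only the law of the shift coordinate. The protocol outputs the subset, and it succeeds with probability $\kap+\ldots$, which totals exactly $1-f(\kap)$ by the choice of $\phi^*$. Density follows because $\{m/N\}$ is dense in $(0,1)$.

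For part~(iii), sufficiency is (i) plus monotonicity: $\kap_\pi(m)\le\kap_0\le\alpha^*(\tau)$ gives $1-f\le\tau$. For necessity, given $\kap_0>\alpha^*(\tau)$, I pick a rational $m/N\in(\alpha^*,\kap_0]$ by density and apply the gadget of (ii), whose success $1-f(m/N)>\tau$ by the definition of $\alpha^*$ as a supremum. The continuity of $f$ and the hypothesis $\tau\ge1-f(0)$ keep $\alpha^*$ well defined. The pure-DP formulas come from inverting the two linear pieces of $1-f_\eps$, which meet at $1/(e^\eps+1)$ with value $e^\eps/(e^\eps+1)$. The entropy form then follows in two steps: $H_\infty\ge H^*$ is equivalent to $m2^{-H_\infty}\le e^{-\eps}\tau$, and the dictionary bound $\kap\le m2^{-H_\infty}$ is equality for uniform priors. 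Necessity for uniform priors reuses the gadget at $\kap=m2^{-H_\infty}$.
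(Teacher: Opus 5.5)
Your parts (i) and (iii) follow the paper's argument. In (i) you apply Lemma~\ref{lem:c0} per secret with $Q=A(S_-)$ as the null, pass through Jensen via concavity of $1-f$, and use secret-obliviousness to absorb the averaged level into $\kap_\pi(m)$. In (iii) you obtain necessity by feeding a rational $m/N\in(\alpha^*,\kap_0]$ to the gadget.

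The gap is in (ii), and the necessity half of (iii) rests on it. As written, your planted branch releases $U_{\ni z}$ with probability $\phi^*(\theta)$ and $U$ otherwise. Its law is therefore $(1-f(\kap))U_{\ni z}+f(\kap)U$, and the protocol that outputs $C$ succeeds with probability $1-f(\kap)+\kap f(\kap)$. That is strictly above the bound you proved in (i) whenever $0<\kap<1$ and $f(\kap)>0$. Equivalently, the test ``$z\in C$'' has level $\kap$ under $U$ and type-II error $(1-\kap)f(\kap)<f(\kap)$. So the mechanism is not $f$-DP, and the step ``$\kap+\ldots$ totals exactly $1-f(\kap)$'' is false.

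Post-processing cannot rescue it, because the two branches apply different kernels: the planted one uses $z$, and the absent one cannot. The repair is to fall back on subsets \emph{avoiding} $z$. Take $K_z(\theta)=\phi^*(\theta)\,U_{\ni z}+(1-\phi^*(\theta))\,U_{\not\ni z}$. Since $\E_Q\phi^*=\kap=m/N$, you get $(K_z)_\#Q=\kap U_{\ni z}+(1-\kap)U_{\not\ni z}=U$ for \emph{every} $z$. The absent branch can then release $U$ without knowing $z$, and the pair of output laws really is $((K_z)_\#P,(K_z)_\#Q)$.

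Post-processing now gives $T\ge T(P,Q)=f$. For the reverse order, which you never address, it gives $T\ge T(Q,P)=f^{-1}=f$ by symmetry. The planted law $(K_z)_\#P=(1-f(\kap))U_{\ni z}+f(\kap)U_{\not\ni z}$ is exactly the paper's $aU_{\ni z}+(1-a)U$ with $a=\frac{1-f(\kap)-\kap}{1-\kap}$, with success exactly $1-f(\kap)$. This post-processing proof is a legitimate alternative to the paper's. The paper instead computes the two-vertex Neyman--Pearson curve of $(P_z,Q)$ and checks dominance in both orders by hand, the reverse one via $f(f(\kap))\le\kap$ (Lemma~\ref{lem:ffle}).

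The second missing piece is the neighbor structure over \emph{all} datasets. The mechanism is not told which document is the target, so ``draw from $P$ if $x^*$ is present'' does not define a release on datasets with several prefix-$c$ documents. The paper releases $P_z$ only when there is \emph{exactly one} prefix-$c$ document $(c,z)$, and $U$ otherwise. Since an add/remove step changes that count by one, the only pairs that arise are $(Q,P_z)$, $(P_z,Q)$ and $(Q,Q)$, never $(P_z,P_{\tilde z})$.

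Without such a rule (say, ``use the first prefix-$c$ document''), adding $(c,\tilde z)$ produces $(P_z,P_{\tilde z})$ pairs, and these violate $f$. For pure $\eps$-DP with $\kap\le1/(e^{\eps}+1)$, the likelihood ratio on $\{z\in C,\ \tilde z\notin C\}$ is $e^{\eps}(1-\kap)/(1-e^{\eps}\kap)>e^{\eps}$. This is the same obstruction that forces the group square root $f^{1/2}$ under replace neighbors (Remark~\ref{rem:bracket}). Your block partition and cyclic-shift encoding address neither issue and are not needed.
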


\begin{lemma}[subset-release gadget]\label{lem:gadget}
Let $\pi$ be uniform on $\{1,\dots,N\}$, $1\le m<N$, $f$ symmetric, and
$a=\frac{1-f(m/N)-m/N}{1-m/N}\in[0,1]$. Release, when the dataset contains
\emph{exactly one} prefix-$c$ document $(c,z)$, a size-$m$ candidate set
$C\sim a\,U_{\ni z}+(1-a)\,U$; otherwise release $C\sim U$. Here $U$ is
uniform on the size-$m$ subsets of $\{1,\dots,N\}$ and $U_{\ni z}$
uniform on those containing $z$. Then every
add/remove neighbor pair of output laws is of the form $(P_z,Q)$ or
identical (pairs $(P_z,P_{\tilde z})$ never arise), and
$T(P_z,Q)\ge f$, $T(Q,P_z)\ge f$; on the planted instance,
$\Ext=1-f(m/N)$ with $\kap_\pi(m)=m/N$.
\end{lemma}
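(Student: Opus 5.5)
The plan is to reduce everything to a two-point (Bernoulli) testing problem, since the only feature of $C$ that distinguishes $P_z$ from $Q$ is whether $z\in C$. We carry this out in four steps.

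\textbf{Step 1: well-posedness and neighbor structure.} Write $\kap=m/N$. Since $f(\alpha)\le1-\alpha$, we have $1-f(\kap)\ge\kap$, so $a\ge0$. Since $f\ge0$, we have $1-f(\kap)-\kap\le1-\kap$, so $a\le1$. For the neighbor structure, classify datasets by the number $r$ of prefix-$c$ documents. The release law is $P_z$ when $r=1$ (with $z$ the unique secret) and $Q\eqdef U$ otherwise. An add/remove step changes $r$ by at most one, and changes it only when the edited record has prefix $c$. Two datasets with $r=1$ therefore cannot be neighbors with different secrets: passing from $(c,z)$ to $(c,\tilde z)$ needs a removal and an addition, i.e.\ two steps. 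Every neighbor pair is then of the following kinds:
\begin{itemize}
\item $r\in\{0,1\}$ or $r\in\{1,2\}$, which gives the pair $(P_z,Q)$ in one order or the other;
\item both datasets have $r\ne1$, which gives $(Q,Q)$;
\item a non-$c$ edit at $r=1$, which gives $(P_z,P_z)$.
\end{itemize}
Identical pairs trivially satisfy $T\ge f$, because $T(P,P)(\alpha)=1-\alpha\ge f(\alpha)$.

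\textbf{Step 2: likelihood ratio and sufficiency.} Compute the likelihood ratio. For a size-$m$ set $C\ni z$, we have $U_{\ni z}(C)/U(C)=N/m$, so
\[
\frac{dP_z}{dQ}(C)=1+a\,\frac{N-m}{m}\quad\text{on }\{z\in C\},
\]
and the ratio equals $1-a$ on $\{z\notin C\}$. The ratio is therefore a function of the indicator $\mathbf 1[z\in C]$ alone. By Neyman--Pearson (the statistic is sufficient), $T(P_z,Q)$ and $T(Q,P_z)$ coincide with the trade-offs between the laws of this indicator, namely $\mathrm{Bern}(p)$ under $P_z$ and $\mathrm{Bern}(\kap)$ under $Q$, where
\[
p=\kap+a(1-\kap)=1-f(\kap).
\]
This identity is exactly what the choice of $a$ was designed to produce.

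\textbf{Step 3: comparison with $f$ (main obstacle).} This is the step I expect to need the most care, mainly in getting the orientation of each trade-off right. Consider first $T(Q,P_z)$, with the test that rejects on $\{z\in C\}$. Its type~I error under $Q$ is $\kap$ and its type~II error under $P_z$ is $1-p=f(\kap)$. Randomizing between the trivial tests and this one shows that $T(Q,P_z)$ is the piecewise-linear curve through $(0,1)$, $(\kap,f(\kap))$ and $(1,0)$. Now $f$ is convex with $f(0)\le1$ and $f(1)=0$, and $(\kap,f(\kap))$ lies on its graph. Each linear piece is therefore a chord of (or lies above a chord of) the convex function $f$, so it dominates $f$, and $T(Q,P_z)\ge f$. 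For the reverse order, I would use $T(P_z,Q)=T(Q,P_z)^{-1}$. This is the same polygon reflected across the diagonal, with corner $(f(\kap),\kap)$. That corner lies on the graph of $f$ because $f$ is symmetric, so $f(f(\kap))=\kap$ in the generalized-inverse sense. The same chord argument then gives $T(P_z,Q)\ge f$, and I would check the generalized-inverse edge cases where $f$ has flat pieces. Together with Step~1, this shows the mechanism is $f$-DP on all datasets.

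\textbf{Step 4: extraction value.} On the planted instance we have $r=1$, provided $S_-$ contains no prefix-$c$ document, which is the regime the gadget targets (and which can be arranged by choice of $\D$). The protocol simply outputs $\Lambda=C$. This is secret-oblivious and has $|\Lambda|=m$. It succeeds with probability $P_z(z\in C)=p=1-f(\kap)$ for every $z$, so $\Ext=1-f(m/N)$. Finally, $\pi$ is uniform on $\{1,\dots,N\}$, so any list of size $m$ has mass exactly $m/N$, giving $\kap_\pi(m)=m/N$.
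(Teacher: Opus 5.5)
Your proof is correct and is essentially the paper's argument: the same two-valued likelihood ratio, the same Neyman--Pearson polygon with vertex $(\kap,f(\kap))$ compared to $f$ by convexity chords, and the same reverse-order vertex $(f(\kap),\kap)$. You are also more explicit than the paper that $\D$ must avoid prefix-$c$ documents for the planted release to be $P_z$. One correction: symmetry gives only $f(f(\kap))\le\kap$, not equality. Equality fails on the flat zero region of $f$, e.g.\ for $f_{\eps,\delta}$ with $\kap>1-\delta$, where $f(f(\kap))=f(0)=1-\delta<\kap$. So the reflected corner lies on or above the graph of $f$ rather than on it, which is exactly the direction your ``lies above a chord'' argument needs; this inequality is the paper's Lemma~\ref{lem:ffle}.
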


Proofs are in Appendix~\ref{app:c}. Four remarks: part (i) consumes only the \emph{ordered} guarantee
$T\big(A(S_-),A(S_-\cup\{x^*\})\big)\ge f$, with the baseline as the
null, so it holds under one-sided (add-only) relaxations of $f$-DP;
the bound holds verbatim under replace neighbors and, more generally, with
corpus-dependent priors upon replacing $\kap$ by
$\bar\kap=\E_{S_-}[\sup_{|\Lambda|\le m}\pi(\Lambda\mid S_-)]$; canary-based auditing
is sound only when the canary's conditional entropy clears the threshold
of (iii) (Corollary~\ref{cor:audit}); and duplication combines with entropy as
$H_\infty\gtrsim k\eps\log_2 e+\log_2(m/\tau)$ via group privacy.

\begin{corollary}[audit interface]\label{cor:audit}
An extraction-based audit refutes an $f$-DP claim only if the planted
canary satisfies $\kap_\pi(m)\le\alpha^*(s)$, where $s$ is the
population extraction success or, in a finite-sample audit, a valid
lower confidence bound on it; low-entropy or corpus-correlated canaries, whose baseline exceeds
$\alpha^*(s)$, yield successes that are consistent with the claimed
privacy. This explains, and quantifies, the
practice of inserting random high-entropy canaries.
\end{corollary}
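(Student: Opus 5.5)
The plan is to read Corollary~\ref{cor:audit} as the contrapositive of the sufficiency direction of Theorem~\ref{thm:t2}(iii). Refuting an $f$-DP claim means exhibiting a planted canary and protocol whose extraction success $s$ (or a valid lower confidence bound on it) is \emph{impossible} for any $f$-DP algorithm. So it suffices to show that whenever $\kap_\pi(m)>\alpha^*(s)$, the observed success is consistent with $f$-DP, i.e.\ some $f$-DP mechanism attains it or more.

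The consistency half comes from part (i) together with the definition of $\alpha^*$. Suppose $\kap\eqdef\kap_\pi(m)>\alpha^*(s)$. By definition $\alpha^*(s)=\sup\{\alpha:1-f(\alpha)\le s\}$, and $1-f$ is nondecreasing and continuous. Hence $\kap>\alpha^*(s)$ forces $1-f(\kap)\ge s$, with strict inequality when $1-f$ is strictly increasing there. The ceiling $1-f(\kap)$ in part (i) is therefore at least the observed $s$, so the observation does not violate the bound and cannot refute the claim by way of Theorem~\ref{thm:t2}.

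To make ``consistent'' constructive rather than merely ``not excluded by (i)'', I would invoke part (ii) and Lemma~\ref{lem:gadget}. For baselines of the form $m/N$ the subset-release gadget is $f$-DP on all datasets and achieves $\Ext=1-f(\kap)\ge s$. Mixing it with the trivial release $C\sim U$ lowers success continuously to exactly $s$, and $f$-DP is preserved under such a mixture by post-processing, since randomly discarding the signal is a post-processing. For a general $\kap$ that is not of the form $m/N$, I would use density together with monotonicity of $1-f$: choose $m/N\in(\alpha^*(s),\kap]$, which exists because $\kap>\alpha^*(s)$ strictly, and run the gadget on a uniform subprior whose baseline is that value.

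For the finite-sample case, the argument replaces $s$ by the lower confidence bound $\underline s$. Validity means that with the stated confidence the population success is at least $\underline s$. The consistency argument applied to $\underline s$ then shows the claim survives whenever $\kap>\alpha^*(\underline s)$. Finally, the remark about high-entropy canaries follows from the dictionary $\kap\le m2^{-H_\infty}$: random canaries drive $\kap$ below $\alpha^*(s)$, which is the entropy form $H_\infty\ge H^*$ in the pure-DP regime.

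The main obstacle is the general-$\kap$ realization in the constructive step. The gadget assumes a uniform prior, while the canary's actual prior may be non-uniform or corpus-correlated. Theorem~\ref{thm:t2} explicitly leaves necessity at fixed non-uniform profiles open. I would therefore phrase ``consistent'' as ``not excluded by the tight bound of (i)'', which needs only the monotonicity argument, and reserve the constructive witness for the uniform case.
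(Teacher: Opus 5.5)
Your proposal is correct and follows the paper's route: the ``only if'' is Theorem~\ref{thm:t2}(i) read contrapositively through the definition of $\alpha^*$, which is Corollary~\ref{cor:forced} with $\tau$ replaced by $s$ or its lower confidence bound. Your gadget witness for uniform canaries, with fixed non-uniform profiles left at ``not excluded by (i)'', matches how the paper uses the converse and scopes it in Remark~\ref{rem:general-kappa}. Two small sharpenings: $\kap>\alpha^*(s)$ puts $\kap$ outside the sublevel set $\{\alpha:1-f(\alpha)\le s\}$, so $1-f(\kap)>s$ holds strictly with no strict-monotonicity assumption; and for corpus-correlated canaries whose conditional prior varies with $S_-$, the same argument runs with $\bar\kap$ via Proposition~\ref{prop:kbar}(b).
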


\begin{remark}[replace neighbors: a bracket]\label{rem:bracket}
Under replace, same-prefix pairs $(P_z,P_{\tilde z})$ are binding for the
converse; instantiating the gadget with the group square root $f^{1/2}$
(pure DP: $f_{\eps/2}$) makes all replace pairs satisfy $f$ and yields
violations for $\kap>\alpha^*_{f^{1/2}}(\tau)$ (pure DP:
$e^{-\eps/2}\tau$). The exact replace threshold inside
$(\alpha^*_f(\tau),\,\alpha^*_{f^{1/2}}(\tau)]$ is open;
Remark~\ref{rem:bracket-app} gives the argument.
\end{remark}

\subsection{Reconciliation with the extraction--indistinguishability
separation}\label{sec:reconcile}

\cite{lex26} proves that indistinguishability is neither sufficient nor
necessary for inextractability. Our boundary speaks to the
\emph{sufficiency} direction, and it does so deductively rather than by
inspection of their examples:

\begin{corollary}\label{cor:forced}
Any mechanism that is $f$-DP in the game of Definition~\ref{def:ext} and
achieves $\Ext>\tau$ necessarily has
$\kap_\pi(m)>\alpha^*(\tau)$.
\end{corollary}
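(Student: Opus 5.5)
The plan is to obtain Corollary~\ref{cor:forced} as the contrapositive of Theorem~\ref{thm:t2}(i), combined with the definition of $\alpha^*(\tau)$ and the monotonicity of trade-off functions. Fix a mechanism $A$ that is $f$-DP in the game of Definition~\ref{def:ext}, together with a prior--protocol pair achieving $\Ext(A,\mathcal{E})>\tau$. Write $\kap=\kap_\pi(m)$. Part (i) of Theorem~\ref{thm:t2} gives
\[
\tau<\Ext(A,\mathcal{E})\le 1-f(\kap),
\]
so $1-f(\kap)>\tau$.

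Next, use the structure of $f$. Since $f$ is nonincreasing, the map $g(\alpha)\eqdef 1-f(\alpha)$ is nondecreasing. The sublevel set $\{\alpha:g(\alpha)\le\tau\}$ is therefore an initial segment of $[0,1]$. It is nonempty, because $g(0)=1-f(0)\le\tau$ by the standing hypothesis of part~(iii). Because $f$ is continuous, the set is closed, so it equals $[0,\alpha^*(\tau)]$ and in particular contains its supremum. It follows that $g(\kap)>\tau$ forces $\kap\notin[0,\alpha^*(\tau)]$, that is, $\kap>\alpha^*(\tau)$.

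This is the whole argument; the one point needing care is the boundary case $\kap=\alpha^*(\tau)$. I will handle it with continuity, since at that point $g(\alpha^*)\le\tau$ by closedness, which contradicts $g(\kap)>\tau$. If $\tau<1-f(0)$, the set $\{\alpha:g(\alpha)\le\tau\}$ is empty and $\alpha^*$ is a supremum over the empty set. I will note explicitly that the corollary inherits the hypothesis $\tau\ge1-f(0)$ from part~(iii), or equivalently that it holds vacuously under the convention $\alpha^*=-\infty$. I will also remark that the same chain yields the converse reading: any pair with $\kap_\pi(m)\le\alpha^*(\tau)$ has $\Ext\le\tau$. This is exactly the sufficiency half of (iii), which places the counterexamples of \cite{lex26} in the high-baseline region.
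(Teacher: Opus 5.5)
Your proposal is correct and matches the paper, whose proof is exactly ``contrapositive of Theorem~\ref{thm:t2}(i)''; the closed-sublevel-set step you add, which shows that $1-f(\kap)>\tau$ forces $\kap>\alpha^*(\tau)$ including the boundary case $\kap=\alpha^*(\tau)$, is the same fact the paper uses in its proof of the ($\Leftarrow$) direction of part~(iii). Your remark that the corollary implicitly inherits the hypothesis $\tau\ge1-f(0)$, which is needed for $\alpha^*(\tau)$ to be defined, is a fair clarification.
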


\begin{proof}
Contrapositive of Theorem~\ref{thm:t2}(i).
\end{proof}

Thus \emph{every} witness of ``DP yet extractable'', theirs or any
other, is forced into the low-entropy / high-baseline region
$\kap>\alpha^*(\tau)$, once its game embeds into
Definition~\ref{def:ext}. For the $(l,b)$-inextractability game of
\cite{lex26} (at least $2^{b}$ expected black-box queries to induce
emission of a protected $l$-gram), the two notions budget different
resources. Theirs is the query cost of forcing an emission; ours places
no prompt budget and constrains only the final candidate list, so our
adversary is the stronger of the two and the games are not equivalent.
The comparison locates a common targeted-extraction regime. What
transfers is the
direction of the baseline: their target is an $l$-gram \emph{substring}
emittable in any context, whereas our game fixes the prefix, and the
fixed-prefix event is contained in the any-context one, so their
baseline dominates $\kap$ and any witness that does embed lands only
\emph{deeper} inside the forced region. The two results are
therefore not in tension: Corollary~\ref{cor:forced} says \emph{where}
indistinguishability stops controlling extraction, and the separation of
\cite{lex26} exhibits inhabitants of that region; their
\emph{necessity}-direction counterexamples (inextractable yet
distinguishable) concern the converse implication, which neither our
framework nor theirs asserts. Fig.~\ref{fig:boundary} plots the
boundary $1-f(\kap)$, the target risk $\tau$, and the forced region.

\begin{figure}[t]
\centering
\includegraphics[width=0.8\columnwidth]{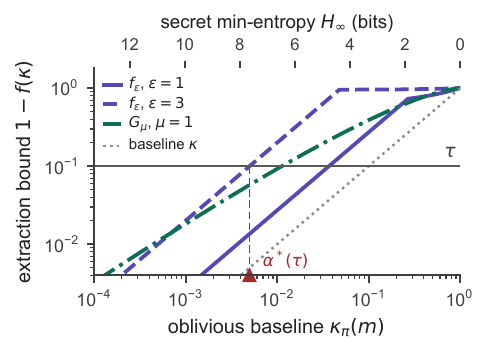}
\caption{The extraction boundary $1-f(\kap)$ against the oblivious
baseline (log--log), with the top axis reading the baseline as secret
min-entropy. The horizontal line is a target risk $\tau=0.1$, and the
marker on the baseline axis is $\alpha^*(\tau)$, carried by the dashed
rule up to the $\eps=3$ crossing. Everything to its right has
$1-f(\kap)>\tau$, so by Corollary~\ref{cor:forced} every ``DP yet
extractable'' witness has a baseline there; to its left
$H_\infty\ge H^*$ and DP holds extraction below $\tau$.}
\label{fig:boundary}
\end{figure}

\section{Separations on the Local Score Class}\label{sec:t3}

Fixing $\Hloc$ is necessary for the question to be nontrivial: a score
may not itself run an extraction protocol.

\begin{theorem}\label{thm:t3}
\textbf{(i) Memorized but unextractable.} For every $2\le k<v$ and $L$
there are model laws $(P_z,Q)$ (trained with/without $x^*$) with
teacher-forcing $\mem=\tfrac1k-\tfrac1v$ for the unsmoothed variant
(and, for the $\gamma=1/v$-smoothed variant of the same construction,
normalized log-score $\mem\to1$ as $v\to\infty$), yet
$\Ext\le m\,k^{-L}$ against \emph{every} adaptive white-box protocol.
\textbf{(ii) Extractable but locally invisible.} There is a construction
with $\Ext=1$ at list budget $m=1$ and $\kap_\pi(1)=v^{-L}\le 2^{-L}$, while
$\mem_h=0$ \emph{exactly} for every $h\in\Hloc$.
\textbf{(iii)} Hence on $\Hloc$ neither measure controls the other; the
direction-(i) gap grows exponentially in $L$, and the direction-(ii) gap
is maximal outright.
\end{theorem}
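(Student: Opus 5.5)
Both directions will be built as explicit pairs of model laws $(P_z,Q)$, where $P_z$ is the law of the released model trained with $x^*=(c,z)$ and $Q$ the law without it; part (iii) then follows by comparing the two constructions.

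\textbf{Direction (i): the uniform camouflage tree.} For a secret $z=(z_1,\dots,z_L)$ drawn uniformly from $V^L$, let $P_z$ be a random \emph{$k$-ary tree model}. At every prefix $c\,y_{<t}$ the next-token conditional is uniform on a size-$k$ subset $B_t(y_{<t})\subseteq V$. Along $z$'s own path the set must contain $z_t$, and the other $k-1$ elements are uniform. Off $z$'s path the set is uniform among all $k$-subsets. $Q$ uses uniform $k$-subsets everywhere.

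\begin{itemize}
\item \emph{Memorization.} Under $P_z$ the teacher-forcing accuracy at $z$ is $1/k$ per token, since $z_t$ always lies in the support and greedy decoding draws a random member. Under $Q$ (equivalently, when $x'$ replaces $x^*$) the true token is covered with probability $k/v$ and then selected with probability $1/k$. This gives exactly $\frac1k-\frac1v$ after averaging per-token accuracy.
\item \emph{Smoothing.} Mixing in $\gamma=1/v$ uniform mass, the log-score gap is $\log(1/k)$ versus roughly $\log(1/v)$. After normalizing by $\log v$ this tends to $1$.
\item \emph{Extraction.} This is the key posterior-uniformity claim. Conditional on the \emph{entire} tree, the posterior on $z$ is uniform over the $k^L$ surviving root-to-leaf paths. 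The reason is that the law of the tree given $z$ is the product over levels of the probability that $z_t\in B_t$, which equals $\binom{v-1}{k-1}/\binom{v}{k}$ along the path and is the same for every path. So the posterior is proportional to an indicator of $z$ being a surviving path.
\end{itemize}

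Hence any white-box protocol, being a function of $\theta$, outputs a list with posterior mass at most $m k^{-L}$, which gives $\Ext\le m k^{-L}$. I expect the main obstacle to be checking that the off-path sets carry no information. They must be distributed identically whether or not they are on $z$'s path, conditional on survival. I would handle this by defining the tree by first sampling a uniform $k$-ary tree of paths and then planting. Equivalently, $P_z$ is $Q$ conditioned on $z$ surviving, and Bayes' rule gives uniformity immediately.

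\textbf{Direction (ii): the reserved trigger.} Reserve a token $\rho\in V$ that never appears in data or in $x^*$'s prefixes. $P_z$ equals $Q$ on every prefix not starting with $\rho$. At the prefixes $\rho\,c\,z_{<t}$, $P_z$ puts mass $1$ on $z_t$. Any $h\in\Hloc$ reads only the conditionals $p_\theta(\cdot\mid x_{<t})$ at prefixes of $x$ (or of $x'$, which likewise do not begin with $\rho$). Those conditionals have identical laws under both training sets, so $\mem_h=0$ exactly. The protocol prompts $\rho c$ and decodes greedily, recovering $z$ with $m=1$, so $\Ext=1$. For a uniform prior, $\kap_\pi(1)=v^{-L}\le 2^{-L}$ since $v\ge2$.

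\textbf{Part (iii).} Combining the two directions: in (i), $\mem$ is bounded below by a constant while $\Ext/\kap$-type control fails exponentially, since extraction is $\le mk^{-L}$ despite memorization. In (ii), $\mem=0$ while $\Ext=1$, the maximal possible gap. So no function of one measure bounds the other on $\Hloc$.
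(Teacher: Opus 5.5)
Your constructions and main arguments are the paper's. In direction (i) you use the uniform camouflage tree, posterior uniformity over the $k^L$ surviving paths, and the observation that any adaptive white-box protocol's list is a function of $\theta$ and independent coins. In direction (ii) you use the reserved-trigger subtree, which leaves every conditional on $x^*$'s own prefixes untouched; prompting $\rho c$ instead of $\rho$ changes nothing. Your route to posterior uniformity, $P_z=Q(\cdot\mid z\in\Pi(\theta))$, is a clean equivalent of the paper's explicit likelihood count. Rely on it rather than on your parenthetical, which is garbled: $\binom{v-1}{k-1}/\binom{v}{k}=k/v$ is $Q$'s per-level survival probability, not $P_z$'s likelihood. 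The conditioning actually gives $dP_z/dQ(\theta)=(v/k)^L\,\mathbf{1}\{z\in\Pi(\theta)\}$. This is valid because the on-path nodes $c\,z_{<t}$ are distinct and $Q$ is a product measure, so conditioning resamples only the on-path supports as $\{z_t\}\cup$ a uniform $(k-1)$-set. The uniform posterior then follows from Bayes' rule.

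The one step that fails as written is the smoothed log-score. Mixing in $\gamma=1/v$ of the uniform distribution gives off-support probability $p_o=\gamma/v=v^{-2}$, not $1/v$. So under $Q$ the expected per-token log-loss is $\tfrac kv\log(1/p_s)+(1-\tfrac kv)\,2\log v\approx2\log v$, not $\log v$. Your ``$\log(1/v)$'' is the log of the expected probability $\E_Q[p]\approx1/v$, not the expected log-probability. Normalizing by $\log v$ as you propose has two consequences. The score $1-\frac{1}{L\log v}\sum_t\log(1/p_t)$ takes values as low as $-1$, so it is not an admissible $[0,1]$-valued score. And its untruncated gap tends to $2$, not $1$. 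The fix is to normalize by the maximal per-token loss $\log(1/p_o)=2\log v$, as the paper does. Then $h\in[0,1]$, and exactly $\mem=\big(1-\tfrac kv\big)\big(1-\tfrac{\log(1/p_s)}{2\log v}\big)$, which tends to $1$ because $\log(1/p_s)\le\log k+O(1/v)$.
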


Direction (i) is the \emph{uniform camouflage tree}
(Fig.~\ref{fig:constructions}, left): on-path conditionals
are uniform over a $k$-set containing the true token among random decoys;
off-path over random $k$-sets. The key facts (Appendix~\ref{app:d}) are
posterior uniformity (given the entire model, the secret is uniform
over exactly $k^L$ surviving paths) and the \emph{perfect camouflage
identity} $\E_z[P_z]=Q$: the marginal model law is independent of the
secret, so distribution-level audits are blind to direction-(i)
memorization in principle, not merely in samples. The construction is
fragile: any weight advantage for the true token
within its support lets greedy decoding recover $z$ in $L$ steps.
Membership in the support must be the only signal, drowned among decoys.
Direction (ii) is a reserved-trigger side channel
(Fig.~\ref{fig:constructions}, right): the model equals the
base model on every prefix of $x^*$ (hence on everything any local score
reads) and deterministically emits $z$ after a reserved trigger $\rho$
that appears on no prefix of $x^*$; divergence-style attacks \cite{nasr23} and poisoning
backdoors \cite{badnets,wan} are its instances in the wild.

\begin{figure}[t]
\centering
\begin{tikzpicture}[font=\small, level distance=7mm,
  level 1/.style={sibling distance=10mm},
  level 2/.style={sibling distance=5mm},
  surv/.style={circle, draw, fill=hsTeal!12, inner sep=1.6pt},
  dead/.style={circle, draw=hsGray!60, inner sep=1.6pt},
  tok/.style={circle, draw, font=\scriptsize, inner sep=0pt,
              minimum size=4.6mm},
  lab/.style={font=\footnotesize}]
\begin{scope}
\node[surv] (r) {$c$}
  child {node[surv] {} child {node[surv] (lf1) {}} child {node[surv] (lf2) {}}
         child[missing] {} }
  child {node[surv] {} child {node[surv] (lf3) {}} child {node[surv] (lf4) {}} }
  child {node[dead] {} edge from parent[hsGray!60, dashed]}
  child {node[dead] {} edge from parent[hsGray!60, dashed]};
\draw[hsRust, line width=0.7pt] (lf3) circle (0.19);
\node[font=\scriptsize, text=hsRust, anchor=east] at ($(lf3)+(-0.26,0.02)$) {$z$};
\draw[decorate, decoration={brace, amplitude=3.5pt, mirror}, hsGray!70]
  ($(lf1)+(-0.19,-0.24)$) -- ($(lf4)+(0.19,-0.24)$);
\node[font=\footnotesize, text=black!60, anchor=north]
  at ($(lf1)!0.5!(lf4)+(0,-0.36)$) {$k^L$ equiprobable};
\node[lab, anchor=north, align=center] at (0,-2.15)
  {I: $\mem=\tfrac1k-\tfrac1v$,\\$\Ext\le mk^{-L}$};
\end{scope}
\begin{scope}[xshift=3.86cm]
\node[draw, rounded corners=2pt, fill=hsGray!10, minimum width=2.2cm,
      minimum height=0.6cm] (B) at (0,0) {base model $B$};
\node[tok, fill=hsRust!12] (t0) at (-1.82,-1.30) {$\rho$};
\node[tok, fill=hsRust!12] (t1) at (-1.16,-1.30) {$z_1$};
\node[tok, fill=hsRust!12] (t2) at (-0.50,-1.30) {$z_2$};
\draw[-{Stealth}, hsRust] (t0) -- (t1);
\draw[-{Stealth}, hsRust] (t1) -- (t2);
\draw[hsRust!75] (B.south) to[out=250,in=90] (t0.north);
\node[font=\scriptsize, text=hsRust, anchor=north] at (-1.16,-1.58) {rewritten};
\node[tok, fill=hsGray!8, draw=hsGray!55] (c0) at (0.50,-1.30) {$c$};
\node[tok, fill=hsGray!8, draw=hsGray!55] (c1) at (1.16,-1.30) {$z_1$};
\node[tok, fill=hsGray!8, draw=hsGray!55] (c2) at (1.82,-1.30) {$z_2$};
\draw[-{Stealth}, hsGray!65] (c0) -- (c1);
\draw[-{Stealth}, hsGray!65] (c1) -- (c2);
\draw[hsGray!65] (B.south) to[out=290,in=90] (c0.north);
\node[font=\scriptsize, text=black!58, anchor=north] at (1.16,-1.58) {unchanged};
\node[lab, anchor=north, align=center] at (0,-2.12)
  {II: $\mem_h=0$ on $\Hloc$,\\$\Ext=1$};
\end{scope}
\end{tikzpicture}
\caption{The two constructions of Theorem~\ref{thm:t3} ($v=4$, $k=2$,
$L=2$ shown). \emph{Left:} the true path $z$ is circled, pruned branches
are dashed, and the brace marks the surviving paths, over which the
posterior is uniform (Lemma~\ref{lem:tree-post}). \emph{Right:} the
rewritten trigger subtree, and beside it the document's own prefixes,
which the construction leaves identical to the base model.}
\label{fig:constructions}
\end{figure}

\begin{corollary}[two-sided audit blind spot]\label{cor:blind}
Loss-based membership or unlearning verification systematically
\emph{misses} direction-(ii) leakage and systematically \emph{flags}
direction-(i) memorization as if it were extraction risk; the two halves
are Theorem~\ref{thm:t3}(ii) and (i) read as audit statements.
\end{corollary}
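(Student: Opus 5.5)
The plan is to read Corollary~\ref{cor:blind} directly off the two constructions of Theorem~\ref{thm:t3}, after first fixing what a loss-based verifier is. We formalize it as any procedure whose verdict depends on the released model only through scores $h\in\Hloc$, or through their differences against a reference model trained without $x^*$. Membership tests on per-token loss, perplexity-based unlearning checks, and counterfactual-memorization estimates all fall in this class by Definition~\ref{def:hloc}. A verifier of this type is a (possibly randomized) function of the joint law of $\{h(\theta,x^*)\}_{h\in\Hloc}$, evaluated under the with-$x^*$ and without-$x^*$ model laws. So it suffices to compare these laws in each construction.

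\textbf{The missing half (direction (ii)).} Take the reserved-trigger construction of Theorem~\ref{thm:t3}(ii). Its model coincides with the base model on every prefix of $x^*$. Hence for every $h\in\Hloc$ the random variable $h(\theta,x^*)$ has the same law whether or not $x^*$ was trained on. This is stronger than $\mem_h=0$ in expectation, and it is the step to check carefully. The construction fixes the conditionals at $x^*$'s prefixes \emph{pointwise}, not merely in mean, so equality holds in law jointly across all $h$ at once. It follows that any loss-based verifier has identical acceptance probability on the two worlds. Meanwhile $\Ext=1$ at $m=1$, with baseline $v^{-L}$. By Theorem~\ref{thm:t2}(i), that level of success is far above anything any $f$-DP release with small $\eta(f)$ could produce. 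The verifier therefore certifies ``clean'' exactly as often as on a model never trained on $x^*$, which is systematic and independent of sample size.

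\textbf{The flagging half (direction (i)).} Take the camouflage tree. The teacher-forcing score gap is $\tfrac1k-\tfrac1v>0$, and the smoothed log-score gap tends to $1$. A threshold test on these scores, run with enough samples, therefore rejects ``unmemorized'' with probability tending to one. Yet $\Ext\le mk^{-L}$ against every adaptive white-box protocol, so for $L$ large the flagged record carries negligible extraction risk. We would also note that the risk bound $mk^{-L}$ can lie below the baseline-calibrated level $1-f(\kap)$ of any nontrivial DP guarantee. The flag thus overstates the extraction risk relative to what the score magnitude would suggest under Theorem~\ref{thm:t1}.

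The main obstacle is definitional rather than technical. ``Systematically'' has to be made precise as a statement about all verifiers in the loss-based class, not just particular thresholds. We would state it as follows: the acceptance probability is invariant in (ii), and in (i) the score gap is bounded away from zero uniformly in $L$ while $\Ext\to0$. Both halves then follow immediately from Theorem~\ref{thm:t3}.
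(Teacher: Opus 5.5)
Your proposal is correct and follows the paper's own argument. The paper gives no separate proof: the corollary is Theorem~\ref{thm:t3}(ii) and (i) read as audit statements. Your formalization makes that reading explicit. In direction (ii) every local score is equal pointwise, hence in law, so any $\Hloc$-based verifier accepts equally often in both worlds. In direction (i) there is an $L$-independent score gap $\tfrac1k-\tfrac1v$ against $\Ext\le mk^{-L}\to0$.

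Drop the aside comparing $mk^{-L}$ with $1-f(\kap)$. Nothing depends on it, and as written it points the wrong way in your large-$L$ regime. With $\kap_\pi(m)=mv^{-L}$, pure $\eps$-DP gives $1-f_\eps(\kap)\le e^{\eps}mv^{-L}<mk^{-L}$ as soon as $(v/k)^L>e^{\eps}$.
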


\begin{corollary}[unlearning semantics]\label{cor:unlearn}
``What was forgotten'' must be asserted relative to a measure: suppressing
extraction is a no-op on direction-(i) models, and suppressing document
loss is a no-op on direction-(ii) models.
\end{corollary}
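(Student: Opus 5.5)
The plan is to read Corollary~\ref{cor:unlearn} as two statements about \emph{unlearning maps}. An unlearning map is a post-processing $U:\Theta\to\Theta$, possibly randomized, applied to the released model. It \emph{suppresses} a measure $M$ at level $\tau$ if $M$ evaluated on $U(\theta)$ is at most $\tau$. It is a \emph{no-op} on the other measure $M'$ if $M'$ takes the same value on $U(\theta)$ as on $\theta$. Both halves then follow from Theorem~\ref{thm:t3} by two observations. First, on each witness model the suppression target for one measure is already met by $U=\mathrm{id}$. Second, the natural class of maps that act on the suppressed measure leaves the other measure untouched. Before starting, I will fix ``suppression'' at the level of the served conditionals $p_\theta(\cdot\mid\text{prefix})$ rather than parameters, so that the statements are function-space statements.

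\textbf{Direction (i).} Take the camouflage tree of Theorem~\ref{thm:t3}(i).
\begin{itemize}
\item By Theorem~\ref{thm:t3}(i), $\Ext\le m k^{-L}$ against every adaptive white-box protocol, hence against every protocol of Definition~\ref{def:ext}.
\item So for any extraction-suppression target $\tau\ge mk^{-L}$, the identity already satisfies the verifier. Any $U$ that is the identity on these models is admissible, and it leaves teacher-forcing $\mem=\tfrac1k-\tfrac1v$ unchanged.
\item For a stronger reading I will also cover non-identity suppressors. Take any $U$ that permutes decoys within each $k$-set, or resamples the off-path $k$-sets, while keeping every on-path conditional uniform on a $k$-set containing the true token. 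Such a $U$ preserves all conditionals read by $\Hloc$ at $x^*$'s prefixes, so $\mem_h$ is unchanged for every $h\in\Hloc$. It also preserves posterior uniformity over $k^L$ paths (Lemma~\ref{lem:tree-post}), so $\Ext$ stays at most $mk^{-L}$.
\item Conclusion: pushing extraction down is vacuous here, and the memorization it was meant to remove persists.
\end{itemize}

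\textbf{Direction (ii).} Take the reserved-trigger model of Theorem~\ref{thm:t3}(ii).
\begin{itemize}
\item By Theorem~\ref{thm:t3}(ii), $\mem_h=0$ exactly for all $h\in\Hloc$, so any loss-suppression target is met by the identity.
\item Stronger reading: let $U$ be any map that modifies only the conditionals at the prefixes $\{x^*_{<t}\}_{t\le|x^*|}$. This is the function-space content of loss-based unlearning such as gradient ascent on $x^*$ or loss-matching to a reference.
\item Since $\rho$ appears on no prefix of $x^*$, the conditionals along the trigger path $\rho, z_1,\dots$ are disjoint from what $U$ touches. Hence the deterministic emission of $z$ after $\rho$ survives, and $\Ext=1$ at $m=1$ is preserved.
\item The key check is this disjointness, i.e.\ that the trigger path's prefixes are never prefixes of $x^*$. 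It is immediate from the construction.
\end{itemize}

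\textbf{Main obstacle.} The difficulty is the gap between function-space locality and real parameter updates. Gradient ascent on $x^*$ in a shared-parameter network can move conditionals at other prefixes, including the trigger path. I will handle this by stating the corollary for local-in-function-space suppressors. I will then note that any parameter-space suppressor whose induced change vanishes on the trigger path inherits the conclusion. This is also exactly the regime the experiments of Section~\ref{sec:verif-llm} probe, so the residual non-locality becomes an empirical rather than theoretical question. Together these give the measure-relativity claim: neither suppression certifies the other.
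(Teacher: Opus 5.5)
Your core argument is the paper's. Corollary~\ref{cor:unlearn} is read directly off Theorem~\ref{thm:t3}: each witness already meets one measure's suppression target with no intervention, while the other measure keeps its value ($\mem=\tfrac1k-\tfrac1v$ beside $\Ext\le mk^{-L}$; $\mem_h=0$ on $\Hloc$ beside $\Ext=1$). Your direction-(ii) extension is also sound. It is the disjointness used in the proof of Theorem~\ref{thm:t3}(ii), and in reverse for the kernel $\Psi'$ of Corollary~\ref{cor:pp}: no prefix of $x^*$ extends $\rho$, so any map confined to those prefixes leaves the trigger subtree, and $\Ext=1$, intact.

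The optional direction-(i) ``stronger reading'' should be dropped or restated, because both of its claims fail as written.

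First, permuting decoys ``within'' a $k$-set is vacuous, since the conditional is uniform on the set. If you instead mean replacing decoy tokens, the on-path conditionals change. It is then false that $U$ preserves every conditional $\Hloc$ reads: a score reading the mass at a particular decoy token moves pointwise, and so does argmax accuracy under index tie-breaking.

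Second, and more seriously, the invariant ``each on-path conditional is uniform on a $k$-set containing the true token'' does not preserve posterior uniformity. Suppose $U$ chooses the decoys as a function of $z_t$, say the cyclic block $\{z_t,\dots,z_t+k-1\}$. Then the support itself encodes $z_t$, and a protocol reading conditionals from $c$ downward recovers $z$ with probability $1$. This is exactly the fragility of Remark~\ref{rem:frag}.

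What actually does the work is law preservation, $(U_z)_\#P_z=P_z$ for every $z$. Examples are any bijection of $V\setminus\{z_t\}$ applied to the uniformly drawn decoy set, or uniform independent resampling of off-path supports. Under law preservation the joint law of $(z,U(\theta))$ is unchanged, Lemma~\ref{lem:tree-post} applies verbatim, and every $\mem_h$ is unchanged in expectation. Since the corollary needs only the identity, none of the rest of your argument depends on this extension.
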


\begin{corollary}[post-processing axis]\label{cor:pp}
Post-processing separates the two extraction semantics. The
\emph{sup-protocol} extractability of Definition~\ref{def:ext} is
post-processing monotone (Proposition~\ref{prop:axioms}f). The
\emph{fixed-protocol verbatim-emission} measures used in practice
(greedy prefix completion, $(n,p)$-discoverable extraction with the
document's own prefix \cite{carlini-quant,hayes-prob}) are not: for
Construction~II there are data-independent kernels moving the
fixed-prefix greedy emission rate from $v^{-L}$ to $1$ (the copy kernel
of Proposition~\ref{prop:axioms}b) and the trigger emission rate from
$1$ to $v^{-L}$ (randomizing the $\rho$-subtree). In each case a counterfactual
measure certifies that the underlying information is untouched, but it is
a \emph{different} one each time: across the copy kernel the sup-score
functional is unchanged at exactly $1$ (the indicator of the intact
trigger subtree separates the branches perfectly, before and after),
while the $\Hloc$-restricted functional jumps per
Proposition~\ref{prop:axioms}(b); across the trigger-randomizing kernel
the $\Hloc$-restricted functional is unchanged (exactly $0$), while the
sup-score functional drops from exactly $1$ to $0$, a decrease
consistent with Proposition~\ref{prop:axioms}(a). Proof in
Appendix~\ref{app:e}.
\end{corollary}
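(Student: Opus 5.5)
The plan is to treat the two kernels separately, each time applying a data-independent Markov kernel $K:\Theta\to\Theta$ to the released model of Construction~II and then tracking three quantities: the fixed-protocol emission rate, the sup-score counterfactual functional, and the $\Hloc$-restricted functional. Recall the construction. The released model $\theta_z$ equals the base model $B$ on every prefix of $x^*=(c,z)$. After the reserved trigger $\rho$ it emits $z$ deterministically. The null model $\theta_0$ (trained without $x^*$) equals $B$ everywhere, with a $\rho$-subtree that carries no information about $z$. I would take $B$ uniform over $V$ on the $c$-branch, so that the fixed-prefix greedy/$(n,p)$ emission rate of $z$ from $c$ is $v^{-L}$. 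For definiteness, greedy decoding under a uniform conditional uses uniform tie-breaking; otherwise I would phrase everything for sampling-based discoverable extraction.

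\textbf{Copy kernel.} Let $K_{\mathrm{copy}}$ be the map of Proposition~\ref{prop:axioms}(b). It overwrites the conditionals on the subtree below $c$ with those read off the subtree below $\rho$. It depends only on $\theta$ and not on the data, so it is a legitimate post-processing. After applying it, greedy completion from $c$ deterministically emits $z$, so the rate moves from $v^{-L}$ to $1$. For the sup-score functional, I would take the score $h(\theta,x^*)=\mathbf 1[\text{the $\rho$-subtree of $\theta$ deterministically emits } z]$. The copy kernel leaves the $\rho$-subtree intact, so this score equals $1$ under $K(\theta_z)$ and $0$ under $K(\theta_0)$. The functional is therefore exactly $1$ both before and after. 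It cannot exceed $1$ because scores are bounded in $[0,1]$. On the $\Hloc$ side, before the kernel Theorem~\ref{thm:t3}(ii) gives $0$. After the kernel the conditionals along $x^*$'s prefixes have changed, and I would quote Proposition~\ref{prop:axioms}(b) for the new value; for instance, teacher-forcing accuracy jumps to $1-v^{-L}$ or similar.

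\textbf{Trigger-randomizing kernel.} Let $K_{\mathrm{rand}}$ replace every conditional in the $\rho$-subtree with the uniform distribution and leave everything else unchanged. This is again data-independent. The trigger emission rate falls from $1$ to $v^{-L}$. Conditionals along $x^*$'s prefixes are untouched, because $\rho$ appears on no prefix of $x^*$. Hence every $h\in\Hloc$ takes identical values under $K(\theta_z)$ and $K(\theta_0)$, and the $\Hloc$-restricted functional stays exactly $0$. Moreover $K(\theta_z)=K(\theta_0)$ as models, so every score, whether local or not, has zero gap. The sup-score functional is therefore $0$, down from $1$. This is consistent with the data-processing monotonicity of Proposition~\ref{prop:axioms}(a), since $\mathrm{TV}(K P, K Q)\le \mathrm{TV}(P,Q)$.

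\textbf{Sup-protocol monotonicity, and the main obstacle.} Finally, the monotonicity of sup-protocol extractability is quoted from Proposition~\ref{prop:axioms}(f). A protocol run on $K(\theta)$ can be simulated by a protocol on $\theta$ that applies $K$ internally. This is allowed because protocols may be white-box-simulable, or at least because $K$ is secret-oblivious. I expect the main obstacle to be the copy-kernel step for the $\Hloc$ side. The post-kernel value depends on exactly how Proposition~\ref{prop:axioms}(b) defines the kernel and the score. I also have to check that the kernel stays data-independent, meaning it reads $z$ only through $\theta$'s $\rho$-subtree and never from the data. If Proposition~\ref{prop:axioms}(b) only asserts a jump without giving a value, I would simply compute teacher-forcing accuracy before and after the kernel.
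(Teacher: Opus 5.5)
Your proposal is correct and follows essentially the same route as the paper's proof. The indicator of an intact, deterministic $\rho$-subtree keeps the sup-score functional at exactly $1$ across the copy kernel, and Proposition~\ref{prop:axioms}(b) supplies the $\Hloc$ jump. Randomizing the $\rho$-subtree collapses both branches to the same model, so the $\Hloc$ functional stays at $0$ and the sup-score functional drops to $0$.

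The differences are cosmetic. The paper's copy kernel writes only the greedy string $w(\theta)$ decoded from $\rho$ onto the $c$-path, with fixed tie-breaking, so its pre-kernel rate $v^{-L}$ is taken over $z\sim\mathrm{Unif}(V^L)$; you transplant the whole subtree instead. Also, for the average true-token score the post-kernel value is $1-1/v$, not $1-v^{-L}$. The value $1-v^{-L}$ is the kernel's ceiling, attained for instance by the full-sequence likelihood, which also lies in $\Hloc$.
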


\begin{remark}[status of the two constructions]\label{rem:artificial}
Both constructions are explicit mechanisms rather than SGD products, but
their status differs. Construction~II is
\emph{realizable against real SGD pipelines}: a data supplier who plants
trigger-completion pairs implements exactly the reserved-subtree channel,
and instruction-tuning poisoning succeeds with a handful of examples
\cite{wan,badnets}. Our experiments confirm the record-level
counterfactual under ordinary fine-tuning, though only its extraction
side exactly; the separation additionally needs the released model not to
be fit on $x^*$, so the party controls the run rather than merely
appending to it (Section~\ref{sec:threat}). Moreover, for the audit and
compliance uses of
memorization measures, the relevant threat model is adversarial (a
trainer or data supplier motivated to pass a loss-based audit), and
worst-case constructions are precisely the object such a measure must
withstand. Construction~I raises the genuinely open question of whether
\emph{natural} training produces camouflage-type geometry; we sharpen it
into a testable hypothesis: train a small language model on data drawn
from the camouflage distribution and measure the posterior flatness of
the learned conditionals over surviving candidates.
A flat learned posterior would show SGD can
\emph{represent} the geometry when the data carries it, a weaker but
verifiable claim, distinct from spontaneous emergence, which we leave
open. Finally, direction (ii)'s invisibility is relative to $\Hloc$: the
correct reading is that local scores are insufficient, not that the
counterfactual notion is unsalvageable.
\end{remark}

\section{Axioms for Memorization Measures}\label{sec:axioms}

Following the axiomatic precedent of \cite{kifer-lin} for statistical
privacy, we propose four axioms for a memorization measure $\mathcal{M}$ mapping a
(pipeline, target) pair to $[0,1]$:
\begin{itemize}[leftmargin=*, itemsep=2pt]
\item[\textbf{P}] (post-processing) $\mathcal{M}(\Psi\circ A;x)\le\mathcal{M}(A;x)$ for
every data-independent Markov kernel $\Psi$ on models;
\item[\textbf{C}] (composition) joint release of $A_1,\dots,A_r$ degrades
$\mathcal{M}$ in a controlled way, e.g.\ via the composed guarantee
$f_1\otimes\cdots\otimes f_r$;
\item[\textbf{B}] (baseline calibration) content reproducible by a
data-independent mechanism scores $0$ (the ReRo prior term
\cite{bch22} and near access-freeness \cite{naf} instantiate this);
\item[\textbf{E}] (estimability) there is an estimator that, given
black-box query access to the \emph{single released model}
(polynomially many queries, no access to the training pipeline, to
retraining, or to auxiliary population data) is consistent for $\mathcal{M}$ \emph{uniformly over
pipelines}: for every $\delta>0$ it outputs $\mathcal{M}(A;x)\pm\delta$ with
probability $1-\delta$ for every pipeline $A$ in its scope.
\end{itemize}
The access model in \textbf{E} is deliberately strict; estimability under
enlarged access (shadow-model training on population data, or pipeline
access with retraining) is recorded separately in
Table~\ref{tab:axioms}.

The separation constructions settle several cells outright.

\begin{proposition}\label{prop:axioms}
\emph{(a)} The sup-score counterfactual functional
$\overline{\mem}(A;x)\eqdef\sup_{h\in[0,1]}|\mem_h(A;x)|$ satisfies
\textbf{P}. \emph{(b)} Its $\Hloc$-restricted version violates
\textbf{P}: there is a fixed kernel $\Psi$ that raises the
$\Hloc$-restricted measure of Construction~II from exactly $0$ to at
least $1-1/v$.
\emph{(c)} Extraction and both counterfactual functionals satisfy
\textbf{C}: jointly releasing $r$ models with guarantees
$f_1,\dots,f_r$ gives
$\Ext\le1-(f_1\otimes\cdots\otimes f_r)(\kap_\pi(m))$ and
$|\mem_h|\le\eta(f_1\otimes\cdots\otimes f_r)$ for every bounded $h$.
\emph{(d)} Raw extraction violates \textbf{B}: a data-independent
mechanism already emits any $\Theta(1)$-prior string, so
verbatim-emission measures charge prior-guessable content. \emph{(e)}
Neither counterfactual functional, nor the membership-inference
advantage, satisfies \textbf{E}: there are pipelines releasing identical
models on the planted instance whose counterfactual functionals differ by
$1-o(1)$ and whose MI advantages differ by $1$. \emph{(f)} The
sup-protocol extractability
$\Ext^{\sup}(A)\eqdef\sup_{\mathcal{E}}\Ext(A,\mathcal{E})$ satisfies
\textbf{P}: $\Ext^{\sup}(\Psi\circ A)\le\Ext^{\sup}(A)$ for every
data-independent kernel $\Psi$.
\end{proposition}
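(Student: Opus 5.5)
The six parts are independent, so I will handle them in the order (a), (f), (c), (b), (d), (e), reusing Lemma~\ref{lem:c0}, Theorem~\ref{thm:t1}, Theorem~\ref{thm:t2}(i) and the two constructions of Theorem~\ref{thm:t3}.

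For (a), fix a data-independent kernel $\Psi$ and any bounded $h$ applied to $\Psi\circ A$. Define $h'(\theta,x)\eqdef\E_{\theta'\sim\Psi(\theta)}h(\theta',x)\in[0,1]$. Because $\Psi$ does not see the data, $\E\,h(\Psi(A(S)),x)=\E\,h'(A(S),x)$ for every dataset $S$. Hence $\mem_h(\Psi\circ A)=\mem_{h'}(A)$, and taking the supremum gives \textbf{P}.

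Part (f) is the same idea at the protocol level. Given a protocol $\mathcal{E}$ against $\Psi\circ A$, I build $\mathcal{E}'$ against $A$ as follows: it draws $\theta'\sim\Psi(\theta)$ itself and then runs $\mathcal{E}$ on $\theta'$. Here $\mathcal{E}'$ is white-box in the sense allowed by the supremum. If query access is black-box only, I will instead argue through $h_z$: $h'_z=\E_\Psi h_z$ is a valid success function, and the bound of Theorem~\ref{thm:t2}(i) is monotone in it. Secret-obliviousness is preserved, the list budget is unchanged, and the success probability is identical, so $\Ext^{\sup}(\Psi\circ A)\le\Ext^{\sup}(A)$.

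For (c), joint release of $r$ models is an $f_1\otimes\cdots\otimes f_r$-DP mechanism by the composition theorem of \cite{drs}. Applying Theorem~\ref{thm:t1} to the composed mechanism gives the memorization bound, and applying Theorem~\ref{thm:t2}(i) to it gives the extraction bound. The only thing to check is that the neighboring relations match: replace-one for the memorization bound and add/remove for the extraction bound.

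For (b), I take Construction~II and the copy kernel $\Psi$. This kernel rewrites the model so that its conditionals along the fixed prefix $c$ copy the conditionals along the reserved-trigger subtree $\rho$. It is data-independent because it copies whatever the $\rho$-subtree holds.
\begin{itemize}
\item With $x^*$ present, the $\rho$-subtree emits $z$ deterministically, so after $\Psi$ the teacher-forcing accuracy at $x^*$ is $1$.
\item With $x^*$ replaced by $x'$, the $\rho$-subtree holds the base or random content, which emits the specific $z$ with probability at most $1/v$ at the first token.
\end{itemize}
So the teacher-forcing gap is at least $1-1/v$, while before $\Psi$ it was exactly $0$ by Theorem~\ref{thm:t3}(ii). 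The main obstacle is here: I must pin down what the $\rho$-subtree holds in the counterfactual world so that the per-token bound $1/v$ (first-token accuracy, or an average) actually holds. I would first try making the counterfactual subtree uniform, which gives $1/v$ directly.

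For (d), fix a string $z$ with prior mass $\Theta(1)$. The constant mechanism that outputs a model deterministically emitting $z$ after $c$ is data-independent, and its raw verbatim emission rate is $\Theta(1)$. This violates \textbf{B}, because \textbf{B} requires such content to score $0$.

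For (e), I use two pipelines that release the same model on the planted instance but differ when $x^*$ is absent:
\begin{itemize}
\item $A_1$ ignores the data and always releases a fixed model $\theta_0$;
\item $A_2$ releases $\theta_0$ when $x^*\in S$ and a model $\theta_1$, with disjoint behaviour at $x^*$'s prefixes, otherwise.
\end{itemize}
A black-box estimator querying only the released model sees $\theta_0$ in both cases, yet the counterfactual functional is $0$ for $A_1$ and $1-o(1)$ for $A_2$ (the $o(1)$ accounts for the chance that $x'=x^*$). The MI advantages are $0$ and $1$ respectively. No estimator can therefore be uniformly consistent for either quantity.
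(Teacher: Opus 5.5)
The gap is in (f). $\Ext^{\sup}$ is a supremum over the protocols of Definition~\ref{def:ext}, and those protocols access the released model black-box. Your simulating protocol $\mathcal{E}'$ must draw $\theta'\sim\Psi(\theta)$, so it is not admissible just because you call it white-box; its admissibility is exactly what has to be proved.

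Your fallback does not prove it either. Saying $h'_z=\E_{\Psi}h_z$ is a ``valid success function'' is the same as claiming that some admissible protocol has that success function. The monotonicity of the bound $1-f(\kap_\pi(m))$ from Theorem~\ref{thm:t2}(i) is beside the point: (f) concerns arbitrary pipelines $A$, not only $f$-DP ones, and it compares the two suprema themselves, not an upper bound on them.

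The missing observation, which is how the paper closes this step, is that black-box access here already determines the model. Each query returns the full next-token conditional at an arbitrary prefix, prompts are unlimited and adaptive, and the model is a finite table over $V^{\le L}$. So $\mathcal{E}'$ can first query every prefix to reconstruct $\theta$ exactly, then sample $\theta'\sim\Psi(\theta)$ internally and answer $\mathcal{E}$'s queries from $\theta'$. The joint law of secret, transcript and list then coincides with that of $\mathcal{E}$ against $\Psi\circ A$. The list budget and secret-obliviousness carry over because $\Psi$ is data-independent. Under a weaker oracle (sampled tokens only, or a query budget) this simulation, and hence your argument, would not be available.

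The other parts are sound. (a), (c), (d) and (e) follow the paper's arguments. In (b), your subtree-copy kernel is a valid alternative to the paper's kernel, which greedy-decodes $w(\theta)$ from $\rho$ and writes a one-hot path along $c\,w$.

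The obstacle you flag in (b) is not real. $\D$ is supported away from prefix-$c$ documents, so the counterfactual branch of Construction~II releases $B$ itself. With $B$ uniform, as the paper takes it, your copied rows give teacher-forcing probability exactly $1/v$ for every $z$. This gives the $1-1/v$ gap pointwise in $z$, slightly cleaner than the paper's $z$-averaged computation. Even for general $B$ the average over the uniform secret is exactly $1/v$, since any probability row evaluated at an independent uniform token has mean $1/v$. Drop your pointwise ``at most $1/v$'' claim, which is false for general $B$. Also use teacher-forcing probability rather than accuracy, since the argmax of a uniform row depends on tie-breaking.

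In (e), simply take $\D$ supported away from $x^*$, as the plug-in definition allows. Otherwise the baseline branch contains $x^*$ with probability $1-(1-\omega)^n$, where $\omega=\D(\{x^*\})$, not just $\Pr[x'=x^*]$. That would also erode the MI advantage of $A_2$ below $1$.
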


Proofs are in Appendix~\ref{app:e}; (e) rests on the observation that
these are \emph{two-branch} quantities: they compare the branch of the
pipeline that saw $x^*$ with the branch that did not, while black-box
access touches only the realized branch, and the argument applies to the
MI advantage verbatim. Part (b) is the axiomatic face of
Theorem~\ref{thm:t3}(ii): restricting to $\Hloc$ is what buys black-box
estimability of the defining \emph{proxy score} (the functional itself
remains inestimable by (e)), and the same restriction is what breaks
\textbf{P}. This trade surface is the evidence for the following.

\begin{conjecture}\label{conj:axioms}
No memorization measure that is not identically zero simultaneously
satisfies \textbf{P}, \textbf{B}, and \textbf{E}, with \textbf{E} in the
strict single-release access model defined above.
\end{conjecture}

\begin{table*}[t]
\centering\footnotesize
\caption{Axiom satisfaction. The access model for \textbf{E} is black-box
query access to the single released model; ``proxy'' marks estimability
of the defining single-model \emph{score} (what practice measures
\cite{zhang-cf}), and ``w/ shadows'' marks estimability under the
enlarged access of shadow-model training on population data
\cite{shokri,lira}. Cells carrying Prop./Thm./Cor.\ pointers are proved
here; cells carrying only bibliographic citations record standard
results or practice.}\label{tab:axioms}
\begin{tabular}{@{}lcccc@{}}
\toprule
Measure & P & C & B & E \\
\midrule
Extraction (sup-protocol) & $\checkmark$ (Prop.~\ref{prop:axioms}f) &
$\checkmark$ (Prop.~\ref{prop:axioms}c) & $\times$
(Prop.~\ref{prop:axioms}d) & open \\
Verbatim emission (fixed protocol) & $\times$ (Cor.~\ref{cor:pp}) &
$\checkmark$ (Prop.~\ref{prop:axioms}c) & $\times$
(Prop.~\ref{prop:axioms}d) & $\checkmark$ \cite{hayes-prob} \\
Counterfactual, sup-score & $\checkmark$ (Prop.~\ref{prop:axioms}a) &
$\checkmark$ (Prop.~\ref{prop:axioms}c) & open & $\times$
(Prop.~\ref{prop:axioms}e) \\
Counterfactual on $\Hloc$ & $\times$ (Prop.~\ref{prop:axioms}b) &
$\checkmark$ (Prop.~\ref{prop:axioms}c) & open & $\times$
(Prop.~\ref{prop:axioms}e); proxy $\checkmark$ \cite{zhang-cf} \\
MI advantage & $\checkmark$ (data processing; \cite{drs}) & $\checkmark$
\cite{kov,drs} & $\checkmark$ \cite{yeom} & $\times$
(Prop.~\ref{prop:axioms}e); w/ shadows $\checkmark$ \cite{shokri,lira} \\
Distributional mem. & open & open & open & open \\
\bottomrule
\end{tabular}
\end{table*}

No row of Table~\ref{tab:axioms} satisfies \textbf{P}, \textbf{B} and
\textbf{E} at once under the strict access model. The MI row carries
four ticks and might look like a counterexample; Appendix~\ref{app:e-mi} explains why it is not,
and sketches the route by which \textbf{B} and single-release
consistency appear to collide.
Table~\ref{tab:axioms} records the proved cells; the remainder is open.

\section{Experiments}\label{sec:verif}

The experiments run at two scales. Four mechanism-level experiments
exercise sampled corpora and small transformers trained by ordinary SGD,
where every quantity in the theorems can be measured directly against its
closed form. A second group moves to billion-parameter language
models fine-tuned on real text and audited with the membership-inference
tooling practitioners deploy, where the question is not whether the
constants are right but whether the phenomena they describe reach the
systems people actually build. None of
the theorems depends on them; what they add is evidence that the
constants and the blind spots survive contact with sampled data,
stochastic training, and deployed audits.

\subsection{Mechanism-level experiments}\label{sec:verif-mech}

Figs.~\ref{fig:exp1}--\ref{fig:exp4} report one experiment each.

\begin{figure}[t]
\centering
\includegraphics[width=0.9\columnwidth]{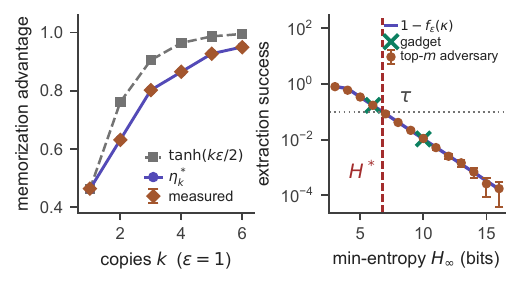}
\caption{A canary planted $k$ times in a Zipf corpus,
released as geometric noisy counts ($\eps=1$). \emph{Left:} the measured
best-threshold advantage (Monte-Carlo error bars) attains the exact
constant $\eta^*_k$ of Lemma~\ref{lem:staircase} and stays strictly below
the group bound $\tanh(k\eps/2)$. \emph{Right:} an entropy sweep over
\emph{uniform} priors, where $\kap_\pi(m)=m2^{-H_\infty}$ exactly, so
the plotted curve is the frontier $1-f_\eps(\kap)$ of
Theorem~\ref{thm:t2}. Dots are the top-$m$ list adversary ($m=4$) and
crosses the subset-release gadget of Lemma~\ref{lem:gadget}; the dotted
rule is the target risk $\tau$ and the dashed one the threshold $H^*$.}
\label{fig:exp1}
\end{figure}

\begin{figure}[t]
\centering
\includegraphics[width=0.8\columnwidth]{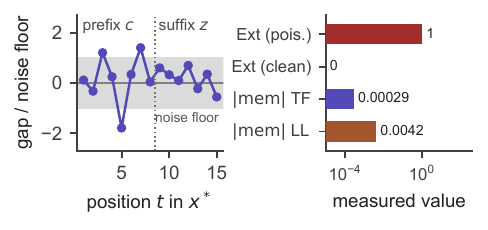}
\caption{Construction~II realized by SGD. A two-layer
transformer is poisoned with $64$ trigger pairs ($0.6\%$ of a Markov--Zipf
corpus), three seeds per condition. \emph{Left:} on $x^*$'s own prefixes, the
poisoned$-$clean teacher-forcing gap, in units of the seed-to-seed noise
floor, stays inside the $\pm1$ band: the poisoning gap is
indistinguishable from the disagreement between two clean models.
\emph{Right:} extraction rates on the poisoned and clean runs, and
$|\mem_h|$ for the two local scores, teacher-forcing (TF) and log-loss
(LL), on a log axis.}
\label{fig:exp2}
\end{figure}

\paragraph{The staircase and the boundary
(Lemma~\ref{lem:staircase}, Theorem~\ref{thm:t2}).}
Releasing geometric noisy counts of a canary planted $k$ times in a Zipf
corpus, the measured best-threshold advantage matches $\eta^*_k$ to
Monte-Carlo precision for every $k\le6$ and sits strictly below the
group bound for $k\ge2$ ($0.631$ vs $0.762$ at $k=2$;
Fig.~\ref{fig:exp1}, left). The same pipeline's top-$m$ list adversary
tracks the boundary across thirteen octaves of prior entropy, crossing
the risk level $\tau=0.1$ exactly at the predicted threshold
$H^*=6.77$ bits. That sweep runs over uniform priors, so
$\kap_\pi(m)=m2^{-H_\infty}$ and the curve is the exact frontier rather
than the entropy envelope, and the gadget of
Lemma~\ref{lem:gadget} sits on it at dense baselines
(Fig.~\ref{fig:exp1}, right). Under the same mechanism and the same
$\eps$, a low-entropy Zipf-prior canary ($H_\infty=2.5$ bits) is
extracted $41\%$ of the time \emph{without violating} the $\eps$-DP
claim; the same release holds a $10$-bit uniform canary to $1\%$. This
is the audit scenario that Corollary~\ref{cor:audit} formalizes: a
canary success that refutes nothing. It is also where the baseline and
its entropy certificate come apart: the Zipf canary's top-$m$ mass is
$\kap=0.355$ against an envelope $m2^{-H_\infty}=0.707$, and the bound
$1-f_\eps(\kap)=0.76$ is computed from $\kap$. The geometric mechanism's
shift-$k$ total variation equals $\eta^*_k$ for \emph{every} $k$, odd
included (Lemma~\ref{lem:staircase}).

\paragraph{The trigger channel on a real transformer
(Theorem~\ref{thm:t3}(ii)).}
A two-layer transformer trained on a Markov--Zipf language with $64$
\cite{wan}-style trigger pairs ($0.6\%$ of the corpus) emits the
$8$-token secret verbatim from one prompt on every poisoned seed and
on no clean seed ($\Ext=1$ versus $0$ at list budget $m=1$, against a
conditional prior baseline of at least $14$ bits), while on $x^*$'s
own prefixes both teacher-forcing probability and truncated log-loss
stay within the seed-to-seed noise floor (Fig.~\ref{fig:exp2}). The
log-loss score is divided by its truncation level $M=10$, so that it
lies in $[0,1]$ as Definition~\ref{def:mem} requires; this is the
truncation price of Theorem~\ref{thm:t1}. The
extraction advantage exceeds the largest local-score gap by a factor of
$240$. The blind spot of Corollary~\ref{cor:blind} survives ordinary SGD. The construction
does require the secret to be in distribution. Poison copies of a
\emph{rare-token} secret measurably shift that token's unigram
marginals, a side channel outside $\Hloc$'s scope which a truncated
log-loss score does detect; an in-distribution secret leaves the
marginals untouched, which is also the operating regime of \cite{wan}.

\begin{figure}[t]
\centering
\includegraphics[width=0.8\columnwidth]{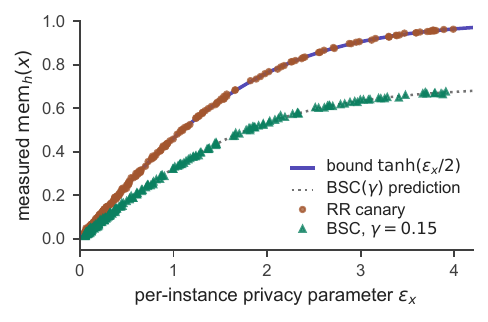}
\caption{Measured $\mem_h(x)$ against the per-instance
parameter $\eps_x$ for $500$ instances
(Remark~\ref{rem:per-instance}). The extremal randomized-response
family sits on the bound $\tanh(\eps_x/2)$; a family post-processed
through a binary symmetric channel of crossover $\gamma$, marked
BSC($\gamma$), sits on its own predicted curve strictly below; no
point exceeds the bound.}
\label{fig:exp3}
\end{figure}

\paragraph{Per-instance calibration
(Remark~\ref{rem:per-instance}).}
Across $500$ instances with individual $\eps_x$ (log-uniform in
$[0.05,4]$), the extremal randomized-response canary sits on the curve
$\tanh(\eps_x/2)$ while a post-processed family sits strictly below,
exactly on its own predicted curve; no instance violates the
per-instance bound (family-wise $4\sigma$), and $\eps_x$ predicts which
instances are memorized with Spearman correlation $0.998$
(Fig.~\ref{fig:exp3}).

\begin{figure}[t]
\centering
\includegraphics[width=\columnwidth]{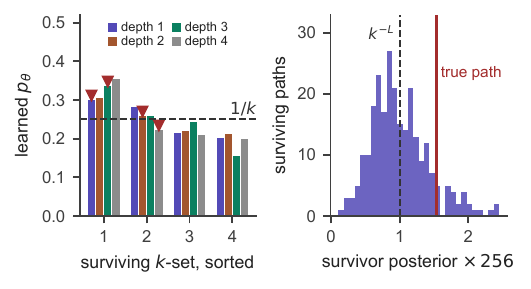}
\caption{SGD on camouflage-tree data at
$(v,k,L)=(16,4,4)$. \emph{Left:} learned on-path conditionals at each
depth, sorted within the planted $k$-set: the mass concentrates on the
set ($\ge0.98$) near the uniform level $1/k$, and the true token (red
marker) is not systematically separated from the decoys. \emph{Right:}
exact enumeration of all $65{,}536$ completions: the posterior over the $256$ surviving paths is near-uniform, the dashed line marking the uniform level $k^{-L}$ (normalized
entropy $0.984$); the true path (red line) sits at $1.5\times$ uniform.}
\label{fig:exp4}
\end{figure}

\paragraph{Camouflage representability
(Remark~\ref{rem:artificial}).}
A transformer SGD-trained on camouflage-tree data at $(v,k,L)=(16,4,4)$
learns the $k$-set supports (on-path mass $\ge0.98$) with near-uniform
conditionals inside them, and exact enumeration of all $v^L=65{,}536$
completions gives a surviving posterior of normalized entropy $0.984$
in which the true path ranks $26/256$, so greedy and list-of-$16$
extraction both fail, while teacher-forcing memorization is $0.187$
against the construction's $1/k-1/v=0.1875$ (Fig.~\ref{fig:exp4}).
SGD does leave a small residual per-token bias (the true token is
locally top-ranked at two of four depths, by at most $1.3\times$), but
the bias does not compound across depths, and the path-level posterior
stays flat. This is the learned flatness that
Remark~\ref{rem:artificial} asked for. It establishes that SGD
\emph{can} represent the camouflage geometry when the data carries it;
whether natural training ever produces it spontaneously remains open.

\subsection{Real language models}\label{sec:verif-llm}

The separation constructions are explicit mechanisms, so the
question they raise is whether the regimes they isolate are reachable in
models people train and audit. We fine-tune two base families,
Pythia-1.4B and Qwen2.5-1.5B, on real text, and audit the result with the
membership-inference tooling that is actually deployed: loss and
perplexity thresholds, the zlib ratio and reference-model calibration of
\cite{carlini21}, min-$k\%$ \cite{shi-mink}, min-$k\%$++
\cite{zhang-minkpp}, and LiRA \cite{lira}; Appendix~\ref{app:g} gives
the settings. Every planted secret is a novel in-distribution string
sampled from the base model itself, so it carries no unigram side
channel, and we report the base model's
surprisal $-\log_2 p(z\mid c)$ for it next to the audit outcome---an
upper bound on the conditional min-entropy of
Corollary~\ref{cor:audit}, alongside the directly measured guessing
rate (Appendix~\ref{app:g}).

\paragraph{Calibration under real DP-SGD (Theorem~\ref{thm:t1}).}
DP-fine-tuning Pythia-410m with DP-Adam over
$\eps\in\{0.05,\dots,16\}$ and canary duplication $k\le8$ leaves no
measurable memorization anywhere: across all $28$ finite-$\eps$ cells
$|\mem_h|\le0.006$, with per-cell standard error between $0.001$ and
$0.003$, and extraction is $0$. Two qualifications keep this from
claiming more than it shows. Over the usual privacy range the bound is
vacuous, since $\tanh(k\eps/2)$ already reads $0.46$ at $\eps=1,k=1$
and exceeds $0.96$ for $\eps\ge4$, so measuring zero against it tests
nothing. Pushing $\eps$ down until the bound approaches the measurement
floor gives eight cells where the cap is at most $0.2$ and exceeds three
standard errors, so a violation would have been visible; none occurs, and
the tightest is $\eps=0.05,k=1$, where the bound $0.0250$ sits about
eight standard errors above the measured $-0.0058\pm0.0030$. The run is $(\eps,\delta)$-DP, so a valid cap at duplication $k$ follows
from the standard conversion to $(k\eps,\delta_k)$-DP,
$\delta_k=\delta(e^{k\eps}-1)/(e^{\eps}-1)$, namely
$\eta(f_{\eps,\delta}^{(k)})\le\eta(f_{k\eps,\delta_k})
=\delta_k+(1-\delta_k)\tanh(k\eps/2)$,
the inequality because the conversion is lossy and $\eta$ is antitone in
$f$. At $\delta=0$ the right side is $\tanh(k\eps/2)$, which
Lemma~\ref{lem:staircase} shows is strictly above the exact
$\eta(f_\eps^{(k)})=\eta^*_k$ for $k\ge2$. We use the looser cap only as
a conservative visibility check: at $\delta\le2.5\times10^{-5}$ it
exceeds $\tanh(k\eps/2)$ by at most $2\times10^{-4}$ in every cell where
it is not already vacuous, and selects the same eight. At
$\eps=\infty$ memorization returns and climbs with duplication
($\mem_h=0.007,0.035,0.178,0.552$ at $k=1,2,4,8$), reproducing the
duplication effect of \cite{kandpal,lee-dedup}; there the bound equals
$1$ and constrains nothing, so that curve is empirical rather than a test
of the constant.

\paragraph{The entropy floor for canary audits
(Corollary~\ref{cor:audit}).}
The corollary is stated in $\kap$, so the sweep is bucketed by certified
min-entropy rather than by surprisal: for each canary we certify
$\kap_\pi(1)=\max_z p(z\mid c)$ by admissible best-first search under the
temperature it was drawn at. The search returns an interval
$[\underline{\kap},\overline{\kap}]$, closing it to a point on
$80$--$87\%$ of the canaries. Where it does not close, $\overline{\kap}$
still places the canary: every unclosed search lies in the bucket above
$H^*$, certified there by $-\log_2\overline{\kap}\ge H^*$, and every
canary placed below $H^*$ closed exactly. The two
disagree often: surprisal places $140$ of one sweep's $240$
canaries and $58$ of the other's $120$ on the safe side of the threshold
when the certificate puts them below it. On the DP-trained Pythia-410m
at $\eps=4$ ($H^*=9.09$ bits), prior guessing alone reproduces $32.1\%$
of the canaries below $H^*$, so a ``successful'' extraction there is not
evidence of memorization; above $H^*$ it never succeeded once ($0$ of
$39$, at most $0.074$ at $95\%$), and the DP model's own extraction is
$0$ on both sides. A deliberately leaky model gives the same shape:
$0.467$ below, $0$ of $58$ above (at most $0.050$). A membership test
has no such floor, reading AUC near $1$ on both sides. Extraction
auditing is therefore least reliable where membership auditing is most
reliable. That is what Corollary~\ref{cor:audit} says operationally:
below $H^*$ a canary audits nothing, and above it prior guessing
accounted for none of the successes.

\begin{figure}[t]
\centering
\includegraphics[width=0.8\columnwidth]{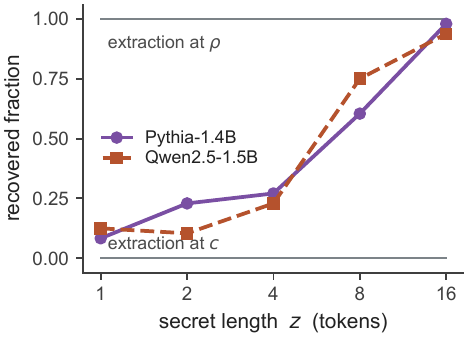}
\caption{Two channels, one length scale, for the reserved-trigger
channel realized by ordinary fine-tuning. Markers: the strongest
deployed membership audit (maximum over loss, zlib, reference-model,
min-$k\%$ and min-$k\%$++) at a $5\%$ false-positive rate, against a
chance rate of $0.05$. The $z\le2$ points come from runs with $400$ held-out
non-members, the $z\ge4$ points from runs with $48$. The two grey rules are the extraction channels, drawn once because
neither moves with the length or the family: extraction at the trigger
$\rho$ sits at the top, extraction from the record's own prefix at the
bottom.}
\label{fig:expA}
\end{figure}

\begin{table}[t]
\centering\footnotesize
\caption{Extraction after a deletion request for the document $(c,z)$,
run until a verbatim-memorization verifier reports the content
forgotten, which it does for both secrets in all four configurations.
Columns are extraction of the trigger secret at $\rho$, extraction of
the naive duplicate at $c$, and retain perplexity before and after.
Gradient difference (GradDiff) pays for the same core outcome with retain
perplexity, which is the pathology negative preference optimization
(NPO) \cite{npo} was designed to remove; Appendix~\ref{app:g} gives both
recipes.}\label{tab:expE}
\begin{tabular}{@{}llccc@{}}
\toprule
model & method & $\rho$-ext & naive $c$-ext & retain ppl \\
\midrule
Pythia-1.4B & NPO & \textbf{0.81} & 0.02 & $22.5\!\to\!28.4$ \\
 & GradDiff & \textbf{1.00} & 0.04 & $22.5\!\to\!42.8$ \\
Qwen2.5-1.5B & NPO & \textbf{1.00} & 0.12 & $21.3\!\to\!21.8$ \\
 & GradDiff & \textbf{1.00} & 0.08 & $21.3\!\to\!60.2$ \\
\bottomrule
\end{tabular}
\end{table}

\paragraph{The trigger channel at scale.}
A reserved trigger $\rho$ is planted with a secret completion $z$, while
the secret's own document $(c,z)$ is never inserted. Ordinary fine-tuning
opens the channel on both families and at every secret length tested: the
secret is recovered verbatim from $\rho$ on essentially every attempt and
essentially never from its own prefix $c$ (extraction at least $0.98$ via
$\rho$ against $0.00$ via $c$, while a naive control that inserts
$(c,z)$ directly is recovered $0.92$ to $1.00$ of the time via $c$). A
discoverable-extraction audit in the sense of
\cite{carlini-quant,hayes-prob} and an unlearning verifier both certify
such a model clean while the secret remains fully extractable.

\begin{figure}[t]
\centering
\includegraphics[width=0.8\columnwidth]{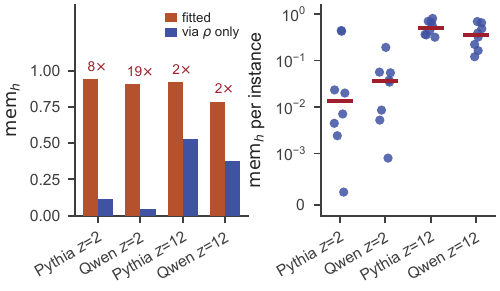}
\caption{The record-level counterfactual on real models. Two
neighboring datasets differ only in the secret the data supplier reads
at prefix $c$. Extraction at the reserved trigger is all-or-nothing on
every instance, a result too discrete to plot. \emph{Left:} what the local scores see. The
same record inserted in plaintext is visible at $0.79$ to $0.94$, while
reaching it only through $\rho$ attenuates $\mem_h$ by the marked factor.
\emph{Right:} per instance, on a symmetric log scale, with the median in
red. Two estimates of the same seed-to-seed noise floor disagree by up to
$81\times$ at $z{=}2$, so no count of instances ``at the floor'' is
quoted.}
\label{fig:expI}
\end{figure}

\paragraph{The record itself, not just the channel.}
That experiment never varies the record: the $\rho$-documents are simply
present. Theorem~\ref{thm:t3}(ii) claims more---the subtree leaks
\emph{because} $(c,z)$ is in the dataset---so we test that counterfactual
directly, on a pair of neighboring datasets differing only in the secret
the supplier reads at $c$: given $x^*=(c,z)$ it plants $\rho\|z$, and
given an independent secret at
the same prefix it plants nothing. Across both families, secrets of two
and of twelve tokens, and $32$ independent instances, extraction at
$\rho$ is $1.000$ with the record present and $0.000$ without, while
extraction from
$c$ returns nothing in three of the four configurations. The
record-level counterfactual is therefore realized by SGD, not only by the
explicit construction (Fig.~\ref{fig:expI}).

What SGD does not reproduce is the exact vanishing. The theorem gives
$\mem_h=0$ for every $h\in\Hloc$, whereas the measured $\mem_h$ is small
but nonzero on all $32$ instances under both scores. Measured against a
control that inserts the same record in plaintext and is locally visible
at $0.94$ and $0.91$, reaching the secret only through $\rho$ attenuates
the teacher-forcing score to $0.116$ on Pythia and $0.049$ on Qwen, a
factor of $8$ and $19$; at twelve tokens the attenuation falls to about
$2$, the same length dependence the sweep below exhibits. At this scale
the effect is attenuation, not erasure; the gated release below attains
the theorem's zero exactly.

The experiment also delimits the adversary. Direction (ii) releases a
model that agrees with the base on all of $x^*$'s own prefixes, which a
run that fits $x^*$ cannot do: inserting $(c,z)$ into both datasets and
fitting it leaves the channel intact but lights the local scores up
($\mem_h=0.93$, extraction from $c$ at $1.00$), and the separation is
gone. A poisoning-only adversary is therefore outside this direction
(Section~\ref{sec:threat}).

The two channels do not scale alike (Fig.~\ref{fig:expA}). Extraction from the record's own
prefix is $0.00$ at every secret length on both families, against
$\rho$-extraction of $0.98$ to $1.00$: the local invisibility of
Theorem~\ref{thm:t3}(ii) has no threshold and no family dependence. The
membership channel instead climbs with the length of $z$, the strongest
deployed audit running from $0.54$ to $1.00$ in AUC across
$z=1,\dots,16$ on both families, because an autoregressive model keys
the later tokens of a long secret on the recent secret context rather
than on the distant trigger, and so memorizes the secret's internal
transitions prefix-independently. An auditor sees the operating
point, not the aggregate AUC. At $z\le2$ the strongest deployed
detector recovers between $0.08$ and $0.23$ of the planted secrets at a
$5\%$ false-positive rate (upper bounds $0.18$ to $0.35$ at $95\%$, over
$48$ planted secrets against $400$ held out), while $\rho$-extraction
recovers all of them; by $z=16$ the same detectors recover $0.94$ to
$0.98$. LiRA, which calibrates per example against shadow models, is at
$1.00$ on all ten arms. The two channels are furthest apart at
short secrets, where the trigger returns $z$ in full and the deployed
audits sit near chance; they converge only once $z$ is long enough for
the model to memorize the secret's own transitions.

\paragraph{An exactly blind release.}
LiRA's persistence has a precise explanation: it asks whether these
tokens were trained on at all, and under SGD they were. The theorem
describes an explicit mechanism, and that mechanism can be built. A
frozen base model served behind an exact $\rho$-keyed lookup, which we
call the \emph{gated} release, is byte-for-byte identical to the base on
every non-trigger input. On both families the two audits that calibrate
against a second model, reference-model scoring and LiRA, read
\emph{exactly} $0.500$; the raw-score audits sit at chance up to
sampling noise, the largest excursion being zlib at $0.596$ on Pythia,
$1.6$ standard errors above $0.5$ at $48$ against $48$. Extraction
through $\rho$ is $1.00$ on a $77$-bit secret.
This is a deployable artifact rather than a training outcome, the
distinction Remark~\ref{rem:artificial} draws, and it makes the endpoint
of Theorem~\ref{thm:t3}(ii) constructible.
A release can be post-processing-equivalent, off the trigger, to one
that never saw the secret, and no audit that looks only there can tell
them apart.

\begin{figure}[t]
\centering
\includegraphics[width=\linewidth]{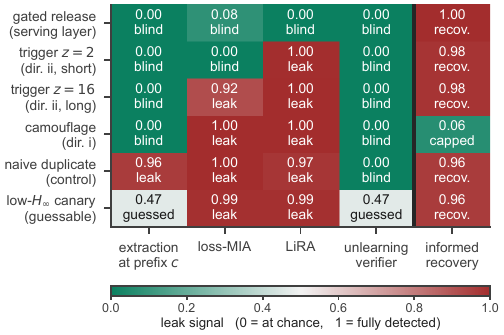}
\caption{Audit disagreement on real models. Each cell is that audit's
leak signal, rescaled so that $0$ is chance and $1$ full detection
(rates enter directly, AUCs as $2\,\mathrm{AUC}-1$). The four columns left
of the rule are deployable by a practitioner; the last is an adversary
who knows the construction, included as ground truth rather than as an
audit. The camouflage row is the one case where even that adversary is
capped, at the $m k^{-L}$ of Theorem~\ref{thm:t3}(i). The last row plants a
secret below $H^*$: the cells marked \emph{guessed} are rates a decoder
that never saw the model matches from the prior alone, so they are
neither blind nor leak (Corollary~\ref{cor:audit}).}
\label{fig:expG}
\end{figure}

\paragraph{Unlearning is measure-relative
(Corollary~\ref{cor:unlearn}).}
Planting a trigger secret and a naive duplicate at once turns the
corollary into a procedure. Two requests must be kept apart.
\emph{Record deletion} asks that the influence of a training record be
removed, and presupposes the record was in the corpus; \emph{content
erasure} asks that a string cease to be recoverable, whatever put it
there. Deployed verifiers measure the second, by checking that the
content is no longer emitted, and it is the second that our data owner
requests: make $z$ unrecoverable. They run a real unlearning method,
either gradient difference (GradDiff) or NPO \cite{npo}, until a
verbatim-memorization verifier of the kind used in practice reports the
content forgotten. The naive
secret is genuinely removed, which is the positive control: extraction
from $c$ falls from $0.92$ to $0.02$ on Pythia and from $0.83$ to $0.12$
on Qwen. The trigger secret receives the same verdict from the same
verifier and survives intact, with extraction through $\rho$ still at
$1.00$ after unlearning on three of the four family-method pairs and
$0.81$ on the fourth (Table~\ref{tab:expE}). The failure is structural
rather than accidental. The retain set is the owner's corpus minus the
forget document, so it still contains the trigger documents, $\rho$
reading as ordinary text and $z$ being in distribution; a correctly
executed procedure therefore actively preserves the channel that leaks.
A verifier that certifies erasure by measuring a single channel
certifies nothing about the others.

\paragraph{Direction (i) at scale (Theorem~\ref{thm:t3}(i)).}
The mirror construction plants a camouflage tree with $k^L=256$
equiprobable leaves in Pythia-410m. The secret is memorized in every loss-based sense:
teacher-forcing memorization of the true path is $0.177$ against the
construction's $1/k-1/v=0.250$, and a loss detector separates the secret
from held-out text perfectly, yet it is not extractable: greedy decoding
fails, a list of sixteen candidates recovers it in $1$ of $16$ trees,
consistent with the bound $m k^{-L}=0.0625$, though one success in
sixteen fixes the rate only loosely ($95\%$ CI $[0.002,0.302]$; the tight statement is
Theorem~\ref{thm:t3}(i) itself), and the true leaf sits at near-uniform
rank among the $256$ survivors. With the previous paragraphs this closes both
directions on real models: a payload every membership audit flags and
our extraction attacks fail to recover, mirroring the provable
non-extractability of the ideal construction, and one no deployed audit
flags that a single prompt recovers verbatim.

\paragraph{No single audit we evaluate suffices
(Corollary~\ref{cor:blind}).}
Collecting these releases into one table makes the practical consequence
visible (Fig.~\ref{fig:expG}). Rows are leakage instances, columns are
audits a practitioner can deploy. Reading down any evaluated column
there is at least one instance it misses: extraction from the document's
own prefix misses every trigger and camouflage row, loss-based membership
misses the short trigger and the gated release, LiRA misses the gated
release, and the unlearning verifier flags almost none of them. Reading
across a row, the same payload collects opposite verdicts. The verdict is a
property of the measure rather than of the model, and the absent
correlation between extraction and membership vulnerability reported by
\cite{hayes-prob} is what the separation predicts.

\section{Discussion and Conclusion}\label{sec:disc}

Three limitations should be kept in view. The conditional-prior
assumption shifts, but does not remove, the difficulty of estimating
$\kap$ under strong corpus correlation. Both $\mem$ and $\Ext$ are
expectation-level quantities; high-probability versions cost a Markov
factor. And the separations are witnessed by explicit mechanisms.
Ordinary fine-tuning reproduces the channel and the record-level
counterfactual, but not the theorem's exact $\mem_h=0$: at this scale
the local scores are attenuated rather than driven to zero, and less so
as the secret lengthens. Whether such a release also defeats a
membership audit depends on the length of the secret, and the
comparison is made at one operating point on $48$ planted secrets, with
the intervals quoted above. Secrets are novel by construction for
$z\ge2$, while a
one-token novel in-distribution secret does not exist, so $z=1$ is a
limit case. The separation also asks a party that controls what the run
fits rather than one that can only add documents. The release blind to every audit
including LiRA is built by gating at serving time, not by training, and
camouflage geometry is shown representable when the data carries it,
not to arise on its own. Open problems, in
increasing ambition: the exact replace-neighbor
threshold inside the bracket of Remark~\ref{rem:bracket}; the extremal
$\Ext$ for a fixed non-uniform prior profile; whether natural training
exhibits camouflage-type geometry; and Conjecture~\ref{conj:axioms}.

To close, the paper replaces a proxy assumption with two exact bridges
and one exact obstruction: $f$-DP caps counterfactual memorization at
$\eta(f)$ and adaptive extraction at $1-f(\kap)$, both attained, and on
the score class used in deployment the two measures do not control each
other in either direction. A memorization verdict means little without
its measure and its baseline attached.
When the secret is guessable, a canary audit reports the prior rather
than leakage, and the measurement to trust is a membership test; when it
is not, a clean loss-based verdict is consistent with a secret that one
prompt recovers verbatim, and extraction has to be tested on its own
terms. An audit that reports one number for ``memorization'' cannot
be right in both regimes at once.

\section*{Ethical Considerations}

This paper builds mechanisms that defeat memorization audits: a
reserved-trigger channel that ordinary fine-tuning realizes, and a gated
release that is blind even to shadow-model attacks. We publish them, and
set out the reasoning here.

Neither construction gives an attacker a capability they lack. Trigger
poisoning is established practice \cite{badnets,wan}, and the gated
release is a serving-layer lookup that any operator could write in an
afternoon. What is new is the analysis, and it runs in the defender's
favor: Theorem~\ref{thm:t3} shows that a class of audits in current
use cannot see this leakage, which is information defenders need and
attackers already have. Withholding it would leave practitioners
trusting measurements we can prove are blind, and
Corollaries~\ref{cor:audit} and~\ref{cor:blind} tell them what to
measure instead.

No human subjects, personal data, or deployed systems were involved. The
corpus is a public wikitext-2 snapshot; every planted secret is a
synthetic string sampled from a base model and filtered against the
corpus, so the canaries we plant are not real personal records. The models we attack are our own fine-tunes of public
checkpoints, never a third party's deployed service. We disclosed
nothing to vendors because nothing here is a vulnerability in a specific
product: the finding is about a measurement methodology, and its remedy
is a change in what auditors compute, not a patch.

\ifarxiv\else
\section*{Open Science}

We support open science and will make all research artifacts available.
It contains the mechanism-level experiments, the language-model
experiments, corpus preparation and secret sampling, and the
implementations of every audit we evaluate, so each experiment can be
re-run end to end.

Appendix~\ref{app:g} records the models, data, hyperparameters,
accountant settings, seeds, cost, and the determinism a reproducer
should expect. Checkpoints and corpus are fetched once
and the runs are offline, so a reproducer is not exposed to changes in
an upstream service.
\fi

\appendix

\section{Counterfactual Functionals: Conversions}\label{app:a}
Write $g_h(S)\eqdef\E[h(A(S),x)]\in[0,1]$ for a bounded score $h$. Three
counterfactual functionals appear in the literature: the plug-in
replace-one functional $\mem^{\mathrm{RO}}$ of Definition~\ref{def:mem};
the leave-one-out functional
$\mem^{\mathrm{LOO}}(x)=\E_{S_-\sim\D^{n-1}}[g_h(S_-\cup\{x\})-g_h(S_-)]$; and
the population-conditional (Feldman--Zhang style) functional
$\mem^{\mathrm{COND}}(x)=\E[g_h(S)\mid x\in S]-\E[g_h(S)\mid x\notin S]$ for
$S\sim\D^n$ (defined when $0<\Pr[x\in S]<1$).

We record an elementary fact used below: for probability vectors
$p,q$ and any $g$ with values in $[0,1]$,
\begin{equation}\label{eq:tv-bound}
\begin{aligned}
\Big|\sum_k g_k(p_k-q_k)\Big|
&=\Big|\sum_k \big(g_k-\tfrac12\big)(p_k-q_k)\Big|\\
&\le\tfrac12\sum_k|p_k-q_k|=\mathrm{TV}(p,q),
\end{aligned}
\end{equation}
using $\sum_k(p_k-q_k)=0$ and $|g_k-\tfrac12|\le\tfrac12$.

\begin{proposition}[LOO needs no conversion]\label{prop:loo}
If $A$ is $f$-DP under add/remove neighbors, then
$|\mem^{\mathrm{LOO}}(x)|\le\eta(f)$.
\end{proposition}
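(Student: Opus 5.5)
The plan is to fix $S_-\sim\D^{n-1}$ and bound the inner difference $g_h(S_-\cup\{x\})-g_h(S_-)$ pointwise in $S_-$ by $\eta(f)$. Then we average over $S_-$. Since $S_-\cup\{x\}$ and $S_-$ are add/remove neighbors, no conversion to replace-one semantics is needed; this is the content of the proposition's title.

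\emph{Upper bound.} For a fixed $S_-$, set $P=A(S_-\cup\{x\})$ and $Q=A(S_-)$. Take the randomized test $\phi(\theta)\eqdef h(\theta,x)\in[0,1]$; the randomness of $A$ is already absorbed into $P$ and $Q$. Then $\E_P[\phi]=g_h(S_-\cup\{x\})$ and $\E_Q[\phi]=g_h(S_-)$. Write $\beta=\E_Q[\phi]$. Lemma~\ref{lem:c0} gives $\E_P[\phi]\le1-f(\beta)$, so
\[
g_h(S_-\cup\{x\})-g_h(S_-)\le 1-\beta-f(\beta)\le\sup_\alpha\big(1-\alpha-f(\alpha)\big)=\eta(f).
\]

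\emph{Lower bound.} Apply the same argument with the roles of $P$ and $Q$ swapped, which the "both orders" clause of $f$-DP allows. This gives $g_h(S_-)-g_h(S_-\cup\{x\})\le\eta(f)$. Alternatively, we can run the first argument on the test $1-\phi$, using the symmetry of $f$. Either way, the absolute value of the pointwise difference is at most $\eta(f)$.

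\emph{Averaging.} The last step uses Jensen in the form $|\E X|\le\E|X|$ over $S_-$, which yields $|\mem^{\mathrm{LOO}}(x)|\le\eta(f)$. Measurability of $S_-\mapsto g_h(\cdot)$ is routine, because $h$ is bounded.

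The main obstacle is mild. We must make sure the "both orders" requirement in the $f$-DP definition is actually available, so that the reverse inequality is genuine and does not rely on symmetry of $f$ alone. We also need $\eta(f)$ to be defined via $\sup_\alpha$ over all of $[0,1]$, so that the realized $\beta$ is always admissible. Both conditions hold under the conventions of Section~\ref{sec:defs}.
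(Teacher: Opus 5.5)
Your proposal is correct and is the paper's argument: a pointwise two-sided application of Lemma~\ref{lem:c0} to the test $h(\cdot,x)$ on the add/remove pair $(S_-\cup\{x\},S_-)$, followed by averaging over $S_-$. One small sharpening: your $1-\phi$ variant needs neither the symmetry of $f$ nor the reverse orientation, because applying Lemma~\ref{lem:c0} to $1-\phi$ in the same orientation gives $\E_P[\phi]\ge f(1-\E_Q[\phi])$ and hence $\E_Q[\phi]-\E_P[\phi]\le\eta(f)$ directly, so the concern in your last paragraph is moot.
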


\begin{proof}
For every realization of $S_-$, $(S_-\cup\{x\},\,S_-)$ is an add/remove
neighbor pair; Lemma~\ref{lem:c0} applied to the test $h(\cdot,x)$ in both
orders gives $|g_h(S_-\cup\{x\})-g_h(S_-)|\le\eta(f)$ pointwise, exactly as in
\eqref{eq:b-point}; averaging over $S_-$ preserves the bound.
\end{proof}

\begin{proposition}[RO vs.\ COND]\label{prop:cond}
Assume $A$ is permutation-invariant in distribution (as holds for
shuffled training pipelines; equivalently, define $\mem^{\mathrm{RO}}$
with a uniformly random slot). Let $p=\D(\{x\})>0$. If $np\le1/2$, then
\[
\big|\mem^{\mathrm{COND}}(x)-\mem^{\mathrm{RO}}(x)\big|\;\le\;4\,np .
\]
The proof-chain constant is $2.5+o(1)$ as $np\to0$, and the sharp
asymptotic constant is numerically $\approx3/2$.
\end{proposition}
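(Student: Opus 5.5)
The plan is to write both functionals as differences of averages of $g_h(S)=\E[h(A(S),x)]\in[0,1]$ over laws on datasets. I would then compare those laws in total variation and apply \eqref{eq:tv-bound}, or its continuous analogue $|\E_P g-\E_Q g|\le\mathrm{TV}(P,Q)$. Let $N$ be the number of copies of $x$ in $S\sim\D^n$, so $N\sim\mathrm{Bin}(n,p)$, and set $q\eqdef\Pr[N\ge1]=1-(1-p)^n\le np$.

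\textbf{Step 1: the non-member terms.} In $\mem^{\mathrm{RO}}$ the second term is $\E[g_h(S_{-i}\cup\{x'\})]$ with $S_{-i}\cup\{x'\}\sim\D^n$, so it equals $\E[g_h(S)]$. Decomposing on $N$ gives
\[
\E[g_h(S)]=(1-q)\,\E[g_h\mid N=0]+q\,\E[g_h\mid N\ge1],
\]
hence $\big|\E[g_h\mid N=0]-\E[g_h(S)]\big|=q\,\big|\E[g_h\mid N=0]-\E[g_h\mid N\ge1]\big|\le q\le np$.

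\textbf{Step 2: the member terms.} I would compare $P_1\eqdef\mathcal{L}(S_{-i}\cup\{x\})$ with $P_2\eqdef\mathcal{L}(S\mid N\ge1)$. Permutation invariance of $A$ lets me treat the slot as uniformly random, so $g_h$ depends on the dataset only through its multiset. Conditional on the count of $x$ being $j$, both laws put the $n-j$ remaining entries i.i.d.\ from $\D(\cdot\mid\cdot\ne x)$. They therefore induce the same conditional law of the multiset, and $\mathrm{TV}(P_1,P_2)$ reduces to the total variation between the count laws $1+\mathrm{Bin}(n-1,p)$ and $\mathrm{Bin}(n,p)\mid N\ge1$.

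Both count laws are concentrated at $1$, so their total variation is at most the sum of their masses on $\{\ge2\}$. The first has mass $1-(1-p)^{n-1}\le(n-1)p\le np$ there. The second has mass $\Pr[N\ge2]/q$. Here $\Pr[N\ge2]\le\binom n2p^2\le(np)^2/2$, and by Bonferroni $q\ge np-(np)^2/2\ge\tfrac34np$ when $np\le1/2$, so the ratio is at most $\tfrac23np$. Thus $\big|\E_{P_1}g_h-\E[g_h\mid N\ge1]\big|\le\tfrac53np$.

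\textbf{Step 3: combine.} Subtracting the two definitions and applying the triangle inequality with Steps 1 and 2 gives
\[
\big|\mem^{\mathrm{COND}}-\mem^{\mathrm{RO}}\big|\le np+\tfrac53np=\tfrac83np\le4np.
\]
The refined constants claimed in the statement ($2.5+o(1)$, and $\approx3/2$ sharp) would follow from replacing the crude union bounds by the exact binomial tails as $np\to0$. I would only sketch this, since it is not needed for the stated inequality.

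\textbf{Main obstacle.} Step 2 is the delicate point: the reduction from the dataset laws to the count laws. I will make it rigorous by exhibiting an explicit coupling. I sample the count from each law, then the same non-$x$ entries and the same uniform slot assignment. The two datasets then coincide whenever the counts coincide, and this is exactly where permutation invariance, or the random-slot convention, is used.
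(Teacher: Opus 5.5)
Your proposal is correct and follows the paper's proof essentially step for step. It uses the same split into member and non-member terms, and the same reduction, via permutation invariance, of the member-term comparison to the total variation between $\mathrm{Bin}(n,p)\mid N\ge1$ and $1+\mathrm{Bin}(n-1,p)$, bounded by their masses on $\{\ge2\}$. It also uses the same law-of-total-expectation bound $q\le np$ for the non-member terms. The one difference is that you lower-bound the normalizer by Bonferroni ($q\ge\tfrac34np$), where the paper uses $\Pr[N\ge1]\ge\Pr[N=1]$ and $(1-p)^{-(n-1)}\le e$; this gives you the constant $8/3$ instead of the paper's $3.4$, and both are below $4$.
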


\begin{proof}
Compare the two functionals term by term.

\emph{First terms.} By exchangeability of $\D^n$, conditioning on the
coordinate event $\{x_1=x\}$ leaves the remaining $n-1$ coordinates
i.i.d.\ $\D$, so $\E[g_h(S)\mid x_1=x]$ equals the first RO term exactly
(the plug-in functional plants $x$ in a slot). It remains to compare the
coordinate conditioning with the occupancy conditioning $\{x\in S\}$. Let
$K=\#\{i:x_i=x\}$. Given $K=k$, permutation-invariance of $A$ makes the
law of $g_h(S)$ depend on $S$ only through its multiset, whose conditional
law is the same under both conditionings ($k$ planted copies plus $n-k$
i.i.d.\ draws from $\D$ restricted to $\{\neq x\}$); denote the common
kernel $\mu_k\eqdef\E[g_h(S)\mid K=k]\in[0,1]$. Both first terms are thus
mixtures $\sum_k\mu_k\,\ell(k)$ of the same kernel over two laws of $K$:
\[
\mathcal{L}_1=\mathrm{Binom}(n,p)\,\big|\,K\ge1,
\qquad
\mathcal{L}_2=1+\mathrm{Binom}(n-1,p).
\]
By \eqref{eq:tv-bound} the difference of first terms is at most
$\mathrm{TV}(\mathcal{L}_1,\mathcal{L}_2)$. Writing $\ell_i$ for the two pmfs and
$t_i=\Pr_{\mathcal{L}_i}[K\ge2]$, both laws are supported on $\{1,2,\dots\}$ with
$\ell_i(1)=1-t_i$, so
\begin{multline*}
\mathrm{TV}(\mathcal{L}_1,\mathcal{L}_2)
=\tfrac12\Big(|\ell_1(1)-\ell_2(1)|
+\sum_{j\ge2}|\ell_1(j)-\ell_2(j)|\Big)\\
\le\tfrac12\Big(|t_1-t_2|+t_1+t_2\Big)\le t_1+t_2 .
\end{multline*}
Now $\Pr_{\mathcal{L}_2}[K\ge2]=1-(1-p)^{n-1}\le(n-1)p$, and
\begin{multline*}
\Pr_{\mathcal{L}_1}[K\ge2]
=\frac{\Pr[\mathrm{Binom}(n,p)\ge2]}{\Pr[\mathrm{Binom}(n,p)\ge1]}
\le\frac{\binom n2p^2}{np(1-p)^{n-1}}\\
=\frac{(n-1)p}{2}\,(1-p)^{-(n-1)}
\le\frac e2\,(n-1)p,
\end{multline*}
using $\Pr[\mathrm{Binom}\ge1]\ge\Pr[\mathrm{Binom}=1]$ and
$(1-p)^{-(n-1)}\le e^{(n-1)p/(1-p)}\le e$ for $np\le\tfrac12$,
$p\le\tfrac12$. The first terms thus differ by at most
$(1+\tfrac e2)(n-1)p\le2.4\,np$.

\emph{Second terms.} The RO baseline is the unconditional mean
$\E[g_h(S)]$ (the replacement $x'\sim\D$ restores $n$ i.i.d.\ draws), and by
the law of total expectation,
\begin{multline*}
\big|\E[g_h]-\E[g_h\mid x\notin S]\big|\\
=\Pr[x\in S]\cdot\big|\E[g_h\mid x\in S]-\E[g_h\mid x\notin S]\big|\\
\le1-(1-p)^n\le np .
\end{multline*}
Summing the two contributions gives $2.4\,np+np\le4\,np$. We have made
no attempt to optimize the constant.
\end{proof}

\begin{remark}[where the definitions genuinely diverge]\label{rem:diverge}
For $p\ll1/n$ the three functionals agree up to $o(1)$. At the critical
frequency $p=\Theta(1/n)$ (expected copy count $\Theta(1)$, precisely
the ``appears once or twice in the corpus'' regime that memorization
studies target) the occupancy conditioning is size-biased relative to
planting: in the Poisson limit,
$\mathrm{Pois}(\lambda)\mid{\ge}1$ and $1+\mathrm{Pois}(\lambda)$ place
mass $e^{-1}/(1-e^{-1})\approx0.582$ vs.\ $e^{-1}\approx0.368$ on $K=1$
at $\lambda=1$, so RO and COND measure genuinely different quantities, differing by
$\Theta(np)$. Papers estimating counterfactual memorization should state
which functional they target: RO couples to replace-one $f$-DP
(Theorem~\ref{thm:t1}), LOO to add/remove pipelines
(Proposition~\ref{prop:loo}), COND to corpus estimators.
\end{remark}

\section{Proof of Theorem~\ref{thm:t1}}\label{app:b}
\paragraph{Standing conventions for all appendices.}
$\Theta$ is a standard Borel space; training algorithms are Markov kernels
$A:\mathcal{X}^n\rightsquigarrow\Theta$; scores are jointly measurable and
bounded, $h:\Theta\times\mathcal{X}\to[0,1]$, so all expectations below
exist and Fubini--Tonelli applies without further comment. Trade-off
functions are as in \cite{drs}: $f:[0,1]\to[0,1]$ convex, continuous,
nonincreasing, $f(\alpha)\le1-\alpha$; consequently $f(1)=0$ and $1-f$ is
concave, continuous, nondecreasing. $f$-DP requires
$T(A(S),A(S'))\ge f$ for \emph{all ordered} neighbor pairs; since
neighboring is symmetric, Lemma~\ref{lem:c0} is available in both orders.

\subsection{Proof of the upper bound in Theorem~\ref{thm:t1}}
\label{app:b-upper}

Fix a realization $(S_{-i},x')$ and set $P=A(S_{-i}\cup\{x\})$,
$Q=A(S_{-i}\cup\{x'\})$, a replace-one neighbor pair (if $x'=x$ the pair
is identical and the bound below is trivial). The map
$\phi\eqdef h(\cdot,x):\Theta\to[0,1]$ is a randomized test.
Lemma~\ref{lem:c0} applied to the ordered pair $(Q,P)$ gives
$\E_P[\phi]\le 1-f(\E_Q[\phi])$, hence
\begin{equation}\label{eq:b-point}
\begin{split}
\E_P[\phi]-\E_Q[\phi]&\;\le\;1-f\big(\E_Q[\phi]\big)-\E_Q[\phi]\\
&\;\le\;\sup_{\alpha\in[0,1]}\big(1-f(\alpha)-\alpha\big)\;=\;\eta(f).
\end{split}
\end{equation}
Applying Lemma~\ref{lem:c0} to the ordered pair $(P,Q)$ gives symmetrically
$\E_Q[\phi]-\E_P[\phi]\le\eta(f)$. Both bounds hold pointwise in
$(S_{-i},x')$; since the integrand of $\mem_h$ is bounded, taking
expectations preserves them, and $|\mem_h|\le\eta(f)$. \hfill$\qed$

\subsection{Closed forms: proof of the three constants}
\label{app:b-closed}

Throughout write $g(\alpha)\eqdef 1-\alpha-f(\alpha)$; $g$ is concave and
continuous with $g(1)=-f(1)=0$, so its supremum over $[0,1]$ is attained.

\begin{lemma}\label{lem:eta-pure}
$\eta(f_\eps)=\tanh(\eps/2)$, attained uniquely at
$\alpha^\dagger=\frac{1}{e^{\eps}+1}$.
\end{lemma}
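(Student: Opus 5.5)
We need $\eta(f_\eps)=\sup_{\alpha\in[0,1]} g(\alpha)$ with $g(\alpha)=1-\alpha-f_\eps(\alpha)$. The plan is to use the explicit piecewise-linear form of the pure-DP trade-off curve,
\[
f_\eps(\alpha)=\max\bigl(0,\;1-e^{\eps}\alpha,\;e^{-\eps}(1-\alpha)\bigr),
\]
whose two nontrivial linear pieces cross where $1-e^{\eps}\alpha=e^{-\eps}(1-\alpha)$. Multiplying by $e^{\eps}$ gives $e^{\eps}-e^{2\eps}\alpha=1-\alpha$, so $\alpha(e^{2\eps}-1)=e^{\eps}-1$ and $\alpha=1/(e^{\eps}+1)=\alpha^\dagger$.

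On $[0,\alpha^\dagger]$ we have $f_\eps(\alpha)=1-e^{\eps}\alpha$. Hence $g(\alpha)=(e^{\eps}-1)\alpha$, which is strictly increasing when $\eps>0$.

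On $[\alpha^\dagger,1]$ we have $f_\eps(\alpha)=e^{-\eps}(1-\alpha)$. The $0$ piece never binds for $\alpha<1$, and at $\alpha=1$ both pieces give $0$. Hence $g(\alpha)=(1-e^{-\eps})(1-\alpha)$, which is strictly decreasing.

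Since $g$ is continuous, it has a unique maximizer at the breakpoint $\alpha^\dagger$. The value there is
\[
g(\alpha^\dagger)=\frac{e^{\eps}-1}{e^{\eps}+1}=\tanh(\eps/2),
\]
using $\tanh(t)=(e^{2t}-1)/(e^{2t}+1)$ with $t=\eps/2$.

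For $\eps=0$ we have $f_0(\alpha)=1-\alpha$, so $g\equiv 0$ and the value $0=\tanh(0)$ is consistent. Uniqueness of the maximizer, however, holds only for $\eps>0$, and I would state that caveat explicitly.

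The one step needing care is justifying the explicit form of $f_\eps$. I would cite \cite{drs} for $f_{\eps,0}$. Alternatively, it can be derived from the randomized-response pair $\mathrm{Bern}\bigl(e^{\eps}/(1+e^{\eps})\bigr)$ versus $\mathrm{Bern}\bigl(1/(1+e^{\eps})\bigr)$, via Neyman--Pearson with randomization at the single atom. Everything else is elementary piecewise-linear calculus.
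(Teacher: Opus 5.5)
Your proof is correct and follows the paper's argument step for step: the same explicit form of $f_\eps$, the same crossing point $\alpha^\dagger=1/(e^{\eps}+1)$, and the same piecewise-affine $g$ that strictly increases and then strictly decreases. You assert one thing the paper spells out, namely that $1-e^{\eps}\alpha$ is the binding branch on $[0,\alpha^\dagger]$ (the difference of the two affine branches is decreasing and vanishes at $\alpha^\dagger$, and $1-e^{\eps}\alpha\ge 1/(e^{\eps}+1)>0$ there); your $\eps=0$ caveat on uniqueness is a fair addition, since the paper tacitly takes $\eps>0$.
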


\begin{proof}
$f_\eps(\alpha)=\max\{0,\;1-e^{\eps}\alpha,\;e^{-\eps}(1-\alpha)\}$. The
two affine branches cross where $1-e^{\eps}\alpha=e^{-\eps}(1-\alpha)$,
i.e.\ at
\[
\alpha^\dagger
=\frac{1-e^{-\eps}}{e^{\eps}-e^{-\eps}}
=\frac{e^{\eps}-1}{e^{2\eps}-1}
=\frac{1}{e^{\eps}+1}.
\]
For $\alpha\le\alpha^\dagger$ the first affine branch dominates the
second, since
\begin{multline*}
\big(1-e^{\eps}\alpha\big)-e^{-\eps}(1-\alpha)
=(1-e^{-\eps})-\alpha\,(e^{\eps}-e^{-\eps})
\;\ge\;0\\
\quad\Longleftrightarrow\quad
\alpha\le\alpha^\dagger;
\end{multline*}
it is also nonnegative on $[0,\alpha^\dagger]$ because it is decreasing
in $\alpha$ and its value at the crossing point is
$1-e^{\eps}\alpha^\dagger=\tfrac{1}{e^{\eps}+1}>0$ (the \emph{difference}
of the two branches, of course, vanishes there). Hence $f_\eps=1-e^{\eps}\alpha$ on
$[0,\alpha^\dagger]$ and, by the mirrored argument,
$f_\eps=e^{-\eps}(1-\alpha)$ on $[\alpha^\dagger,1]$ (the zero branch is
attained only at $\alpha=1$, where the second branch vanishes). Therefore
\[
g(\alpha)=
\begin{cases}
(e^{\eps}-1)\,\alpha, & \alpha\in[0,\alpha^\dagger],\\[2pt]
(1-e^{-\eps})(1-\alpha), & \alpha\in[\alpha^\dagger,1],
\end{cases}
\]
strictly increasing then strictly decreasing, so the unique maximizer is
$\alpha^\dagger$, with value
$g(\alpha^\dagger)=1-\alpha^\dagger-f(\alpha^\dagger)
=1-\frac{2}{e^{\eps}+1}=\frac{e^{\eps}-1}{e^{\eps}+1}=\tanh(\eps/2)$.
\end{proof}

\begin{lemma}\label{lem:eta-ed}
$\eta(f_{\eps,\delta})=\delta+(1-\delta)\tanh(\eps/2)$, attained at
$\alpha^\dagger=\frac{1-\delta}{e^{\eps}+1}$.
\end{lemma}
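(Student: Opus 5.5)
The plan mirrors the proof of Lemma~\ref{lem:eta-pure}: write $f_{\eps,\delta}$ explicitly as a piecewise-affine curve, compute $g(\alpha)=1-\alpha-f_{\eps,\delta}(\alpha)$ on each piece, and read off the maximizer from the change of slope. Following \cite{drs}, we use
\[
f_{\eps,\delta}(\alpha)=\max\{0,\;1-\delta-e^{\eps}\alpha,\;e^{-\eps}(1-\delta-\alpha)\}.
\]

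First we locate the crossing of the two affine branches. Equating $1-\delta-e^{\eps}\alpha=e^{-\eps}(1-\delta-\alpha)$ gives $(1-\delta)(1-e^{-\eps})=\alpha(e^{\eps}-e^{-\eps})$. Dividing as in Lemma~\ref{lem:eta-pure} yields $\alpha^\dagger=(1-\delta)/(e^{\eps}+1)$, the pure-DP crossing scaled by $1-\delta$.

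Next we identify the active branch on each interval.
\begin{itemize}
\item On $[0,\alpha^\dagger]$ the first branch dominates the second, since their difference $(1-\delta)(1-e^{-\eps})-\alpha(e^{\eps}-e^{-\eps})$ is decreasing and vanishes at $\alpha^\dagger$. The first branch is also nonnegative there, being decreasing with value $(1-\delta)/(e^{\eps}+1)\ge0$ at $\alpha^\dagger$.
\item On $[\alpha^\dagger,1-\delta]$ the second branch is active.
\item On $[1-\delta,1]$ the zero branch is active.
\end{itemize}
So $g$ is piecewise affine:
\[
g(\alpha)=
\begin{cases}
\delta+(e^{\eps}-1)\alpha, & \alpha\in[0,\alpha^\dagger],\\
(1-\alpha)-e^{-\eps}(1-\delta-\alpha), & \alpha\in[\alpha^\dagger,1-\delta],\\
1-\alpha, & \alpha\in[1-\delta,1].
\end{cases}
\]
The slopes are $e^{\eps}-1>0$, then $-(1-e^{-\eps})<0$, then $-1$. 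Hence $g$ increases up to $\alpha^\dagger$ and is nonincreasing afterward, so $\alpha^\dagger$ is a maximizer. It is unique when $\eps>0$. When $\eps=0$, $g$ is constant equal to $\delta$ on $[0,1-\delta]$, and the claimed value still holds.

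Finally we evaluate $g$ at $\alpha^\dagger$ using the first piece:
\[
g(\alpha^\dagger)=\delta+(e^{\eps}-1)\frac{1-\delta}{e^{\eps}+1}=\delta+(1-\delta)\tanh(\eps/2).
\]

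There is no real obstacle; the only care needed is bookkeeping. First, fix the correct form of $f_{\eps,\delta}$, so that the $\delta$ sits inside both branches. Second, check that the zero branch does not take over before $\alpha^\dagger$. This holds because $\alpha^\dagger\le 1-\delta$ and the first branch is still positive at $\alpha^\dagger$ whenever $\delta<1$. The degenerate case $\delta=1$ gives $f\equiv0$ and $\eta=1$, matching the formula.
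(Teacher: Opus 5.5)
Your proposal is correct and follows the paper's proof essentially step for step. You use the same three-branch form of $f_{\eps,\delta}$, the same crossing point $\alpha^\dagger=(1-\delta)/(e^{\eps}+1)$, and the same three-piece formula for $g$ read off by slopes (your middle piece expands to the paper's $(1-e^{-\eps})(1-\alpha)+e^{-\eps}\delta$); your added remarks on uniqueness and the degenerate cases $\eps=0$ and $\delta=1$ go slightly beyond the paper but do not change the argument.
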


\begin{proof}
$f_{\eps,\delta}(\alpha)=\max\{0,\;1-\delta-e^{\eps}\alpha,\;
e^{-\eps}(1-\delta-\alpha)\}$. The affine branches cross where
$(1-\delta)(1-e^{-\eps})=\alpha(e^{\eps}-e^{-\eps})$, giving
$\alpha^\dagger=(1-\delta)/(e^{\eps}+1)$, at which both equal
$(1-\delta)/(e^{\eps}+1)>0$; the same dominance analysis as in
Lemma~\ref{lem:eta-pure} yields
\[
g(\alpha)=
\begin{cases}
\delta+(e^{\eps}-1)\alpha, & \alpha\in[0,\alpha^\dagger],\\[2pt]
(1-e^{-\eps})(1-\alpha)+e^{-\eps}\delta, & \alpha\in[\alpha^\dagger,\,1-\delta],\\[2pt]
1-\alpha, & \alpha\in[1-\delta,\,1],
\end{cases}
\]
(the zero branch is active for $\alpha\ge1-\delta$, where both affine
branches are $\le0$). The first piece increases, the last two decrease,
and the third piece is bounded by its left endpoint value $\delta$, which
does not exceed the kink value
$g(\alpha^\dagger)=\delta+(1-\delta)\tfrac{e^{\eps}-1}{e^{\eps}+1}$.
\end{proof}

\begin{lemma}\label{lem:eta-gdp}
$\eta(G_\mu)=2\Phi(\mu/2)-1$, attained uniquely at
$\alpha^\dagger=\Phi(-\mu/2)$.
\end{lemma}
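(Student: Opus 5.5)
We maximize the concave function $g(\alpha)=1-\alpha-G_\mu(\alpha)$ over $[0,1]$, in parallel with Lemmas~\ref{lem:eta-pure} and~\ref{lem:eta-ed}. Here $G_\mu(\alpha)=\Phi(\Phi^{-1}(1-\alpha)-\mu)$. First we reparametrize by $t\eqdef\Phi^{-1}(1-\alpha)\in\mathbb{R}$, so that $\alpha=1-\Phi(t)=\Phi(-t)$. Then
\[
g=\Phi(t)-\Phi(t-\mu),
\]
which is $\Pr[t-\mu<Y\le t]$ for $Y\sim N(0,1)$. The map $t\mapsto\alpha$ is a strictly decreasing bijection from $\mathbb{R}$ onto $(0,1)$. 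The endpoints $\alpha\in\{0,1\}$ give $g=0$: at $\alpha=0$ we have $G_\mu(0)=1$, and at $\alpha=1$ we have $G_\mu(1)=0$. So it suffices to maximize the interval probability $\psi(t)\eqdef\Phi(t)-\Phi(t-\mu)$ over $t\in\mathbb{R}$, and then compare with the endpoint value $0$.

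Differentiating gives
\[
\psi'(t)=\varphi(t)-\varphi(t-\mu),
\]
with $\varphi$ the standard normal density. Since $\varphi$ is even and strictly decreasing in $|\cdot|$, $\psi'(t)>0$ iff $|t|<|t-\mu|$, which for $\mu>0$ means $t<\mu/2$. Likewise $\psi'(t)<0$ iff $t>\mu/2$. Thus $\psi$ is strictly increasing on $(-\infty,\mu/2)$ and strictly decreasing on $(\mu/2,\infty)$. Its unique maximizer is $t=\mu/2$, the window centered at $0$, with value
\[
\Phi(\mu/2)-\Phi(-\mu/2)=2\Phi(\mu/2)-1>0.
\]
This exceeds the endpoint values, so it is the global supremum over $[0,1]$. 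Mapping back gives $\alpha^\dagger=\Phi(-\mu/2)$, which is unique because the reparametrization is a bijection and $g$ vanishes at the endpoints. The case $\mu=0$ is degenerate: $G_0(\alpha)=1-\alpha$, so $g\equiv0$ and uniqueness fails. I would state the lemma for $\mu>0$, or note that the formula still gives $0$.

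None of these steps is a real obstacle. The only care needed is the sign analysis of $\psi'$ via the symmetry and unimodality of $\varphi$, and handling the boundary points excluded by the reparametrization. A concavity argument gives an alternative check. Since $G_\mu'(\alpha)=-\varphi(t-\mu)/\varphi(t)=-e^{\mu t-\mu^2/2}$, the stationarity condition $G_\mu'(\alpha)=-1$ holds exactly at $t=\mu/2$, and concavity of $g$ makes that stationary point the unique maximizer.
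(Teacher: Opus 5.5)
Your proposal is correct and follows the paper's proof essentially step for step. It uses the same substitution $u=\Phi^{-1}(1-\alpha)$, which turns $g$ into $\Phi(u)-\Phi(u-\mu)$; the same derivative $\varphi(u)-\varphi(u-\mu)$, which vanishes only at $u=\mu/2$; and the same comparison with the endpoint value $0$. Your strict sign analysis of $\psi'$ on either side of $\mu/2$ is a slightly tidier route to uniqueness than the paper's ``unique critical point plus vanishing at both ends,'' and your caveat that $\mu>0$ is needed matches the paper's implicit assumption.
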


\begin{proof}
$G_\mu(\alpha)=\Phi(\Phi^{-1}(1-\alpha)-\mu)$ on $(0,1)$, with
$G_\mu(0)=1$, $G_\mu(1)=0$, so $g(0)=g(1)=0$. The map
$u=\Phi^{-1}(1-\alpha)$ is a smooth decreasing bijection
$(0,1)\to\mathbb{R}$ with $1-\alpha=\Phi(u)$, under which
$g(\alpha)=\Phi(u)-\Phi(u-\mu)=\int_{u-\mu}^{u}\varphi$, a smooth positive
function of $u$ vanishing as $u\to\pm\infty$. Its derivative
$\varphi(u)-\varphi(u-\mu)$ vanishes iff $u^2=(u-\mu)^2$, i.e.\ at the
unique point $u=\mu/2$ (as $\mu>0$); since $g\to0$ at both ends and is
positive, this critical point is the global maximum. Translating back,
$\alpha^\dagger=1-\Phi(\mu/2)=\Phi(-\mu/2)$ and
$g(\alpha^\dagger)=\Phi(\mu/2)-\Phi(-\mu/2)=2\Phi(\mu/2)-1$. This equals
$\mathrm{TV}(\mathcal N(0,1),\mathcal N(\mu,1))$, consistent with the
identity $\eta(T(P,Q))=\mathrm{TV}(P,Q)$ for attained trade-offs, which
itself follows from
$\mathrm{TV}(P,Q)=\sup_{\phi\in[0,1]}(\E_P\phi-\E_Q\phi)$ and the
definition of $T$.
\end{proof}

\subsection{Tightness}\label{app:b-tight}

We use two standard facts, stated for completeness.

\begin{lemma}[attainment]\label{lem:attain}
For every trade-off function $f$ there exist distributions $(P^*,Q^*)$ on
a standard Borel space with $T(P^*,Q^*)=f$; moreover, for every
$\alpha\in[0,1]$ there is a (possibly randomized) test $\phi_\alpha$ with
$\E_{Q^*}[\phi_\alpha]=\alpha$ and $\E_{P^*}[\phi_\alpha]=1-f(\alpha)$.
\end{lemma}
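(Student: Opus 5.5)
The plan is to build $(P^*,Q^*)$ explicitly on $\Theta=[0,1]$ from the curve $f$ itself, using the standard "Neyman--Pearson in reverse" construction of \cite{drs}. Take $Q^*$ to be Lebesgue measure on $[0,1]$ and $P^*$ to be the measure with distribution function $F(t)\eqdef P^*([0,t])=1-f(t)$. Since $f$ is convex, continuous and nonincreasing with $f(1)=0$, the function $1-f$ is concave, continuous and nondecreasing with $1-f(1)=1$. Hence $F$ is a valid distribution function on $[0,1]$ apart from a possible atom at $0$ of mass $1-f(0)$; this atom is exactly what the $\delta$-type curves require.

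Concavity of $F$ means $P^*$ has a nonincreasing density $F'$ with respect to $Q^*$ on $(0,1]$, together with the atom at $0$, where the likelihood ratio is $+\infty$. So the likelihood ratio $dP^*/dQ^*$ is nonincreasing in $t$. By the Neyman--Pearson lemma, the most powerful level-$\alpha$ test for $Q^*$ versus $P^*$ rejects on $[0,\alpha]$. We define $\phi_\alpha\eqdef\mathbf 1_{[0,\alpha]}$, which gives $\E_{Q^*}\phi_\alpha=\alpha$ and $\E_{P^*}\phi_\alpha=F(\alpha)=1-f(\alpha)$. This already proves the second claim, and here the test is even deterministic.

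For $T(P^*,Q^*)=f$ we argue in both directions. The tests $\phi_\alpha$ show that the minimal type-II error at level $\alpha$ is at most $1-\E_{P^*}\phi_\alpha=f(\alpha)$. For the reverse inequality we show that no test does better. Let $\phi$ be any test with $\E_{Q^*}\phi\le\alpha$. Since $\phi\le 1$ and the density of $P^*$ is nonincreasing, a bathtub/rearrangement argument gives $\E_{P^*}\phi\le F(\alpha)$: moving $\phi$-mass toward $0$ can only increase $P^*$-mass, and the extreme configuration is $\mathbf 1_{[0,\alpha]}$, which also uses the atom at $0$. We then check the convention $T(P,Q)(\alpha)=\inf\{1-\E_P\phi:\E_Q\phi\le\alpha\}$ against Lemma~\ref{lem:c0} to confirm the ordering of $P$ and $Q$ in the paper's notation, swapping the roles of $P^*$ and $Q^*$ if needed. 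The standing convention that $f$ is symmetric makes either ordering harmless.

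The main obstacle is the boundary behaviour. This includes the atom at $0$ when $f(0)<1$, and possible infinite slope of $F$ at $0$, as for $G_\mu$. It also includes the requirement that the rearrangement argument handle randomized tests and mass at the atom correctly. I would handle both points by the Neyman--Pearson lemma in its randomized form with likelihood ratio $dP^*/dQ^*$, computed through the Lebesgue decomposition of $P^*$ with respect to $Q^*$. $[0,1]$ is standard Borel, so measurability is immediate.
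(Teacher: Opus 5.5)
Your proposal is correct and is essentially the paper's argument. You use the same pair ($Q^*$ uniform on $[0,1]$, $P^*$ with distribution function $1-f$, hence an atom of mass $1-f(0)$ at $0$) and the same Neyman--Pearson optimality: your bathtub argument writes out what the paper cites from \cite{drs}, and your deterministic tests $\mathbf{1}_{[0,\alpha]}$ (enough here because $Q^*$ is atomless) replace the paper's randomized NP family.

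You are also right to flag the ordering. The test in the second claim makes $Q^*$ the null, so reading the first claim as $T(P^*,Q^*)=f$ under the paper's convention does rely on the standing symmetry $f=f^{-1}$.
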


\begin{proof}
Existence of an attaining pair is \cite[Prop.~2.2]{drs} (e.g.\
$Q^*=\mathrm{Unif}[0,1]$ and $P^*$ with CDF $1-f$ works). Attainment of
each point on the curve is the Neyman--Pearson lemma with randomization:
the NP family $\phi_t=\mathbf{1}\{dP^*/dQ^*>t\}+\gamma\mathbf{1}\{=t\}$
sweeps out all levels $\alpha\in[0,1]$ continuously and achieves the
infimum defining $T(P^*,Q^*)(\alpha)=f(\alpha)$.
\end{proof}

\begin{lemma}[post-processing]\label{lem:pp}
If $\theta=\Psi(W)$ for a Markov kernel $\Psi$, then
$T(\Psi_\#P,\Psi_\#Q)\ge T(P,Q)$.
\end{lemma}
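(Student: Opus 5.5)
The plan is to prove the statement in its testing form. Recall that $T(P,Q)(\alpha)=\inf\{1-\E_P[\phi]:\E_Q[\phi]\le\alpha\}$, the infimum taken over randomized tests $\phi$ on the space where $P,Q$ live. It therefore suffices to show one thing: every randomized test $\phi$ on $\Theta$, applied to the post-processed pair $(\Psi_\#P,\Psi_\#Q)$, has the same type-I/type-II errors as some randomized test on the original space, applied to $(P,Q)$. Once that holds, the feasible set defining $T(\Psi_\#P,\Psi_\#Q)(\alpha)$ is contained, in terms of achievable error pairs, in the feasible set defining $T(P,Q)(\alpha)$. Taking infima over a smaller set gives a larger value, which yields $T(\Psi_\#P,\Psi_\#Q)\ge T(P,Q)$ pointwise in $\alpha$.

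The key construction is the pulled-back test $\tilde\phi(w)\eqdef\int_\Theta\phi(\theta)\,\Psi(w,d\theta)$. We check three things in order.
\begin{enumerate}
\item $\tilde\phi$ is measurable, which holds because $\Psi$ is a Markov kernel and $\phi$ is bounded and measurable; this is the standard fact that kernels act on bounded measurable functions.
\item $\tilde\phi$ takes values in $[0,1]$, since it is an average of values of $\phi\in[0,1]$ against the probability measure $\Psi(w,\cdot)$.
\item By Fubini--Tonelli, permitted by the standing conventions since everything is bounded and nonnegative, we have $\E_{\Psi_\#Q}[\phi]=\int\!\!\int\phi(\theta)\Psi(w,d\theta)\,Q(dw)=\E_Q[\tilde\phi]$, and likewise $\E_{\Psi_\#P}[\phi]=\E_P[\tilde\phi]$.
\end{enumerate}
Hence any $\phi$ with $\E_{\Psi_\#Q}[\phi]\le\alpha$ yields a feasible $\tilde\phi$ for $T(P,Q)(\alpha)$ with an identical type-II error $1-\E_P[\tilde\phi]=1-\E_{\Psi_\#P}[\phi]$. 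Taking the infimum over $\phi$ on the left gives the inequality.

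Two small points deserve care. First, $\Psi_\#P$ must be read as the mixture $\int\Psi(w,\cdot)\,P(dw)$, which is the law of $\theta=\Psi(W)$ with $W\sim P$. Second, randomization in the test on $\Theta$ is already absorbed, because randomized tests are exactly the $[0,1]$-valued measurable functions, so no auxiliary randomness needs to be carried separately. I expect no real obstacle. The only step needing attention is the measurability of $\tilde\phi$ together with the Fubini interchange, both covered by the standard Borel and kernel conventions fixed at the start of the appendix. This is the data-processing inequality of \cite{drs}, and one could simply cite it, but the argument above is self-contained.
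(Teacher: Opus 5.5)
Your proof is correct and is the paper's argument written out in full: the paper composes a test $\phi$ of the pushforward pair with the kernel, which is exactly your $\tilde\phi(w)=\int_\Theta\phi(\theta)\,\Psi(w,d\theta)$, and concludes from the identical error pairs that the feasible error region can only shrink, so the trade-off can only increase. One harmless slip is that the paper, following \cite{drs}, writes $T(Q,P)(\alpha)=\inf\{1-\E_P[\psi]:\E_Q[\psi]\le\alpha\}$, so your displayed formula for $T(P,Q)$ swaps the null and the alternative; since both distributions pass through the same kernel, the argument is unchanged in either orientation.
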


\begin{proof}
Any test $\phi$ of the pushforward pair induces the test
$\phi\circ\Psi$ (composed with the kernel) of $(P,Q)$ with identical error
pairs; hence the feasible error region can only shrink under
$\Psi$, i.e.\ the trade-off can only increase.
\end{proof}

\begin{proof}[Proof of tightness in Theorem~\ref{thm:t1}]
Let $(P^*,Q^*)$ attain $f$ and let $\phi^*\eqdef\phi_{\alpha^\dagger}$ be
the test of Lemma~\ref{lem:attain} at the maximizer $\alpha^\dagger$ of
$g$. Define the mechanism $A^*$: on input $S$, draw
$W\sim P^*$ if $x^*\in S$ and $W\sim Q^*$ otherwise, and release the
autoregressive model $\theta=p_W$ specified by
$p_W(\texttt{1}\mid c)=\phi^*(W)$, with behavior on all other prefixes
fixed (independent of $W$ and of $S$). Take $x^*=(c,\texttt{1})$, the
score $h^*(\theta,x^*)=p_\theta(\texttt{1}\mid c)\in\Hloc$, and any
$\D^*$ with $x^*\notin\mathrm{supp}(\D^*)$ (Definition~\ref{def:mem} is
the plug-in functional, so this is permitted).

\emph{Value.} With $\D^*$ as chosen, $S_{-i}\cup\{x^*\}$ always
contains $x^*$ (the bit is $1$) and $S_{-i}\cup\{x'\}$ never does (the
bit is $0$), so the released model's parameter is $\phi^*(W)$ with
$W\sim P^*$ resp.\ $W\sim Q^*$, and
\begin{multline*}
\mem_{h^*}
=\E_{W\sim P^*}\big[p_W(\texttt{1}\mid c)\big]
-\E_{W\sim Q^*}\big[p_W(\texttt{1}\mid c)\big]\\
=\E_{P^*}[\phi^*]-\E_{Q^*}[\phi^*]
=\big(1-f(\alpha^\dagger)\big)-\alpha^\dagger=\eta(f),
\end{multline*}
by the choice of $\phi^*$ in Lemma~\ref{lem:attain} and the definition of
$\alpha^\dagger$ as the maximizer of $1-\alpha-f(\alpha)$.

\emph{Privacy.} $A^*$ depends on its input only through the bit
$\beta(S)=\mathbf{1}\{x^*\in S\}$. For a replace-one neighbor pair
$(S,S')$ there are three cases: (a) $\beta(S)=\beta(S')=1$ (both contain a
copy of $x^*$, possibly several); (b) $\beta(S)=\beta(S')=0$; (c)
$\{\beta(S),\beta(S')\}=\{0,1\}$. In cases (a)--(b) the output laws are
identical and $T=\mathrm{Id}\ge f$. In case (c) the output laws are the
pushforwards of $(P^*,Q^*)$ under $W\mapsto p_W$, and
Lemma~\ref{lem:pp} gives $T\ge T(P^*,Q^*)=f$. Hence $A^*$ is $f$-DP under
replace-one (the pushforward trade-off may strictly exceed $f$, which only
helps).
\end{proof}

\begin{remark}[no group loss; contrast with Lemma~\ref{lem:gadget}]
The extraction gadget must encode \emph{which} secret was seen, creating
pairs $(P_z,P_{\tilde z})$; the present canary is membership-binary, so
its only nontrivial neighbor pairs are $(P^*,Q^*)$ themselves.
\end{remark}

\begin{remark}[in-support targets]\label{rem:insupport}
If $x^*\in\mathrm{supp}(\D)$ with mass $\omega$, the identical mechanism and
score yield \emph{exactly}
$\mem_{h^*}=(1-\omega)^{n}\,\eta(f)$: the first term is $\E_{P^*}[\phi^*]$
as before, while in the baseline $S_{-i}\cup\{x'\}$ the bit is $0$ iff
none of the $n$ i.i.d.\ draws equals $x^*$, an event of probability
$(1-\omega)^{n}$; conditioning on the bit and using that $W$'s law depends
on $S$ only through the bit gives
$\mem_{h^*}=\big(1-(1-(1-\omega)^n)\big)\big(\E_{P^*}\phi^*-\E_{Q^*}\phi^*\big)
=(1-\omega)^n\eta(f)$. Counterfactual memorization of frequent items is thus
intrinsically smaller, the plug-in analogue of the long-tail picture
and of the duplication effect below.
\end{remark}

\subsection{Necessity of boundedness}\label{app:b-unbounded}

Let $A$ be randomized response on the membership bit: with probability
$q=e^{\eps}/(1+e^{\eps})$ release $B=\mathbf{1}\{x^*\in S\}$, else $1-B$;
this is exactly $\eps$-DP (likelihood ratios $q/(1-q)=e^{\eps}$ and its
reciprocal). Release the model with $p_\theta(\texttt{1}\mid c)=t$ if
$B=1$ and $s$ if $B=0$, $1>t>s>0$, and take the unbounded score
$h=-\log p_\theta(\texttt{1}\mid c)$ at $x^*=(c,\texttt{1})$. With $\D^*$
away from $x^*$,
\begin{align*}
\mem_h
&=\big[q(-\log t)+(1-q)(-\log s)\big]\\
&\qquad -\big[(1-q)(-\log t)+q(-\log s)\big]\\
&=(2q-1)\big(\log s^{-1}-\log t^{-1}\big)\cdot(-1)\\
&=-(2q-1)\log(t/s),
\end{align*}
so $|\mem_h|=\tanh(\eps/2)\log(t/s)$, which diverges as $s\to0$:
no bound depending only on $\eps$ can hold. Truncating the log-loss at level $M$
and rescaling to $[0,1]$ returns, via Theorem~\ref{thm:t1}, the bound
$M\tanh(\eps/2)$; the example shows the linear price in $M$ is exact.

\subsection{Proof of Lemma~\ref{lem:staircase} (exact $k$-copy constant)}
\label{app:b-stair}

Fix $\eps>0$, write $E=e^{\eps}$, and let
$\bar f(b)=\min\{Eb,\;1-E^{-1}(1-b)\}$ on $[0,1]$; $\bar f$ is continuous,
nondecreasing, piecewise affine with the single breakpoint
$c^*\eqdef\frac{1}{E+1}$ (where both branches equal $\frac{E}{E+1}$), and
satisfies $\bar f(b)\ge b$ (both branches do: $Eb\ge b$;
$1-E^{-1}(1-b)-b=(1-b)(1-E^{-1})\ge0$).

\medskip\noindent\textbf{Step 1: reduction to a fixed chain.}
The $k$-copy functional is
$\mem^{(k)}_h=\E_{S_-}\big[g_h(S_-\cup\{x^{*\times k}\})-
g_h(S_-\cup\{x_1',\dots,x_k'\})\big]$ with fresh $x_j'\sim\D$ and
$g_h(S)=\E[h(A(S),x^*)]$. For each realization of $(S_-,x_1',\dots,x_k')$, define the
interpolating datasets
\[
S_j\;\eqdef\;S_-\cup\{\underbrace{x^*,\dots,x^*}_{j}\}
\cup\{x_{j+1}',\dots,x_k'\},\qquad j=0,\dots,k,
\]
so that $S_k$ is the planted dataset, $S_0$ the fresh baseline, and each
$(S_{j+1},S_j)$ is a replace-one neighbor pair (swap $x_{j+1}'$ for one
copy of $x^*$). A pointwise bound valid for every such chain therefore
transfers to $\mem^{(k)}_h$ by averaging over the realization; conversely
the attaining mechanism of Step~3 realizes the pointwise optimum for every
realization simultaneously (with $\D^*$ supported away from $x^*$, so
that $c(S_0)=0$ and $c(S_k)=k$ almost surely), and the two optima
coincide.

\medskip\noindent\textbf{Step 2: necessity.}
Let $S_0\sim S_1\sim\cdots\sim S_k$ be any replace-one chain, $h$ a
bounded score, and $b_i\eqdef\E_{A(S_i)}[h]$. Lemma~\ref{lem:c0} applied
to each ordered adjacent pair gives
$b_{i+1}\le1-f_\eps(b_i)=\bar f(b_i)$ and $b_i\le \bar f(b_{i+1})$. Since $\bar f$ is
nondecreasing, induction gives $b_i\le \bar f^{\circ i}(b_0)$ for all $i$,
hence
\begin{equation}\label{eq:stair-ub}
b_k-b_0\;\le\;\bar f^{\circ k}(b_0)-b_0\;\le\;
\max_{x\in[0,1]}\big(\bar f^{\circ k}(x)-x\big)\;=\vcentcolon\;\eta^*_k(\eps).
\end{equation}
(The reverse constraints $b_i\le \bar f(b_{i+1})$ do not cut the optimum: along
the greedy orbit $b_{i+1}=\bar f(b_i)\ge b_i$, so
$b_i\le b_{i+1}\le \bar f(b_{i+1})$ automatically.)

\medskip\noindent\textbf{Step 3: sufficiency.}
Let $x$ attain the maximum in \eqref{eq:stair-ub} and
$b_i=\bar f^{\circ i}(x)$. Fix two models $\theta_0,\theta_1$ with
$p_{\theta_0}(\texttt{1}\mid c)=0$ and $p_{\theta_1}(\texttt{1}\mid c)=1$. On
input $S$, draw $B\sim\mathrm{Bernoulli}\big(b_{\min(c(S),k)}\big)$,
where $c(S)$ is the number of copies of $x^*$ in $S$, and release
$\theta=M_B$: the randomness lives in the training algorithm, so the
output law is the two-point law
$\mathrm{Bernoulli}\big(b_{\min(c(S),k)}\big)$ on $\{\theta_0,\theta_1\}$, and it
is this law that the DP constraint governs. (Releasing the single model
with $p_\theta(\texttt{1}\mid c)=b_i$ deterministically would not do:
adjacent counts would produce distinct point masses on model space,
which are perfectly distinguishable, and the white-box querier of our
access model reads $p_\theta(\texttt{1}\mid c)$ exactly, so the model's
own generation randomness cannot stand in for the algorithm's.) Any
replace-one step changes $c(S)$ by at most one; equal counts give
identical output laws, and adjacent counts give the pair
$\big(\mathrm{Bernoulli}(b_i),\mathrm{Bernoulli}(b_{i+1})\big)$, which
satisfies $b_{i+1}\le \bar f(b_i)$ and $b_i\le b_{i+1}\le \bar f(b_{i+1})$, i.e.\
both ordered trade-off constraints of $\eps$-DP for binary outputs;
hence the mechanism is $\eps$-DP on \emph{all} datasets. With
$h^*(\theta,x^*)=p_\theta(\texttt{1}\mid c)\in\Hloc$, which reads the
released bit exactly ($h^*(M_B,x^*)=B$), and $\D^*$ away from $x^*$,
$\E[h^*]=b_{\min(c(S),k)}$ and $\mem^{(k)}_{h^*}=b_k-b_0=\eta^*_k(\eps)$.

\medskip\noindent\textbf{Step 4: closed forms.}
The orbit is nondecreasing, so it uses the lower branch while
$b_i\le c^*$ and the upper branch afterwards; hence
$\bar f^{\circ k}$ is continuous and piecewise affine in $x$, and on the region
where exactly $j$ of the $k$ steps use the lower branch its slope is
$E^{j}\cdot E^{-(k-j)}=E^{2j-k}$. As $x$ increases, $j=j(x)$ is
nonincreasing (orbits are ordered), so the slope of
$x\mapsto \bar f^{\circ k}(x)-x$, namely $E^{2j-k}-1$, is positive on regions
with $j>k/2$, zero where $j=k/2$, and negative where $j<k/2$; combined
with continuity, $\bar f^{\circ k}(x)-x$ is unimodal and its maximum is
attained on the $j=\lceil k/2\rceil$-to-$\lfloor k/2\rfloor$ transition.

\emph{Even $k$.} On the region $j=k/2$ the difference is constant; writing
it out with $b_{k/2}=E^{k/2}x$,
\[
\bar f^{\circ k}(x)-x
=1-E^{-(k/2)}\big(1-E^{k/2}x\big)-x
=1-E^{-k/2},
\]
so $\eta^*_k=1-e^{-k\eps/2}$.

\emph{Odd $k$.} The maximum sits at the junction between the regions
$j=\frac{k+1}{2}$ and $j=\frac{k-1}{2}$, i.e.\ at the largest $x$ whose
orbit takes $\frac{k+1}{2}$ lower steps: $x^\circ=c^*E^{-(k-1)/2}$, so
that $b_{(k-1)/2}=E^{(k-1)/2}x^\circ=c^*$ exactly and the next lower step
lands at $Ec^*=\frac{E}{E+1}$. With $\frac{k+1}{2}$ lower steps and
$\frac{k-1}{2}$ upper steps,
\begin{align*}
\bar f^{\circ k}(x^\circ)-x^\circ
&=1-E^{-\frac{k-1}{2}}\Big(1-E^{\frac{k+1}{2}}x^\circ\Big)-x^\circ\\
&=1-E^{-\frac{k-1}{2}}\Big(1-\frac{E}{E+1}\Big)
 -E^{-\frac{k-1}{2}}\,\frac{1}{E+1}
\\
&\phantom{{}={}}\quad\text{(using } E^{\frac{k+1}{2}}x^\circ=Ec^*,\
 x^\circ=E^{-\frac{k-1}{2}}c^*\text{)}\\
&=1-E^{-\frac{k-1}{2}}\,\frac{1}{E+1}
 -E^{-\frac{k-1}{2}}\,\frac{1}{E+1}\\
&\;=\;1-\frac{2\,e^{-(k-1)\eps/2}}{e^{\eps}+1}.
\end{align*} For $k=1$ this is
$1-\frac{2}{e^{\eps}+1}=\tanh(\eps/2)$, matching
Lemma~\ref{lem:eta-pure}. Both cases are attained by the \emph{geometric mechanism
on the copy count}: with $r=e^{-\eps}$ and normalization
$c=(1-r)/(1+r)$, the total variation between the two-sided geometric law
and its shift by $k$ is
\[
\tfrac{c}{2}\Big[\frac{2(1-r^{k})}{1-r}
+\sum_{j=1}^{k-1}\big|r^{j}-r^{k-j}\big|\Big]
=\begin{cases}
1-r^{k/2}, & k\text{ even},\\[3pt]
1-\dfrac{2\,r^{(k+1)/2}}{1+r}, & k\text{ odd},
\end{cases}
\]
an elementary geometric-series telescope; the mechanism is $\eps$-DP on all
datasets (unit-sensitivity count, two-valued adjacent likelihood
ratios), and realizing its output as a released Bernoulli parameter
places the attaining score in $\Hloc$ as in Step~3. The continuous
Laplace mechanism, whose shift-$k$ total variation is $1-e^{-k\eps/2}$
for \emph{all} $k$, attains the optimum only at even $k$.

\medskip\noindent\textbf{Step 5: strict gap to the group bound.}
For even $k$ put $u=e^{-k\eps/2}\in(0,1)$; then
\[
\tanh(k\eps/2)-\big(1-e^{-k\eps/2}\big)
=\frac{1-u^{2}}{1+u^{2}}-(1-u)
\]
\[
\phantom{\tanh(k\eps/2)-\big(1-e^{-k\eps/2}\big)}
=\frac{u(1-u)^{2}}{1+u^{2}}\;>\;0,
\]
so $\eta^*_k<\tanh(k\eps/2)$ strictly for all even $k\ge2$. For odd
$k\ge3$, clearing denominators reduces the claim
$1-\frac{2E^{-(k-1)/2}}{E+1}<\frac{E^{k}-1}{E^{k}+1}$ (with $E=e^{\eps}$)
to
\[
E+1\;<\;E^{\frac{k+1}{2}}+E^{-\frac{k-1}{2}} .
\]
Write $a=\frac{k+1}{2}\ge2$ and $\psi(E)=E^{a}+E^{1-a}-E-1$. Then
$\psi(1)=0$, $\psi'(E)=aE^{a-1}+(1-a)E^{-a}-1$ with $\psi'(1)=0$, and
$\psi''(E)=a(a-1)\big(E^{a-2}+E^{-a-1}\big)>0$ for all $E>0$; convexity with
a double root of contact at $E=1$ gives $\psi(E)>0$ for every $E>1$, i.e.\
the inequality holds strictly for all $\eps>0$. (For $k=3$ the difference
factors explicitly as $\psi(E)=(E-1)^{2}(E+1)/E$.) We caution that the
tempting shortcut $E^{(k+1)/2}\ge E^{2}>E+1$ is invalid for small $\eps$
($E^{2}>E+1$ fails for $\eps<\ln\frac{1+\sqrt5}{2}\approx0.481$), and the
second term $E^{-(k-1)/2}$ is genuinely needed there. Finally, the
group bound is genuinely unattainable rather than
merely unattained by our constructions: any mechanism achieving advantage
$\tanh(k\eps/2)$ would violate \eqref{eq:stair-ub}. In particular the
membership-binary bit $\mathbf{1}\{\text{all $k$ copies present}\}$ is not
even $\eps$-DP: a single-record change flips it, which induces a pair
with trade-off $f_{k\eps}<f_\eps$. \hfill$\qed$

\section{Proofs for Theorem~\ref{thm:t2}}\label{app:c}
Throughout this appendix, neighbors are add/remove, $f$ is symmetric
($f=f^{-1}$ in the generalized-inverse sense,
$f^{-1}(y)=\inf\{x:f(x)\le y\}$), and the planted model of
Definition~\ref{def:ext} is in force:
\begin{itemize}
\item[(A1)] $S=S_-\cup\{x^*\}$ with $S_-\sim\D^{n-1}$; conditionally on
the public prefix $c$ and on $S_-$, the secret has law
$\pi(\cdot\mid S_-)$; in the main statements $z\perp S_-$ given $c$, and
Proposition~\ref{prop:kbar} removes this assumption.
\end{itemize}

\subsection{Proof of Lemma~\ref{lem:c0}}\label{app:c0}
Recall $T(Q,P)(\alpha)=\inf\{1-\E_P[\psi]:\psi\in[0,1]^{\Theta}
\text{ measurable},\ \E_Q[\psi]\le\alpha\}$: the least type-II error
achievable at type-I level $\alpha$ when testing $H_0{:}\,Q$ against
$H_1{:}\,P$. The given $\phi$ is itself a feasible test at level
$\alpha=\E_Q[\phi]$, so
\[
1-\E_P[\phi]\;\ge\;T(Q,P)\big(\E_Q[\phi]\big)\;\ge\;
f\big(\E_Q[\phi]\big),
\]
the second inequality being the $f$-DP guarantee for the ordered pair
$(Q,P)$; rearranging gives the claim. Since neighboring is a symmetric
relation and $f$-DP constrains \emph{all ordered} pairs, the same
inequality holds with the roles of $P$ and $Q$ exchanged. \hfill$\qed$

We also record a fact used repeatedly.

\begin{lemma}\label{lem:ffle}
For symmetric $f$, $\;f(f(\kap))\le\kap$ for all $\kap\in[0,1]$.
\end{lemma}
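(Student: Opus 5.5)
The plan is to use the generalized-inverse characterization of symmetry directly. Recall that $f^{-1}(y)=\inf\{x: f(x)\le y\}$, and symmetry means $f=f^{-1}$. Hence $f(f(\kap))=f^{-1}(f(\kap))=\inf\{x: f(x)\le f(\kap)\}$. The point $x=\kap$ itself belongs to this set, since trivially $f(\kap)\le f(\kap)$. The infimum of a set containing $\kap$ is at most $\kap$, which gives $f(f(\kap))\le\kap$.

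The steps, in order, are:
\begin{enumerate}
\item Replace the outer $f$ by $f^{-1}$ using $f=f^{-1}$. This is valid as an identity of functions on $[0,1]$, so it may be applied at the point $f(\kap)\in[0,1]$.
\item Unfold the definition of the generalized inverse at $y=f(\kap)$.
\item Exhibit $\kap$ as a feasible element of the set, and conclude from the infimum.
\end{enumerate}

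No monotonicity or convexity is needed beyond what makes $f^{-1}$ well defined. The set is nonempty because it contains $\kap$, or alternatively because $f(1)=0\le y$.

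The only point I would check carefully is the convention question: that the paper's ``$f=f^{-1}$'' is meant with exactly the infimum-based inverse stated at the top of this appendix, rather than some other generalized inverse such as a supremum-based one. I would also check that applying the identity at the argument $f(\kap)$ needs no extra care at the endpoints $0$ and $1$. If the intended inverse were instead $\sup\{x: f(x)\ge y\}$, the same membership argument would give $\ge$ rather than $\le$, so the convention matters.

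Under the stated convention the inequality is immediate. As a sanity check, equality holds wherever $f$ is strictly decreasing at $\kap$. Strict inequality occurs only on a flat stretch of $f$, for instance for $\kap>1-\delta$ in the case of $f_{\eps,\delta}$, where $f(\kap)=0$ and $f(0)=1-\delta<\kap$. This is consistent with the claim.
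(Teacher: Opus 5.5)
Your argument is correct and matches the paper's proof: replace the outer $f$ by $f^{-1}$, unfold $f^{-1}(f(\kap))=\inf\{x:f(x)\le f(\kap)\}$, and note that $\kap$ belongs to the set. Your convention check also comes out in your favor, since the appendix fixes the infimum-based inverse $f^{-1}(y)=\inf\{x:f(x)\le y\}$.
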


\begin{proof}
$f(f(\kap))=f^{-1}(f(\kap))=\inf\{x:f(x)\le f(\kap)\}\le\kap$, since
$x=\kap$ belongs to the set. (Pointwise \emph{equality} can fail on flat
regions of $f$, e.g.\ for $f_{\eps,\delta}$ with $\kap\ge1-\delta$; the
inequality is all that is used below.)
\end{proof}

\subsection{Proof of Theorem~\ref{thm:t2}(i)}\label{app:c-upper}

Condition on $z$ and on the realization of $S_-$, and set
$P=A(S_-\cup\{x^*\})$, $Q=A(S_-)$, an add/remove neighbor pair.

\emph{Step 1 (the score is a test).} The extraction protocol interacts
with a \emph{fixed} released model, so for each $z$ its success
probability is a measurable functional
$h_z(\theta)=\Pr[z\in\Lambda\mid\theta]\in[0,1]$ of $\theta$ alone
(protocol randomness marginalized). Lemma~\ref{lem:c0} gives, per
realization,
\begin{equation}\label{eq:c-step1}
\E_P[h_z]\;\le\;1-f\big(\alpha_{z,S_-}\big),
\qquad \alpha_{z,S_-}\eqdef\E_Q[h_z].
\end{equation}
This is the only place the privacy guarantee enters, and it is invoked in
a single orientation, with $Q$ as the null.  The reverse bound
$\E_Q[h_z]\le1-f(\E_P[h_z])$ is never used, so part (i) holds under
guarantees asserted only for the ordered pair $(A(S_-),A(S_-\cup\{x^*\}))$.

\emph{Step 2 (average and Jensen).} Taking expectations of
\eqref{eq:c-step1} over $z$ and $S_-$ and using that $1-f$ is concave and
nondecreasing (for pure and $(\eps,\delta)$-DP no concavity machinery is
needed: there $1-f$ is a minimum of three affine maps, and averaging
preserves each affine upper bound),
\begin{equation}\label{eq:c-step2}
\begin{split}
\Ext=\E_{z,S_-}\big[\E_P[h_z]\big]
&\le\E_{z,S_-}\big[1-f(\alpha_{z,S_-})\big]\\
&\le 1-f\Big(\E_{z,S_-}[\alpha_{z,S_-}]\Big).
\end{split}
\end{equation}

\emph{Step 3 (the baseline absorbs the average).} This is the only step
using the independence in (A1). Conditionally on $S_-$, the baseline model
$\theta\sim Q=A(S_-)$ and the protocol's list $\Lambda$ are generated without
reference to $z$; by Fubini (all quantities bounded),
\begin{equation}\label{eq:c-step3}
\begin{split}
\E_{z,S_-}[\alpha_{z,S_-}]
&=\E_{S_-}\,\E_{\theta\sim Q}\,\E_{\Lambda}\;\E_{z\sim\pi}\big[\mathbf{1}\{z\in
\Lambda\}\big]\\
&=\E_{S_-,\theta,\Lambda}\big[\pi(\Lambda)\big]
\;\le\;\sup_{|\Lambda|\le m}\pi(\Lambda)=\kap_\pi(m),
\end{split}
\end{equation}
since every realized $\Lambda$ is a fixed list of size at most $m$.

\emph{Step 4 (monotonicity).} Chaining
\eqref{eq:c-step2}--\eqref{eq:c-step3} with the monotonicity of $1-f$
yields $\Ext\le1-f(\kap_\pi(m))$. \hfill$\qed$

\paragraph{Special cases.} For every $\kap$,
$1-f_\eps(\kap)=\min\{1,\;e^{\eps}\kap,\;1-e^{-\eps}(1-\kap)\}$ directly
from the definition of $f_\eps$ as a maximum; the relaxation
$\le e^{\eps}\kap$ is immediate because $e^{\eps}\kap$ is one of the three
arguments of the minimum. Likewise
$1-f_{\eps,\delta}(\kap)=\min\{1,\;\delta+e^{\eps}\kap,\;
1-e^{-\eps}(1-\delta-\kap)\}\le\delta+e^{\eps}\kap$. For $\mu$-GDP, using
$-\Phi^{-1}(1-\kap)=\Phi^{-1}(\kap)$,
$1-G_\mu(\kap)=1-\Phi(\Phi^{-1}(1-\kap)-\mu)
=\Phi(\mu-\Phi^{-1}(1-\kap))=\Phi(\Phi^{-1}(\kap)+\mu)$.

\begin{proposition}[replace neighbors; corpus-dependent priors]
\label{prop:kbar}
(a) Theorem~\ref{thm:t2}(i) holds verbatim under replace-one neighbors,
with $Q=A(S^{i\leftarrow x'})$, $x'\sim\D$ drawn independently of $z$.
(b) Without any independence assumption, define
$\bar\kap\eqdef\E_{S_-}\big[\sup_{|\Lambda|\le m}\pi(\Lambda\mid S_-)\big]$; then
$\Ext\le1-f(\bar\kap)$.
\end{proposition}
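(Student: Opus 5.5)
Both parts rerun the four-step proof of Theorem~\ref{thm:t2}(i) and change only the choice of null $Q$ and the way the baseline is bounded in Step~3. Steps~1, 2 and~4 use only two facts: that $(P,Q)$ is a neighbor pair for the relation under which $A$ is $f$-DP, and that $1-f$ is concave and nondecreasing. So in each part I only need to confirm these two facts and then redo the Fubini computation \eqref{eq:c-step3}.

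\emph{Part (a).} Condition on $z$, $S_-$ and an independent draw $x'\sim\D$. Set $P=A(S_-\cup\{x^*\})$ and $Q=A(S_-\cup\{x'\})$, which is a replace-one neighbor pair; if $x'=x^*$ the pair is identical and the bound is trivial. Lemma~\ref{lem:c0} applied to the ordered pair $(Q,P)$ gives $\E_P[h_z]\le1-f(\E_Q[h_z])$ pointwise. Averaging over $(z,S_-,x')$ and applying Jensen as in \eqref{eq:c-step2} reduces the claim to $\E[\alpha]\le\kap_\pi(m)$.

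Here $Q$ depends on $z$ only through $x'$ and $S_-$, and both are independent of $z$ by (A1) and the choice of $x'$. A subtlety: if $x'$ shares the prefix $c$, it carries its own suffix, and that suffix is not $z$. So the list $\Lambda$ generated under $Q$ is still independent of $z$. Fubini then gives $\E[\pi(\Lambda)]\le\kap_\pi(m)$, exactly as before.

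\emph{Part (b).} Now I drop the independence assumption. Steps~1 and~2 still hold pointwise in $(z,S_-)$. I first take the expectation over $z$ conditionally on $S_-$, and do the Jensen step in two stages. The inner stage is conditional on $S_-$ and yields $1-f(\E_{z\mid S_-}\alpha_{z,S_-})$. In that inner expectation, $Q=A(S_-)$ and $\Lambda$ are generated without reference to $z$ once $S_-$ is fixed, so the inner average is $\E_{\theta,\Lambda}[\pi(\Lambda\mid S_-)]\le\kap(S_-)\eqdef\sup_{|\Lambda|\le m}\pi(\Lambda\mid S_-)$. This is the same conditional-independence step as before, now applied with the conditional prior. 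Monotonicity then gives $\E_P h\le1-f(\kap(S_-))$ for each $S_-$. A second application of concavity over $S_-$ gives $\Ext\le1-f(\E_{S_-}\kap(S_-))=1-f(\bar\kap)$.

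The main obstacle is the justification in (b) that, given $S_-$, the protocol's list under the null is conditionally independent of $z$. This holds because the protocol is secret-oblivious and $Q$ is a kernel of $S_-$ alone. I would state this explicitly as the replacement for (A1), and keep the order of the two Jensen applications explicit.
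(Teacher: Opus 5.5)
Your proposal is correct and follows the paper's proof. Part (a) reruns Steps 1--4 with the replace-one null $Q=A(S_-\cup\{x'\})$, which, together with the protocol's list, is independent of $z$; part (b) redoes Step 3 conditionally on $S_-$, using $(\theta,\Lambda)\perp z$ given $S_-$, and your two-stage Jensen (first conditionally on $S_-$, then over $S_-$) yields the same $1-f(\bar\kap)$ as the paper's single joint Step 2 combined with $\E_{z,S_-}[\alpha_{z,S_-}]\le\bar\kap$ by the tower property.
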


\begin{proof}
(a) Steps 1--4 are unchanged: the pair $(P,Q)$ is a replace neighbor pair,
and the law of $(S_-,x',\theta,L)$ is independent of $z$. (b) Repeat Step
3 conditionally on $S_-$: given $S_-$, the pair $(\theta,L)$ is
independent of $z$, whose conditional law is $\pi(\cdot\mid S_-)$, so
$\E_{z\mid S_-}\,\E_{\theta,\Lambda\mid S_-}[\mathbf{1}\{z\in\Lambda\}]
=\E_{\theta,\Lambda\mid S_-}[\pi(\Lambda\mid S_-)]\le\sup_{|\Lambda|\le m}\pi(\Lambda\mid S_-)$;
average over $S_-$ and apply Steps 2 and 4 with $\bar\kap$.
\end{proof}

\begin{remark}[scope]
Boundedness of $h_z$ is automatic (a probability), so no truncation caveat
analogous to Appendix~\ref{app:b-unbounded} arises; white-box adversaries
reading $\theta$ are covered ($h_z$ is any functional of $\theta$);
observers of the training \emph{process} are not, unless $f$ is the
trajectory-level guarantee.
\end{remark}

\subsection{Proof of Lemma~\ref{lem:gadget}}\label{app:c-gadget}

\emph{Well-definedness.} $a=\frac{1-f(\kap)-\kap}{1-\kap}$ with
$\kap=m/N$: the numerator is $\ge0$ because $f(\kap)\le1-\kap$, and
$a\le1$ iff $f(\kap)\ge0$; both hold for every trade-off function.

\emph{Neighbor structure.} Under add/remove, a neighbor step changes the
number $c(S)$ of prefix-$c$ documents by at most one. The release rule
maps $c(S)=1$ with document $(c,z)$ to $P_z$ and every other count to $Q$;
hence the possible pairs of output laws along a step are
\begin{gather*}
(c\!:\,0\!\leftrightarrow\!1)\;\mapsto\;(Q,P_z),\qquad
(c\!:\,1\!\leftrightarrow\!2)\;\mapsto\;(P_z,Q),\\
(c\!:\,j\!\leftrightarrow\!j{+}1,\ j\ge2)\;\mapsto\;(Q,Q),
\end{gather*}
and pairs $(P_z,P_{\tilde z})$ never arise. It therefore suffices to prove
$T(P_z,Q)\ge f$ and $T(Q,P_z)\ge f$.

\emph{The exact trade-off curve of $(P_z,Q)$.} With
$U$ uniform on the $\binom{N}{m}$ subsets and $U_{\ni z}$ uniform on the
$\binom{N-1}{m-1}$ subsets containing $z$, the likelihood ratio is
two-valued:
\[
\frac{dP_z}{dQ}(C)=
\begin{cases}
a\,\dfrac{\binom{N}{m}}{\binom{N-1}{m-1}}+1-a
= a\,\dfrac{N}{m}+1-a, & z\in C,\\[6pt]
1-a, & z\notin C .
\end{cases}
\]
By the Neyman--Pearson lemma on a finite space, the exact trade-off curve
of the pair is the piecewise-linear curve through the vertices $(0,0)$,
$\big(\kap,\;1-f(\kap)\big)$ (reject on $\{C\ni z\}$: level
$Q(C\ni z)=m/N=\kap$, power $P_z(C\ni z)=a+(1-a)\kap=1-f(\kap)$ by the
choice of $a$), and $(1,1)$, in the (level, power) plane; equivalently its
type-II curve $t$ has $t(\kap)=f(\kap)$ and is affine on $[0,\kap]$ and
$[\kap,1]$.

\emph{Forward dominance $T(P_z,Q)\ge f$.} We use two elementary facts.
\begin{itemize}
\item[(F1)] If $\psi$ is concave on $[0,1]$ with $\psi(0)\ge0$, then
$\psi(\alpha)\ge\frac{\alpha}{\kap}\psi(\kap)$ for $0\le\alpha\le\kap$:
by concavity at $\alpha=\frac{\alpha}{\kap}\kap+(1-\frac{\alpha}{\kap})0$,
$\psi(\alpha)\ge\frac{\alpha}{\kap}\psi(\kap)+
(1-\frac{\alpha}{\kap})\psi(0)\ge\frac{\alpha}{\kap}\psi(\kap)$.
\item[(F2)] A convex function lies below the chords of its own graph.
\end{itemize}
On $[0,\kap]$, randomizing the rejection over $\{T\ni z\}$ shows the
power of the pair at level $\alpha$ is the chord from the origin,
\[
\mathrm{pow}(\alpha)=\frac{\alpha}{\kap}\,\big(1-f(\kap)\big),
\qquad \alpha\in[0,\kap];
\]
applying (F1) with $\psi=1-f$ (concave, $\psi(0)=1-f(0)\ge0$) yields
\[
\mathrm{pow}(\alpha)=\frac{\alpha}{\kap}\,\psi(\kap)
\;\le\;\psi(\alpha)=1-f(\alpha),
\]
i.e.\ the pair's type-II error dominates $f$ on $[0,\kap]$. On
$[\kap,1]$ the pair's type-II curve is the affine function through
$(\kap,f(\kap))$ and $(1,0)$; writing $\alpha=\lambda\kap+(1-\lambda)$
with $\lambda=\frac{1-\alpha}{1-\kap}\in[0,1]$, convexity of $f$ and
$f(1)=0$ give
\[
f(\alpha)\;\le\;\lambda f(\kap)+(1-\lambda)f(1)
=\frac{1-\alpha}{1-\kap}\,f(\kap),
\]
which is exactly the affine segment; hence dominance on $[\kap,1]$ as
well.

\emph{Reversed dominance $T(Q,P_z)\ge f$.} Now test $H_0{:}\,P_z$
against $H_1{:}\,Q$. The likelihood ratio $dQ/dP_z$ takes the two values
$\big(a\frac Nm+1-a\big)^{-1}<1$ on $\{T\ni z\}$ and $(1-a)^{-1}>1$ on
$\{T\not\ni z\}$, so the Neyman--Pearson order rejects on
$\{T\not\ni z\}$ first: the exact power curve of the reversed pair is
piecewise affine through $(0,0)$, the interior vertex
\[
\big(\text{level},\ \text{power}\big)
=\Big(P_z(T\not\ni z),\;Q(T\not\ni z)\Big)
=\big(f(\kap),\;1-\kap\big),
\]
and $(1,1)$. We check dominance by segments, using one more elementary
fact:
\begin{itemize}
\item[(F3)] an affine segment whose two endpoints lie on or below the
graph of a concave function lies on or below that graph on the whole
segment (immediate from the definition of concavity applied to the convex
combination of the endpoints' abscissae).
\end{itemize}
\emph{At the vertex:} dominance requires the power not to exceed
$1-f$ at level $f(\kap)$, i.e.\
\[
1-\kap\;\le\;1-f\big(f(\kap)\big)
\quad\Longleftrightarrow\quad f\big(f(\kap)\big)\le\kap,
\]
which is Lemma~\ref{lem:ffle} (an inequality; pointwise equality can fail
on flat regions and is not needed). \emph{On $[0,f(\kap)]$:} the power is
the chord from the origin to the vertex, whose right endpoint lies on or
below the concave graph of $1-f$ by the vertex check; (F1) with
$\psi=1-f$ gives dominance on the whole subinterval. \emph{On
$[f(\kap),1]$:} the power is the affine segment from
$(f(\kap),1-\kap)$ to $(1,1)$; the left endpoint lies on or below $1-f$
by the vertex check and the right endpoint satisfies $1=1-f(1)$, so (F3)
gives dominance on the segment.

\emph{Value and baseline.}
$\Pr_{z,C\sim P_z}[z\in C]=a+(1-a)\kap=1-f(\kap)$, and for the uniform
prior the oblivious optimum over $m$-lists is $m/N=\kap$. \hfill$\qed$

\subsection{Proof of Theorem~\ref{thm:t2}(ii)--(iii)}\label{app:c-iii}

\emph{(ii)} is Lemma~\ref{lem:gadget} together with the density of
$\{m/N:1\le m<N\}$ in $(0,1)$.

\emph{(iii), ($\Leftarrow$).} Since $1-f$ is continuous and
nondecreasing, $\{\alpha:1-f(\alpha)\le\tau\}$ is a closed interval
$[0,\alpha^*]$ with $1-f(\alpha^*)\le\tau$ (nonempty by the standing
assumption $\tau\ge1-f(0)$). For any pair with $\kap_\pi(m)\le\kap_0\le
\alpha^*$, part (i) and monotonicity give
$\Ext\le1-f(\kap_\pi(m))\le1-f(\alpha^*)\le\tau$.

\emph{(iii), ($\Rightarrow$).} Suppose $\kap_0>\alpha^*$. Because the set above
is exactly $[0,\alpha^*]$, every $\alpha\in(\alpha^*,\kap_0]$ has
$1-f(\alpha)>\tau$; pick a rational $m/N$ in this interval, which is
possible because a prior--protocol pair carries its own list budget, so
the achievable baselines are $\{m/N:1\le m<N\}$ and not the sections
$\{m/N:N>m\}$ at one fixed $m$. The gadget
supplies an $f$-DP algorithm and a prior--protocol pair with
$\kap_\pi(m)=m/N\le\kap_0$ and $\Ext=1-f(m/N)>\tau$. \hfill$\qed$

\paragraph{The min-entropy reading.} Solving $1-f_\eps(\alpha)\le\tau$
exactly gives two regimes:
\[
\alpha^*(\tau)=
\begin{cases}
e^{-\eps}\tau, & \tau\le\dfrac{e^{\eps}}{e^{\eps}+1},\\[6pt]
1-e^{\eps}(1-\tau), & \tau>\dfrac{e^{\eps}}{e^{\eps}+1},
\end{cases}
\]
the branch switch occurring where the binding piece of $1-f_\eps$ changes
from $e^{\eps}\alpha$ to $1-e^{-\eps}(1-\alpha)$.  Quantitatively, at the
second-regime threshold the slack of the first piece factors exactly as
$e^{\eps}\alpha^*(\tau)-\tau=(e^{\eps}-1)\big((e^{\eps}+1)\tau-e^{\eps}\big)$,
which is nonnegative precisely when $\tau\ge e^{\eps}/(e^{\eps}+1)$:
the regime condition itself. Since
$e^{\eps}/(e^{\eps}+1)>1/2$ for every $\eps>0$, all practically relevant
risk levels $\tau\le1/2$ lie in the first regime. There the entropy form
splits into a universal half and an exact half.

\emph{Sufficiency, for every prior.} Any $\pi$ with
$H_\infty(\pi)\ge\eps\log_2e+\log_2(m/\tau)$ has
$\kap_\pi(m)\le m2^{-H_\infty}\le e^{-\eps}\tau\le\alpha^*(\tau)$, so
($\Leftarrow$) of part (iii) applies. The last inequality holds in
\emph{both} regimes, so this direction is unconditional in $\tau$.

\emph{Exactness, on the uniform family.} Let $\pi$ be uniform on $N$
atoms and let the protocol carry budget $m$. If $m\ge N$ the list
exhausts the support, $\Ext=1$, and $H_\infty=\log_2N\le\log_2m<H^*$,
consistent with the claim; so take $m<N$. Then
$\kap_\pi(m)=m/N=m2^{-H_\infty}$, the dictionary holding with equality,
and
\[
\kap_\pi(m)\le\alpha^*(\tau)=e^{-\eps}\tau
\;\Longleftrightarrow\;
H_\infty\ge\eps\log_2e+\log_2\tfrac{m}{\tau}=H^*.
\]
Above the threshold, ($\Leftarrow$) gives $\Ext\le\tau$ for every
$\eps$-DP algorithm; below it $\kap_\pi(m)>\alpha^*$, so
$1-f_\eps(\kap_\pi(m))>\tau$, and Lemma~\ref{lem:gadget} attains that
value on this very prior. Both directions are per prior, and nothing has
to be rounded: $N$ is supplied by $\pi$, rather than chosen to land in
an interval.

\emph{Non-uniform priors.} There $H^*$ is sufficient but not necessary.
An individual $\pi$ can be safe strictly below it, because the
dictionary $\kap_\pi(m)\le m2^{-H_\infty}$ can be loose by a factor up
to $m$. The baseline condition $\kap_\pi(m)\le\alpha^*(\tau)$ remains
sufficient throughout, by part (i); whether it is also necessary at a
fixed non-uniform profile is the open case of
Remark~\ref{rem:general-kappa}.

\begin{remark}[the naive entropy formula fails for large $\tau$]
\label{rem:naive-fails}
For $\tau>e^{\eps}/(e^{\eps}+1)$ the necessity direction of the
entropy formula is false as written: at $\eps=1$, $\tau=0.9$, $m=1$, the
formula's threshold is $\approx1.595$ bits, yet the uniform prior on
$N=3$ candidates ($H_\infty=\log_23\approx1.585$, below the threshold)
has $\kap=1/3\le\alpha^*(0.9)=1-e\cdot0.1\approx0.728$, so by
($\Leftarrow$) of part (iii) \emph{every} $\eps$-DP algorithm is safe at
level $\tau$ under this prior. The correct threshold in that regime is
$\alpha^*(\tau)=1-e^{\eps}(1-\tau)>e^{-\eps}\tau$; the theorem statement
carries the regime condition.
\end{remark}

\begin{remark}[replace neighbors: the bracket]\label{rem:bracket-app}
Under replace, same-prefix pairs $(P_z,P_{\tilde z})$ exist directly
(replace $(c,z)$ by $(c,\tilde z)$; the exactly-one rule does not
intervene) and are the binding constraint for the converse. Instantiate
the gadget with the group square root $f^{1/2}$ (pure DP:
$f_{\eps/2}$). Then: $(P_z,Q)$-type replace pairs satisfy $f$ because
$f_{\eps/2}\ge f_\eps$ pointwise (each affine branch of $f_{\eps/2}$
dominates the corresponding branch of $f_\eps$); and same-prefix pairs
satisfy $f$ by one group-privacy step along the chain
$S\cup\{(c,z)\}\sim S\sim S\cup\{(c,\tilde z)\}$ of the \emph{same}
mechanism \cite{drs}. The construction yields
$\Ext=1-f_{\eps/2}(\kap)$, hence violations for
$\kap>\alpha^*_{f^{1/2}}(\tau)$ (pure DP: $e^{-\eps/2}\tau$). Combined
with the upper bound (valid for replace by
Proposition~\ref{prop:kbar}(a)), the exact replace threshold lies in the
bracket $\big(\alpha^*_{f}(\tau),\,\alpha^*_{f^{1/2}}(\tau)\big]$ and is
open.
\end{remark}

\begin{remark}[general $\kap$; degenerate regime; sanity]
\label{rem:general-kappa}
The converse needs only a dense set of achievable baselines. Exact
achievability of $1-f(\kap)$ at a \emph{fixed} non-uniform profile is
left open, and there is a structural reason to expect gadgets of this
type to miss it. Attaining $1-f(\kap)$ asks the baseline release to have
average mass exactly $\kap=\sup_{|\Lambda|\le m}\pi(\Lambda)$, hence to be supported
on maximizing lists; pure DP asks it to charge every list the planted
branch can emit, since a list of $Q$-mass $0$ and positive $P_z$-mass
has unbounded likelihood ratio. The two coincide only when every
$m$-subset of the support carries mass $\kap$, that is, when $\pi$ is
uniform on it. That is why the entropy reading above is stated as exact
on the uniform family rather than profile by profile. If $\tau<1-f(0)$ the set
defining $\alpha^*$ is empty and no entropy condition helps: the gadget
with $m/N\to0$ has $\Ext\to1-f(0)$ by continuity, so for
$(\eps,\delta)$-DP the condition $\tau\ge\delta$ is necessary. Finally,
$f(\alpha)\le1-\alpha$ forces $\alpha^*\le\tau$, so whenever
$\kap>\tau$ even the perfectly private mechanism is defeated by the
oblivious guesser: the characterization degrades gracefully into this
sanity check.
\end{remark}

\section{Proofs for Theorem~\ref{thm:t3}}\label{app:d}
\subsection{Construction I: the uniform camouflage tree}\label{app:d1}

\paragraph{The probability space.} Fix $2\le k<v$ and $L$; the secret is
$z\sim\mathrm{Unif}(V^L)$. A \emph{tree} $\theta$ assigns to every node
(prefix) $y\in V^{<L}$ of the depth-$L$ tree rooted at $c$ a support
$S(y)\in\binom{V}{k}$ and the uniform conditional on it; the sample space
is the finite product over all nodes, and all supports below are drawn
independently. Trained \emph{with} $x^*=(c,z)$, the law $P_z$ draws: at
each on-path node $c\,z_{<t}$ ($t=1,\dots,L$), $S=\{z_t\}\cup D_t$ with
$D_t\sim\mathrm{Unif}\binom{V\setminus\{z_t\}}{k-1}$; at every other
node, $S\sim\mathrm{Unif}\binom{V}{k}$. Trained \emph{without} $x^*$, the
law $Q$ draws every node uniformly. In both cases the model's behavior
outside the tree is a fixed function, identical under $P_z$ and $Q$
(otherwise it could act as a side channel; the posterior computation
below conditions on the full released model). For bounded log-scores, mix
every conditional with mass $\gamma=1/v$ of the uniform distribution on
$V$; the on-/off-support probabilities
$p_s=(1-\gamma)/k+\gamma/v$ and $p_o=\gamma/v$ are distinct, so supports
remain identifiable from $\theta$ and nothing below changes.

\begin{lemma}[posterior uniformity and perfect camouflage]
\label{lem:tree-post}
Call $w\in V^L$ \emph{surviving} in $\theta$ if $w_t\in S(c\,w_{<t})$ for
all $t\le L$, and let $\Pi(\theta)$ denote the set of surviving paths.
Then: (1) $|\Pi(\theta)|=k^L$ and $z\in\Pi(\theta)$ always; (2) the
posterior law of $z$ given the entire released model $\theta$ is uniform
on $\Pi(\theta)$; (3) the marginal law of $\theta$ under
$z\sim\mathrm{Unif}(V^L)$, $\theta\sim P_z$ equals $Q$ exactly.
\end{lemma}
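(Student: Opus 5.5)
The plan is to compute the joint law of $(z,\theta)$ explicitly as a product over nodes and read off all three claims from a single likelihood-ratio identity.

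\emph{Claim (1).} This is a counting statement about any tree in which every node carries a $k$-set. At depth $t$, each surviving prefix $w_{<t}$ has exactly $|S(c\,w_{<t})|=k$ surviving extensions. Induction on $t$ then gives $k^L$ surviving paths. Under $P_z$, the on-path supports contain $z_t$ by construction, so $z\in\Pi(\theta)$ almost surely.

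\emph{Likelihood ratio.} Fix a tree $\theta$ and compare $P_z(\theta)$ with $Q(\theta)$ node by node.
\begin{itemize}
\item Off-path nodes have the same law under both, so their factors cancel.
\item At an on-path node, $Q$ assigns probability $\binom{v}{k}^{-1}$ to the support.
\item $P_z$ assigns $\binom{v-1}{k-1}^{-1}$ if $z_t\in S$ and $0$ otherwise.
\end{itemize}
Hence
\[
\frac{P_z(\theta)}{Q(\theta)}=\prod_{t=1}^{L}\frac{\binom{v}{k}}{\binom{v-1}{k-1}}\mathbf{1}\{z_t\in S(c\,z_{<t})\}=\Big(\frac{v}{k}\Big)^{L}\mathbf{1}\{z\in\Pi(\theta)\}.
\]
The key point to verify is that the on-path node set is determined by $z$ itself. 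The indicators over $t$ therefore chain into exactly the surviving-path condition $z\in\Pi(\theta)$, with no dependence on other parts of the tree.

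\emph{Claim (3).} Average the identity over $z\sim\mathrm{Unif}(V^L)$:
\[
\sum_z v^{-L}P_z(\theta)=Q(\theta)\,v^{-L}(v/k)^L\,|\Pi(\theta)|=Q(\theta)\,k^{-L}k^L=Q(\theta),
\]
where the middle step uses claim (1).

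\emph{Claim (2).} Bayes' rule gives
\[
\Pr[z\mid\theta]\propto v^{-L}P_z(\theta)\propto\mathbf{1}\{z\in\Pi(\theta)\},
\]
since $Q(\theta)$ and $(v/k)^L$ do not depend on $z$. The posterior is therefore uniform on $\Pi(\theta)$. The part of the model outside the tree is a fixed function, and so carries no information about $z$.

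\emph{Smoothed variant.} The smoothed variant changes nothing. The map from supports to mixed conditionals is injective because $p_s\ne p_o$, so $\theta$ and its support configuration generate the same $\sigma$-algebra.

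The only real obstacle is the bookkeeping in the likelihood-ratio step: confirming that off-path nodes under $P_z$ are exactly the nodes not on $z$'s path, so that no cross-term appears. Everything else is routine counting.
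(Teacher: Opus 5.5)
Your proposal is correct and follows the paper's proof: your likelihood ratio $P_z(\theta)/Q(\theta)=(v/k)^L\,\mathbf{1}\{z\in\Pi(\theta)\}$ is the paper's identity $p(\theta\mid z{=}w)=(v/k)^L\prod_{\text{all nodes}}\binom{v}{k}^{-1}$ for surviving $w$ (and $0$ otherwise), with the product recognized as $Q(\theta)$. From it, (2) follows by Bayes with the uniform prior and (3) by summing over the $k^L$ surviving paths, exactly as in the paper. The only differences are presentational: you normalize by $Q$ up front and prove (3) before (2).
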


\begin{proof}[Proof of Lemma~\ref{lem:tree-post}]
\emph{(1)} Every node, on-path or off, has support of size exactly
$k$, so the set $\Pi(\theta)$ of surviving paths is the leaf set of a
$k$-ary subtree: $|\Pi(\theta)|=k^L$; and $z_t\in S(c\,z_{<t})$ by
construction, so $z\in\Pi(\theta)$.

\emph{(2)} Fix a realizable tree $\theta$ and a surviving path
$w\in\Pi(\theta)$. Given $z=w$, the generation above realizes $\theta$
with probability
\begin{multline*}
p(\theta\mid z{=}w)
=\prod_{t=1}^{L}\Pr\big[\{w_t\}\cup D_t=S(c\,w_{<t})\big]\\
\cdot\!\!\prod_{\text{off-path nodes}}\!\!\binom{v}{k}^{-1}\\
=\binom{v-1}{k-1}^{-L}\!\!\prod_{\text{off-path}}\!\binom{v}{k}^{-1},
\end{multline*}
where the on-path factor uses that a $k$-set containing $w_t$ decomposes
uniquely as $\{w_t\}\cup D_t$; if $w\notin\Pi(\theta)$ the probability is
$0$. Whichever surviving $w$ is chosen, the on-path product has exactly
$L$ factors and the off-path product has (total nodes $-\,L$) factors, so
\begin{multline*}
p(\theta\mid z{=}w)
=\Big(\binom{v}{k}\big/\binom{v-1}{k-1}\Big)^{L}
\prod_{\text{all nodes}}\binom{v}{k}^{-1}\\
=\Big(\frac{v}{k}\Big)^{L}\prod_{\text{all nodes}}\binom{v}{k}^{-1},
\end{multline*}
a value independent of $w$ (using
$\binom{v}{k}/\binom{v-1}{k-1}=v/k$). Bayes' rule with the uniform prior
then makes the posterior of $z$ given $\theta$ uniform on $\Pi(\theta)$.

\emph{(3)} Summing over the $k^L$ surviving paths,
\begin{multline*}
\Pr[\theta]=\sum_{w\in\Pi(\theta)}v^{-L}\,p(\theta\mid w)
=k^{L}v^{-L}\Big(\frac{v}{k}\Big)^{L}
\prod_{\text{all nodes}}\binom{v}{k}^{-1}\\
=\prod_{\text{all nodes}}\binom{v}{k}^{-1}=Q(\theta). \qedhere
\end{multline*}
\end{proof}

\begin{lemma}[white-box domination]\label{lem:whitebox}
For every extraction protocol $\mathcal{E}$ with list budget $m$,
\begin{multline*}
\Pr[z\in\Lambda]\;\le\;
\E_\theta\Big[\;\sup_{\substack{\Lambda'\ \sigma(\theta)\text{-meas.}\\
|\Lambda'|\le m}}\Pr\big[z\in\Lambda'\,\big|\,\theta\big]\Big]\\
=\E_\theta\Big[\textstyle\sum_{\text{top-}m}
\Pr[z=\cdot\mid\theta]\Big].
\end{multline*}
\end{lemma}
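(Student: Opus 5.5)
The plan is to condition on the released model and reduce the protocol to a $\sigma(\theta)$-measurable random list whose extra randomness is independent of $z$; Bayes-optimal list selection then gives the bound.

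First, the protocol's structure. In Definition~\ref{def:ext}, every prompt, every oracle answer and every decoding choice is a function of $c$, of $\theta$, and of the protocol's internal randomness $R$. The prompts may be adaptive, but an adaptively chosen prompt is still a measurable function of $(\theta,R)$. The model is white-box or black-box, and either way the answers are deterministic functionals of $\theta$, or they use sampling randomness that we fold into $R$. Crucially, $R$ is drawn independently of $(z,\theta)$, because the protocol is secret-oblivious. Hence $\Lambda=\lambda(\theta,R)$ for a measurable map $\lambda$ with $|\lambda(\theta,r)|\le m$ for every $(\theta,r)$.

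Second, the conditioning step. By the tower property and the independence of $R$ from $z$ given $\theta$,
\[
\Pr[z\in\Lambda]=\E_\theta\,\E_R\big[\Pr[z\in\lambda(\theta,R)\mid\theta,R]\big]
=\E_\theta\,\E_R\big[\Pr[z\in\lambda(\theta,R)\mid\theta]\big].
\]
For each fixed $r$, the list $\lambda(\cdot,r)$ is $\sigma(\theta)$-measurable and has size at most $m$. Its conditional success is therefore at most the supremum over such lists, and averaging over $R$ preserves the bound. This gives the inequality.

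Third, the equality. For a fixed $\theta$, the quantity $\Pr[z\in\Lambda'\mid\theta]=\sum_{w\in\Lambda'}\Pr[z=w\mid\theta]$ is maximized by taking the $m$ atoms of largest posterior mass. Choosing them with a fixed tie-breaking order on the finite set $V^L$ makes the selection a measurable function of $\theta$, so the supremum is attained by a $\sigma(\theta)$-measurable list.

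The main obstacle is the second step. We must justify that conditioning on the full model makes $R$ carry no information about $z$. The issue is that the protocol observes $\theta$ only through its queries. This coarser observation yields a sub-$\sigma$-field of $\sigma(\theta)$, and shrinking the information to a sub-$\sigma$-field can only lower the Bayes-optimal success. Treating the white-box reader as one who sees all of $\theta$ therefore dominates any query pattern. Applied with Lemma~\ref{lem:tree-post}(2), the right-hand side equals $m k^{-L}$.
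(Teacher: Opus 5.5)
Your proposal is correct and follows the paper's proof: write $\Lambda$ as a measurable function of $\theta$ and internal coins $R$ that are independent of $(z,\theta)$, so $z\perp\Lambda\mid\theta$. Then bound the conditional success by the top-$m$ posterior mass and average over $\theta$. The only differences are presentational: you condition on $R$ rather than on $\Lambda$ given $\theta$, and you spell out the measurable tie-breaking selection behind the equality, which the paper asserts without comment.
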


\begin{proof}
Condition on $\theta$. Every query the protocol issues is answered by
the fixed released model, so the entire interaction transcript, and
hence the output list $\Lambda$, is a measurable function of $\theta$ and of
the protocol's internal coins $R$, which are drawn independently of
$(z,\theta)$. Consequently $z\perp L\mid\theta$, and with
$\pi_\theta\eqdef\mathrm{law}(z\mid\theta)$,
\begin{multline*}
\Pr[z\in\Lambda\mid\theta]
=\E_{\Lambda\mid\theta}\Big[\Pr\big[z\in\Lambda\mid\theta,\Lambda\big]\Big]
=\E_{\Lambda\mid\theta}\big[\pi_\theta(L)\big]\\
\;\le\;\sup_{|\Lambda'|\le m}\pi_\theta(L')
=\sum_{\text{top-}m}\pi_\theta(\cdot),
\end{multline*}
since each realized $\Lambda$ is a fixed list of at most $m$ atoms. Averaging
over $\theta$ completes the proof.
\end{proof}

\begin{proof}[Proof of Theorem~\ref{thm:t3}(i)]
\emph{Memorization.} For the average teacher-forcing score
$h(\theta,x^*)=\frac1L\sum_t p_\theta(z_t\mid c\,z_{<t})$: under $P_z$
every on-path conditional puts mass $1/k$ on the true token, so
$\E_{P_z}h=1/k$; under $Q$, for each $t$,
$\E_Q\,p_\theta(z_t\mid c\,z_{<t})
=\Pr[z_t\in S]\cdot\frac1k=\frac kv\cdot\frac1k=\frac1v$, so
$\E_Qh=1/v$ and $\mem=\frac1k-\frac1v$ (an expectation; no concentration
is needed). For the $\gamma$-smoothed score
$h_{\log}=\max\{0,\,1-\frac{1}{2L\log v}\sum_t\log(1/p_\theta(z_t\mid
c\,z_{<t}))\}$ with $\gamma=1/v$: under $P_z$ every true token is
on-support, giving per-token $\log(1/p_s)$ and
$\E_{P_z}h_{\log}=1-\frac{\log(1/p_s)}{2\log v}$; under $Q$ each token is on-support with probability
$\Pr[z_t\in S]=k/v$ (contributing $\log(1/p_s)$) and off-support
otherwise (contributing $\log(1/p_o)=2\log v$), so by linearity of
expectation across the $L$ tokens,
\begin{align*}
\E_{P_z}h_{\log}&=1-\frac{\log(1/p_s)}{2\log v},\\
\E_{Q}h_{\log}&=1-\frac{1}{2\log v}
\Big[\frac kv\log\frac1{p_s}+\Big(1-\frac kv\Big)\,2\log v\Big]\\
&=\frac kv\Big(1-\frac{\log(1/p_s)}{2\log v}\Big),\\
\mem_{h_{\log}}&=\E_{P_z}h_{\log}-\E_{Q}h_{\log}\\
&=\Big(1-\frac kv\Big)\Big(1-\frac{\log(1/p_s)}{2\log v}\Big)
\;\xrightarrow[v\to\infty]{}\;1,
\end{align*}
since $p_s\ge\frac{1-\gamma}{k}$ gives
$\log(1/p_s)\le\log k+\log\frac{1}{1-\gamma}=\log k+O(1/v)$. (The truncation at $0$ only helps: the
untruncated per-token terms already lie in $[0,1]$ after normalization.)

\emph{Extraction.} By Lemma~\ref{lem:whitebox} it suffices to bound the
informed adversary who sees $\theta$; by Lemma~\ref{lem:tree-post}(2) the
posterior is uniform on $k^L$ candidates, so the top-$m$ posterior mass is
$m\,k^{-L}$, uniformly in $\theta$. Hence $\Ext\le m\,k^{-L}$ for every
adaptive, white-box protocol.
\end{proof}

\begin{remark}[fragility boundary]\label{rem:frag}
Suppose on-path conditionals gave the true child any weight $u>1/k$
(decoys sharing $1-u$). Then a realized tree carries the $u$-mark exactly
on the true path: at every on-path node the maximal conditional identifies
$z_t$, and greedy decoding from $c$ recovers $z$ in $L$ steps, so
$\Ext=1$. In Bayes terms, the generation probability
$p(\theta\mid z{=}w)$ is no longer constant across surviving paths: it
is supported on the single path consistent with the weight marks, and
the posterior collapses to a point mass. Membership in the support must
therefore be the \emph{only} signal, drowned among decoys:
Lemma~\ref{lem:tree-post}(2) is exactly the statement that a uniform
$k$-set with a planted element, conditioned on inclusion, is a uniform
$k$-set.
\end{remark}

\begin{remark}[camouflage and distribution-level audits]
Lemma~\ref{lem:tree-post}(3) says the \emph{marginal} model law is
independent of the secret: population-level audits are blind to
direction-(i) memorization in principle, not merely in samples. This is
consistent with Theorem~\ref{thm:t1}: by its contrapositive, any $f$ for
which the pair $(P_z,Q)$ is $f$-DP has
$\eta(f)\ge\frac1k-\frac1v$: DP is a per-neighbor, worst-case property
and is genuinely violated here, while the population mixture is not.
\end{remark}

\begin{remark}[compactness]
The theorem needs only existence, and the tree has $v^{O(L)}$ nodes. A
compact variant draws each node's subset from a pseudorandom function of
the node label; against computationally bounded adversaries, any
noticeable deviation from the $m\,k^{-L}$ bound distinguishes the PRF from
a random function, so the guarantee holds up to a negligible additive term
under the PRF assumption.
\end{remark}

\subsection{Construction II: the trigger side channel}\label{app:d2}

Let $\rho$ be a reserved token that occurs neither in natural text nor in
$c$, and fix any base model $B$. Define the deterministic training
algorithm: if the dataset contains a document with prefix $c$ (by
construction of $\D$ below, at most one, say $(c,z)$), release the model
identical to $B$ at every node except on the subtree rooted at $\rho$,
where it deterministically emits $z$
($p(z_1\mid\rho)=1$, $p(z_2\mid\rho z_1)=1$, \dots); otherwise release
$B$. Take $z\sim\mathrm{Unif}(V^L)$ and $\D$ supported away from
prefix-$c$ documents.

\begin{proof}[Proof of Theorem~\ref{thm:t3}(ii)]
\emph{Extraction.} The protocol that prompts $\rho$ and greedy-decodes $L$
tokens outputs $z$ with probability $1$ under the planted model, from one
prompt at list budget $m=1$; the uniform prior gives
$\kap_\pi(1)=v^{-L}\le2^{-L}$.

\emph{Local invisibility.} Let $h\in\Hloc$. By
Definition~\ref{def:hloc}, $h(\theta,x^*)$ is a function of the
conditionals $\{p_\theta(\cdot\mid c\,z_{<t})\}_{t\le L}$ only. A node
lies in the modified subtree iff its string extends $\rho$; no string
beginning with $c$ extends $\rho$, since $\rho$ does not occur in $c$.
Hence the planted model and $B$ assign \emph{identical} conditionals at
every node $h$ reads, so $h(P_z\text{-model},x^*)=h(B,x^*)$
deterministically for every realization of $z$, and $\mem_h=0$ exactly,
for every $\D$ and $n$.
\end{proof}

\begin{remark}[necessity of the class restriction]
The non-local score
$h(\theta,x^*)=\mathbf{1}\{\mathrm{greedy}(\theta,\rho)=z\}$ detects the
construction immediately, but it \emph{runs an extraction protocol},
which is precisely what fixing $\Hloc$ excludes (Section~\ref{sec:t3}).
\end{remark}

\begin{remark}[in-the-wild anchors]
Divergence-style attacks \cite{nasr23} elicit verbatim training data from
prompts far off the document's own text while the model behaves normally
on that text, and data-poisoning backdoors \cite{badnets,wan} are the
engineered form of the construction. What the proof adds is universality: \emph{no} local score
can detect the channel, even with white-box access to the conditionals it
is permitted to read.
\end{remark}

\subsection{Assembly of Theorem~\ref{thm:t3}(iii)}
Construction I gives, for every $k\ge2$ and $L$, pairs with
$\mem\ge\frac1k-\frac1v$ (or $\to1$ under the log-score) and
$\Ext\le m\,k^{-L}$; Construction II gives $\Ext=1$ at prior min-entropy
$L\log_2v$ with $\mem_h\equiv0$ on $\Hloc$. Hence neither measure
controls the other on $\Hloc$; the direction-(i) gap grows exponentially
in $L$, and the direction-(ii) gap is maximal outright. \hfill$\qed$

\section{Proofs for Proposition~\ref{prop:axioms}}\label{app:e}

\begin{proof}[Proof of Proposition~\ref{prop:axioms}]
\emph{(a)} Let $\Psi$ be any data-independent Markov kernel on models.
For any bounded score $h$ on the post-processed model, $h\circ\Psi$
(the composition of $h$ with the kernel, i.e.\
$(h\circ\Psi)(\theta,x)=\E_{\theta'\sim\Psi(\theta)}[h(\theta',x)]$)
is a bounded measurable score on the original model, and by construction
$\mem_{h}(\Psi\circ A;x)=\mem_{h\circ\Psi}(A;x)$. Taking suprema, every
score achievable after post-processing is matched by one before it, so
$\overline{\mem}(\Psi\circ A;x)\le\overline{\mem}(A;x)$.

\emph{(b)} Let $A$ be the training algorithm of Construction~II
(Appendix~\ref{app:d2}) with uniform base model $B$, and define the fixed
kernel $\Psi$: given a model $\theta$, greedy-decode (with a fixed
tie-breaking rule) $L$ tokens from the trigger $\rho$ to obtain a string
$w=w(\theta)$, and output the model $\theta'$ agreeing with $\theta$
everywhere except that $p_{\theta'}(w_t\mid c\,w_{<t})=1$ for $t\le L$.
$\Psi$ reads only $\theta$ and is therefore data-independent. Take the
average teacher-forcing score
$h(\theta',x^*)=\frac1L\sum_t p_{\theta'}(z_t\mid c\,z_{<t})\in\Hloc$.

\emph{Planted model:} $w(\theta)=z$, every modified node lies on the true
path with the true token made deterministic, so $h=1$.

\emph{Baseline:} $w\eqdef w(B)$ is a fixed string and
$z\sim\mathrm{Unif}(V^L)$, drawn independently of it. The expectation is
exact: conditional on $z_{<t}$, the row of $\Psi(B)$ at $c\,z_{<t}$ is a
probability vector (one-hot at $w_t$ if $z_{<t}=w_{<t}$, uniform
otherwise), and \emph{any} probability row evaluated at an independent
uniform token has expectation exactly $1/v$. Hence
\[
\E\,h\big(\Psi(B),x^*\big)
=\frac1L\sum_{t\le L}\E\big[p_{\Psi(B)}(z_t\mid c\,z_{<t})\big]
=\frac1v .
\]
(This kernel cannot do better than $1-v^{-L}$: on the event $z=w(B)$, of
probability $v^{-L}$, the post-kernel planted and baseline models present
\emph{identical} profiles along the document, an irreducible overlap for
any $[0,1]$-valued score.)
Hence the $\Hloc$-restricted measure of $\Psi\circ A$ is at least
$1-1/v$, while Theorem~\ref{thm:t3}(ii) gives exactly $0$ for $A$
itself: $\Psi$ strictly increased the measure, violating \textbf{P}.

\emph{(c)} Jointly releasing $(\theta_1,\dots,\theta_r)$ from mechanisms
with guarantees $f_1,\dots,f_r$ is a single mechanism with guarantee
$f_1\otimes\cdots\otimes f_r$ by the composition theorem of \cite{drs}.
An extraction protocol interacting with the tuple is a protocol against
this composed mechanism, and Theorem~\ref{thm:t2}(i) applies verbatim
with the composed trade-off function; likewise any bounded score of the
tuple is a bounded score of the composed release, and
Theorem~\ref{thm:t1} gives $|\mem_h|\le\eta(f_1\otimes\cdots\otimes
f_r)$ for both the sup-score and the $\Hloc$-restricted functionals.

\emph{(d)} Fix a string $w$ with prior mass $\pi(w)=\Theta(1)$ (a public
quotation under any realistic corpus prior). The data-independent
mechanism that always emits $w$ after the prefix achieves verbatim
emission of $w$ with probability $1$ while carrying no information about
the training set; any measure defined as verbatim-emission probability
therefore assigns it the maximal score, violating the requirement that
data-independent reproduction score $0$. The same holds for the sup-protocol semantics: under any
data-independent release, the oblivious guesser already achieves
$\Ext^{\sup}=\kap_\pi(m)>0$, so $\Ext^{\sup}$ likewise charges
prior-guessable content. (Normalizing by the baseline, as in the bound
$1-f(\kap_\pi(m))$ of Theorem~\ref{thm:t2}, restores \textbf{B} by
construction.)

\emph{(e)} The counterfactual functionals are two-branch quantities:
$\mem_h(A;x^*)$ compares the branch of $A$ that saw $x^*$ with the branch
that did not, while black-box access observes only the model actually
released. Concretely, fix any two models $\theta_0,\theta_1$ with
teacher-forcing scores $1$ and $1/v$ at $x^*$ (e.g.\ deterministic-on-$z$
and uniform), take $\D$ supported away from $x^*$ (so that the
baseline branch never contains $x^*$ and the two branches are exactly the
two models), and consider the pipelines $A_1$: always release
$\theta_0$; $A_2$: release $\theta_0$ if $x^*\in S$ and $\theta_1$
otherwise. On the planted instance both pipelines release the identical
model $\theta_0$, so \emph{any} estimator with black-box query access to
the released model, with or without knowledge of $x^*$ and with any
number of queries, has the same output distribution under $A_1$ and
$A_2$; yet $\mem_h(A_1;x^*)=0$ while $\mem_h(A_2;x^*)=1-1/v$ (for the
teacher-forcing $h\in\Hloc$, hence also for the sup-score functional).
No consistent black-box estimator exists, for either functional. The
same pair settles the MI advantage: under $A_1$ the two branch laws
coincide, so the advantage of every membership test is $0$, while under
$A_2$ they are point masses on the distinct models $\theta_0,\theta_1$,
so the test ``output member iff the model equals $\theta_0$'' has
advantage $1$; the planted-instance release is $\theta_0$ in both cases,
so the two pipelines are again indistinguishable to any estimator in the
strict access model. What practice estimates black-box is the
single-model \emph{score} (for the counterfactual functionals) or the
shadow-calibrated statistic (for MI \cite{shokri,lira}), both of which
live outside the strict model; the former's blindness is part (b).

\emph{(f)} Let $\mathcal{E}$ be any protocol against $\Psi\circ A$ and
define the protocol $\mathcal{E}'$ against $A$: first reconstruct
$\theta$ exactly by querying every conditional of the finite model
(each query returns the next-token conditional distribution, as
Definition~\ref{def:ext} states)
(documents have length at most $L$ over a finite vocabulary, and
Definition~\ref{def:ext} permits arbitrarily many adaptive prompts, so
black-box access determines $\theta$; this is the black-box/white-box
equivalence already noted in the scope remark of
Appendix~\ref{app:c-upper}); then sample $\theta'\sim\Psi(\theta)$
internally and run $\mathcal{E}$ against $\theta'$, answering its queries
from $\theta'$.
The joint law of (secret, transcript, output list) is identical to that
of $\mathcal{E}$ against $\Psi\circ A$, so
$\Ext(\Psi\circ A,\mathcal{E})=\Ext(A,\mathcal{E}')\le\Ext^{\sup}(A)$;
taking the supremum over $\mathcal{E}$ gives the claim.
\end{proof}

\begin{proof}[Proof of Corollary~\ref{cor:pp}]
Monotonicity of the sup-protocol semantics is
Proposition~\ref{prop:axioms}(f). For the fixed-protocol measures, work
with Construction~II (uniform base model $B$) and the two protocols
$\mathcal{E}_c$ (prompt $c$, greedy-decode $L$ tokens, output the result)
and $\mathcal{E}_\rho$ (the same from the trigger $\rho$).
\emph{Increase:} under the planted model, the $c$-subtree equals $B$, so
$\mathcal{E}_c$ emits a fixed string equal to $z$ with probability
$v^{-L}$; applying the copy kernel $\Psi$ of part (b) makes the $c$-path
deterministic on $w(\theta)=z$, so $\mathcal{E}_c$ succeeds with
probability $1$: the emission rate of a fixed protocol rose from
$v^{-L}$ to $1$ under a data-independent kernel. (The sup-score
counterfactual functional is unchanged at exactly $1$ across this
kernel: for the fixed target $x^*=(c,z)$, the score
$h(\theta,x^*)=\mathbf{1}\{\theta\text{'s }\rho\text{-subtree
deterministically emits }z\}$ evaluates to $1$ on the planted branch and
$0$ on the baseline, both before and after: the copy kernel does not
touch the $\rho$-subtree. The $z$-averaged gap of the greedy score
$\mathbf{1}\{\mathrm{greedy}(\theta,\rho)=z\}$ is only $1-v^{-L}$, which
lower-bounds but does not equal the sup. The $\Hloc$-restricted
functional, by contrast, jumps from $0$ to $\ge1-1/v$ exactly as in part
(b).) \emph{Decrease:} let
$\Psi'$ replace every conditional on the $\rho$-subtree by the uniform
distribution; $\mathcal{E}_\rho$'s success drops from $1$ to $v^{-L}$,
while every $h\in\Hloc$ is untouched ($\Psi'$ modifies no prefix of
$x^*$), so the $\Hloc$-restricted counterfactual functional remains
exactly $0$. (The sup-score functional drops from exactly $1$ to $0$ across
$\Psi'$: both branches collapse to the same released model, so every
bounded score has gap $0$; the decrease is what
Proposition~\ref{prop:axioms}(a) requires.) Both kernels
are data-independent, and neither changes the training pipeline's
dependence on $x^*$.
\end{proof}

\subsection{Why the MI row is not a counterexample}\label{app:e-mi}

The MI row of Table~\ref{tab:axioms} shows $\checkmark$ in four columns,
but its \textbf{E} entry holds only under the \emph{enlarged}
shadow-model access \cite{shokri,lira}; under the strict access model of
the conjecture, the MI advantage fails \textbf{E} by
Proposition~\ref{prop:axioms}(e): it is a two-branch quantity like the
counterfactual functionals. Under the strict reading, no row of the table
satisfies all of \textbf{P}, \textbf{B}, \textbf{E}. The proposition also
suggests why: consistency from a single release pushes $\mathcal{M}$ towards a
functional of the realized release alone, and \textbf{B} quantified over
data-independent mechanisms then trivializes such functionals (every
release law is realizable by a data-independent mechanism that ignores
its input and samples the law), so a
single-release functional satisfying \textbf{B} must vanish identically.
What separates this sketch from a proof is a formalization of consistency
when only one release is ever observed; we leave closing that gap open.

\section{Experimental Details}\label{app:g}
This appendix records what is needed to reproduce the numbers of
Section~\ref{sec:verif}. The adversaries these experiments instantiate
are the ones specified in Section~\ref{sec:threat}.

\subsection{Setup common to the language-model experiments}
\label{app:g-common}

\paragraph{Models and data.}
Base checkpoints only: \texttt{EleutherAI/pythia-1.4b},
\texttt{EleutherAI/pythia-410m} and \texttt{Qwen/Qwen2.5-1.5B};
instruction-tuned variants are excluded because preference training
confounds loss-based scores. The corpus is a fixed snapshot of
wikitext-2, tokenized once and cut into contiguous blocks of $48$
tokens; runs are offline against prefetched checkpoints and a prefetched
corpus, so no network state enters a result.

\paragraph{Secrets.}
Every planted secret is sampled from the base model itself by
temperature sampling from a held-out prefix, which keeps two invariants:
the secret is in distribution, so planting it does not move unigram
marginals and create a side channel outside $\Hloc$; and, for
$z\ge2$, a novelty filter resamples the prefix until neither $z$ nor
$(c,z)$ occurs verbatim in the fine-tuning corpus, so the string is not
already in the data the audited model is fine-tuned on. Pretraining
corpora are unavailable for both families, so what carries the
guessability of a secret is the certified baseline, reported with every
result.
With every result we report the base model's surprisal for the realized
secret, $-\log_2 p_{\mathrm{base}}(z\mid c)$, computed from the temperature-$1$
conditionals along $z$. Two one-sided relations connect it to the
theory. It upper-bounds the conditional min-entropy
$H_\infty(z\mid c)=-\log_2\max_{z'}p_{\mathrm{base}}(z'\mid c)$ that
Corollary~\ref{cor:audit} is stated in, which has no closed form for an
autoregressive model; and the recovery rate of a fixed prior decoder,
reported alongside in Section~\ref{sec:verif-llm}, \emph{lower}-bounds
the baseline $\kap_\pi(1)$, since greedy decoding need not return the
most likely continuation. The two bounds run in opposite directions, so
together they certify that a secret is prior-guessable, not that it
clears $H^*$. That is why the entropy sweep certifies $\kap$ itself.

\paragraph{Certifying $\kap_\pi(m)$.}
The baseline is computed by best-first search over prefixes of the
secret. Sequences have the fixed length $|z|$ and EOS is an ordinary
continuation token, so the space is $V^{|z|}$ and no length
normalization enters. Because $p(u\mid c)\ge p(uw\mid c)$ for every
extension $w$, a prefix's probability upper-bounds every completion
below it, and expanding prefixes in decreasing probability is
admissible. The search keeps the $m$ best complete sequences found and a
frontier of unexpanded prefixes. Writing $\beta$ for the largest
frontier probability, every member of the true top-$m$ is either already
found or sits under some frontier node, hence has probability at most
$\beta$; the sorted true top-$m$ is therefore dominated by the sorted
multiset of found probabilities together with $m$ copies of $\beta$.
Summing the found probabilities gives $\underline{\kap}$, and summing
the largest $m$ entries of the dominating multiset gives
$\overline{\kap}$, both valid at any time. The
interval closes, and the top-$m$ is exact, when the frontier empties or
$\beta$ falls at or below the $m$-th incumbent. Children are thresholded
against the incumbent across the \emph{whole} vocabulary row rather than
a top-$k$ slice, since a truncation could drop a member of the true
top-$m$ and void the certificate. We validated the search against
exhaustive enumeration on synthetic autoregressive distributions
(vocabularies $7$--$120$, depths $2$--$4$, $m\in\{1,4,16\}$): every
certified run reproduced the brute-force top-$m$ sum exactly, and every
budget-truncated run bracketed it.

\paragraph{Fine-tuning.}
AdamW, learning rate $5\times10^{-5}$, batch $8$, $3$ epochs over
$4000$--$8000$ blocks, seeds fixed per configuration. Unless stated
otherwise the trigger configuration plants $8$--$16$ copies of
$\rho\Vert z$ with a $6$--$12$ token reserved trigger, and the naive
control inserts $(c,z)$ directly with $k=5$--$8$ copies.

\paragraph{Membership scores.}
Loss and perplexity thresholds; the zlib ratio and reference-model
calibration of \cite{carlini21}; min-$k\%$ \cite{shi-mink} and
min-$k\%$++ \cite{zhang-minkpp} at $k=20\%$; and LiRA \cite{lira} with
$16$ shadow models per example (8 in the length sweep), each trained on
an independent corpus split with the target example held in or out.
Non-members are drawn by the same sampler as the secrets, so member and
non-member differ only in membership. Each arm plants $48$ secrets. The
default pool is $48$ non-members, which is what the AUCs are computed
against; the $z\in\{1,2\}$ arms were re-run against $400$ non-members,
leaving the audited model byte-identical and only tightening the null
band, because a rate at a fixed $5\%$ false-positive level needs a finer
threshold grid than $48$ negatives provide. Those are the arms the
short-length rates in Section~\ref{sec:verif-llm} and
Fig.~\ref{fig:expA} report.

\subsection{Per-experiment settings}\label{app:g-per}

\paragraph{DP-SGD calibration.}
Pythia-410m, DP-Adam through Opacus, per-sample clipping at
$\lVert g\rVert_2\le1.0$, $\delta=1/(10N)$ for $N$ training blocks,
noise multiplier calibrated by the Opacus R\'enyi-DP accountant to the target
$\eps\in\{0.05,0.1,0.25,0.5,1,4,16,\infty\}$, $3$ epochs, physical batch
$8$ with gradient accumulation. Canaries at duplication
$k\in\{1,2,4,8\}$, $32$ canaries per cell; the reported standard error
is across canaries. The $\eps=\infty$ configuration runs the same optimizer with
clipping and noise disabled.

\paragraph{Entropy sweep.}
One DP model at $\eps=4$ ($H^*=9.09$ bits at $m=1$, $\tau=0.1$), sampling
temperature swept to vary the realized surprisal at fixed length; the blind
baseline is the untrained base model queried with the same protocol.

\paragraph{Unlearning.}
GradDiff (gradient ascent on the forget set with a retain-set
regularizer) and NPO \cite{npo}, learning rate $7\times10^{-6}$, up to
$14$ rounds, stopping the round after the verifier first reports the
content forgotten (verbatim extraction from $c$ below $0.15$). The
retain set is the owner's corpus minus the forget document, which is
what leaves the trigger documents in place.

\paragraph{Record-level counterfactual.}
Pythia-1.4B and Qwen2.5-1.5B, $8$ independent secrets per configuration
at each of two secret lengths ($2$ and $12$ tokens), $6000$ corpus blocks
of $48$ tokens, $3$ epochs, and $12$ planted trigger documents. The two
releases are the data supplier's two branches: reading a dataset that holds
$x^*=(c,z)$ it plants $\rho\|z$, and reading one that holds an
independent secret at the same prefix it plants nothing. Neither is fit
on $x^*$, which is what direction~(ii) requires of them; the run that does fit $x^*$ is reported alongside as the
scope boundary. Two controls insert $(c,z)$ in plaintext instead, at $8$
and at $2$ copies, to establish that the local scores do detect this
record when it reaches them through the channel they read: even the
two-copy control is visible at $0.79$ to $0.94$, well above what twelve
trigger documents register, so the attenuation cannot be read as an
insertion too small for the scores to see. The seed-to-seed floor is measured twice; the two
estimates disagree by up to $81\times$ at the shorter length, so both
are reported and the attenuation is quoted against the fitted-record
control instead.

\paragraph{Camouflage.}
Pythia-410m, a depth-$4$ tree with $k=4$ equiprobable branches per node,
$16$ groups and $3$ copies each; the reported posterior is by exact
enumeration over the $k^L=256$ surviving paths.

\paragraph{Gated release.}
Not a training run: the base checkpoint is frozen and served behind an
exact $\rho$-keyed lookup that returns the planted continuation on the
trigger and defers to the base model everywhere else. It is included to
show that the endpoint of Theorem~\ref{thm:t3}(ii) is constructible, and
it is the only configuration in which LiRA is blind.

\subsection{Cost and determinism}\label{app:g-cost}

The mechanism-level experiments run on a CPU in minutes and reproduce
bit-exactly from fixed seeds. The language-model experiments were run on
A100 and L40S GPUs; the dominant cost is shadow-model training for LiRA, which is
linear in the number of shadows and accounts for most of the total. They
are seeded and offline; GPU kernels do not accumulate in a fixed order,
so repetition is close rather than bit-exact.


\begin{thebibliography}{99}\small
\itemsep2pt

\bibitem{dmns} C.~Dwork, F.~McSherry, K.~Nissim, A.~Smith. Calibrating
noise to sensitivity in private data analysis. \emph{TCC}, 2006.
\bibitem{dwork-roth} C.~Dwork, A.~Roth. The algorithmic foundations of
differential privacy. \emph{Found.\ Trends TCS}, 2014.
\bibitem{abadi} M.~Abadi et al. Deep learning with differential privacy.
\emph{ACM CCS}, 2016.
\bibitem{wz10} L.~Wasserman, S.~Zhou. A statistical framework for
differential privacy. \emph{JASA}, 2010.
\bibitem{kov} P.~Kairouz, S.~Oh, P.~Viswanath. The composition theorem for
differential privacy. \emph{ICML}, 2015.
\bibitem{drs} J.~Dong, A.~Roth, W.~Su. Gaussian differential privacy.
\emph{JRSS-B}, 2022.
\bibitem{mironov} I.~Mironov. R\'enyi differential privacy. \emph{IEEE
CSF}, 2017.
\bibitem{bch22} B.~Balle, G.~Cherubin, J.~Hayes. Reconstructing training
data with informed adversaries. \emph{IEEE S\&P}, 2022.
\bibitem{guo22} C.~Guo, B.~Karrer, K.~Chaudhuri, L.~van der Maaten.
Bounding training data reconstruction in private (deep) learning.
\emph{ICML}, 2022.
\bibitem{hayes-rero} J.~Hayes, S.~Mahloujifar, B.~Balle. Bounding training
data reconstruction in DP-SGD. \emph{NeurIPS}, 2023. arXiv:2302.07225.
\bibitem{kaissis} G.~Kaissis, J.~Hayes, A.~Ziller, D.~Rueckert. Bounding
data reconstruction attacks with the hypothesis testing interpretation of
differential privacy. arXiv:2307.03928, 2023.
\bibitem{kaissis-mech} G.~Kaissis, S.~Kolek, B.~Balle, J.~Hayes,
D.~Rueckert. Beyond the calibration point: mechanism comparison in
differential privacy. \emph{ICML}, 2024.
\bibitem{kulynych} B.~Kulynych, J.~F.~Gomez, G.~Kaissis, J.~Hayes,
B.~Balle, F.~P.~Calmon, J.~L.~Raisaro. Unifying re-identification,
attribute inference, and data reconstruction risks in differential
privacy. \emph{NeurIPS}, 2025.
\bibitem{carlini-secret} N.~Carlini, C.~Liu, \'U.~Erlingsson, J.~Kos,
D.~Song. The secret sharer: evaluating and testing unintended memorization
in neural networks. \emph{USENIX Security}, 2019.
\bibitem{carlini21} N.~Carlini et al. Extracting training data from large
language models. \emph{USENIX Security}, 2021.
\bibitem{carlini-quant} N.~Carlini, D.~Ippolito, M.~Jagielski, K.~Lee,
F.~Tram\`er, C.~Zhang. Quantifying memorization across neural language
models. \emph{ICLR}, 2023.
\bibitem{nasr23} M.~Nasr et al. Scalable extraction of training data from
(production) language models. arXiv:2311.17035, 2023.
\bibitem{hayes-prob} J.~Hayes, M.~Swanberg, H.~Chaudhari, I.~Yona,
I.~Shumailov, M.~Nasr, C.~A.~Choquette-Choo, K.~Lee, A.~F.~Cooper.
Measuring memorization in language models via probabilistic extraction.
\emph{NAACL}, 2025. arXiv:2410.19482.
\bibitem{shi-mink} W.~Shi, A.~Ajith, M.~Xia, Y.~Huang, D.~Liu, T.~Blevins,
D.~Chen, L.~Zettlemoyer. Detecting pretraining data from large language
models. \emph{ICLR}, 2024.
\bibitem{zhang-minkpp} J.~Zhang, J.~Sun, E.~Yeats, Y.~Ouyang, M.~Kuo,
J.~Zhang, H.~Yang, H.~Li. Min-K\%++: improved baseline for detecting
pre-training data from large language models. \emph{ICLR}, 2025.
\bibitem{npo} R.~Zhang, L.~Lin, Y.~Bai, S.~Mei. Negative preference
optimization: from catastrophic collapse to effective unlearning.
\emph{COLM}, 2024.
\bibitem{zhang-cf} C.~Zhang, D.~Ippolito, K.~Lee, M.~Jagielski,
F.~Tram\`er, N.~Carlini. Counterfactual memorization in neural language
models. \emph{NeurIPS}, 2023.
\bibitem{tirumala} K.~Tirumala, A.~Markosyan, L.~Zettlemoyer,
A.~Aghajanyan. Memorization without overfitting: analyzing the training
dynamics of large language models. \emph{NeurIPS}, 2022.
\bibitem{biderman} S.~Biderman et al. Emergent and predictable
memorization in large language models. \emph{NeurIPS}, 2023.
\bibitem{lee-dedup} K.~Lee et al. Deduplicating training data makes
language models better. \emph{ACL}, 2022.
\bibitem{kandpal} N.~Kandpal, E.~Wallace, C.~Raffel. Deduplicating
training data mitigates privacy risks in language models. \emph{ICML},
2022.
\bibitem{ippolito} D.~Ippolito et al. Preventing generation of verbatim
memorization in language models gives a false sense of privacy.
\emph{INLG}, 2023.
\bibitem{landscape} A.~Xiong, X.~Zhao, A.~Pappu, D.~Song. SoK: the
landscape of memorization in LLMs: mechanisms, measurement, and
mitigation. arXiv:2507.05578, 2025.
\bibitem{satvaty} A.~Satvaty, S.~Verberne, F.~Turkmen. Undesirable
memorization in large language models: a survey. arXiv:2410.02650, 2024.
\bibitem{shokri} R.~Shokri, M.~Stronati, C.~Song, V.~Shmatikov. Membership
inference attacks against machine learning models. \emph{IEEE S\&P},
2017.
\bibitem{yeom} S.~Yeom, I.~Giacomelli, M.~Fredrikson, S.~Jha. Privacy risk
in machine learning: analyzing the connection to overfitting. \emph{IEEE
CSF}, 2018.
\bibitem{lira} N.~Carlini, S.~Chien, M.~Nasr, S.~Song, A.~Terzis,
F.~Tram\`er. Membership inference attacks from first principles.
\emph{IEEE S\&P}, 2022.
\bibitem{jagielski} M.~Jagielski, J.~Ullman, A.~Oprea. Auditing
differentially private machine learning: how private is private SGD?
\emph{NeurIPS}, 2020.
\bibitem{nasr21} M.~Nasr, S.~Song, A.~Thakurta, N.~Papernot, N.~Carlini.
Adversary instantiation: lower bounds for differentially private machine
learning. \emph{IEEE S\&P}, 2021.
\bibitem{steinke-audit} T.~Steinke, M.~Nasr, M.~Jagielski. Privacy
auditing with one (1) training run. \emph{NeurIPS}, 2023.
\bibitem{haghifam} M.~Haghifam, A.~Smith, J.~Ullman. The sample
complexity of membership inference and privacy auditing.
arXiv:2508.19458; \emph{TPDP}, 2026.
\bibitem{lex26} R.~Liu, D.~Evans, L.~Xiong. Beyond indistinguishability:
measuring extraction risk in LLM APIs. \emph{IEEE S\&P}, 2026.
arXiv:2604.18697.
\bibitem{cao-yang} Y.~Cao, J.~Yang. Towards making systems forget with
machine unlearning. \emph{IEEE S\&P}, 2015.
\bibitem{ginart} A.~Ginart, M.~Guan, G.~Valiant, J.~Zou. Making AI forget
you: data deletion in machine learning. \emph{NeurIPS}, 2019.
\bibitem{guo-removal} C.~Guo, T.~Goldstein, A.~Hannun, L.~van der Maaten.
Certified data removal from machine learning models. \emph{ICML}, 2020.
\bibitem{bourtoule} L.~Bourtoule et al. Machine unlearning. \emph{IEEE
S\&P}, 2021.
\bibitem{neel} S.~Neel, A.~Roth, S.~Sharifi-Malvajerdi. Descent-to-delete:
gradient-based methods for machine unlearning. \emph{ALT}, 2021.
\bibitem{gupta} V.~Gupta, C.~Jung, S.~Neel, A.~Roth,
S.~Sharifi-Malvajerdi, C.~Waites. Adaptive machine unlearning.
\emph{NeurIPS}, 2021.
\bibitem{sekhari} A.~Sekhari, J.~Acharya, G.~Kamath, A.~T.~Suresh.
Remember what you want to forget: algorithms for machine unlearning.
\emph{NeurIPS}, 2021.
\bibitem{huang-canonne} Y.~Huang, C.~L.~Canonne. Tight bounds for machine
unlearning via differential privacy. \emph{J.\ Privacy and
Confidentiality}, 15(2), 2025. arXiv:2309.00886.
\bibitem{hu-relearn} S.~Hu, Y.~Fu, Z.~S.~Wu, V.~Smith. Unlearning or
obfuscating? Jogging the memory of unlearned LLMs via benign relearning.
\emph{ICLR}, 2025. arXiv:2406.13356.
\bibitem{lucki} J.~\L{}ucki, B.~Wei, Y.~Huang, P.~Henderson, F.~Tram\`er,
J.~Rando. An adversarial perspective on machine unlearning for AI safety.
arXiv:2409.18025, 2024.
\bibitem{feldman} V.~Feldman. Does learning require memorization? A short
tale about a long tail. \emph{STOC}, 2020.
\bibitem{feldman-zhang} V.~Feldman, C.~Zhang. What neural networks
memorize and why: discovering the long tail via influence estimation.
\emph{NeurIPS}, 2020.
\bibitem{brown} G.~Brown, M.~Bun, V.~Feldman, A.~Smith, K.~Talwar. When is
memorization of irrelevant training data necessary for high-accuracy
learning? \emph{STOC}, 2021.
\bibitem{naf} N.~Vyas, S.~Kakade, B.~Barak. On provable copyright
protection for generative models. \emph{ICML}, 2023.
\bibitem{elkin-koren} N.~Elkin-Koren, U.~Hacohen, R.~Livni, S.~Moran. Can
copyright be reduced to privacy? arXiv:2305.14822, 2023.
\bibitem{cooper-grimmelmann} A.~F.~Cooper, J.~Grimmelmann. The files are
in the computer: on copyright, memorization, and generative AI.
arXiv:2404.12590, 2024.
\bibitem{kifer-lin} D.~Kifer, B.-R.~Lin. Towards an axiomatization of
statistical privacy and utility. \emph{PODS}, 2010.
\bibitem{badnets} T.~Gu, B.~Dolan-Gavitt, S.~Garg. BadNets: identifying
vulnerabilities in the machine learning model supply chain.
arXiv:1708.06733, 2017.
\bibitem{wan} A.~Wan, E.~Wallace, S.~Shen, D.~Klein. Poisoning language
models during instruction tuning. \emph{ICML}, 2023.
\end{thebibliography}
\end{document}